\newcommand{\high}[1]{{\color{black}{#1}}}
\date{\today}
\begin{document}

\title{
Throughput-optimal Scheduling in Multi-hop Wireless Networks
without Per-flow Information}

\author{Bo~Ji, Changhee~Joo, and Ness~B.~Shroff%
\thanks{B. Ji is with Department of ECE at the 
Ohio State University. 
C. Joo is with School of ECE at UNIST, Korea.
N. B. Shroff is with Departments of ECE 
and CSE at the Ohio State University.}%
\thanks{Emails: ji@ece.osu.edu, cjoo@unist.ac.kr, shroff@ece.osu.edu.}
\thanks{A preliminary version of this work was presented at IEEE WiOpt 2011.}
}%

\newcounter{theorem}
\newtheorem{theorem}{Theorem}
\newtheorem{proposition}[theorem]{\it Proposition}
\newtheorem{lemma}[theorem]{\it Lemma}
\newtheorem{corollary}[theorem]{\it Corollary}
\newcounter{definition}
\newtheorem{definition}{\it Definition}

\newcommand{\Matching}{\mathcal{M}}
\newcommand{\mH}{\mathcal{H}}
\newcommand{\mS}{\mathcal{S}}
\newcommand{\mP}{\mathcal{P}}
\newcommand{\mF}{\mathcal{F}}
\newcommand{\mB}{\mathcal{B}}
\newcommand{\mC}{\mathcal{C}}
\newcommand{\Graph}{\mathcal{G}}
\newcommand{\Vertex}{\mathcal{V}}
\newcommand{\Edge}{\mathcal{E}}
\newcommand{\Flow}{\mathcal{S}}
\newcommand{\Route}{\mathcal{L}}
\newcommand{\Path}{P}
\newcommand{\Pathset}{\mathcal{P}}
\newcommand{\Tree}{\mathcal{T}}
\newcommand{\Cycle}{\mathcal{C}}
\newcommand{\Component}{\mathcal{Z}}
\newcommand{\Rank}{\Gamma}
\newcommand{\Class}{\Xi}
\newcommand{\Queue}{\mathcal{Q}}
\newcommand{\Int}{\mathbb{Z}}
\newcommand{\Expect}{\operatorname{E}}
\newcommand{\Var}{\operatorname{Var}}
\newcommand{\System}{\mathcal{Y}}
\newcommand{\Markov}{\mathcal{X}}
\newcommand{\vM}{\vec{M}}
\newcommand{\bM}{\mathbf{M}}
\newcommand{\s}{{(s)}}
\newcommand{\sk}{{s,k}}
\newcommand{\ls}{{l,s}}
\newcommand{\ik}{{i,k}}
\newcommand{\lk}{{l,k}}
\newcommand{\rp}{{r^{\prime}}}
\newcommand{\Hslk}{{H^s_{l,k}}}
\newcommand{\hsk}{{\hat{s},\hat{k}}}
\newcommand{\rj}{{r,j}}
\newcommand{\xn}{{x_n}}
\newcommand{\xnj}{{x_{n_j}}}
\newcommand{\Cong}{{W}}
\newcommand{\cs}{{w}}
\newcommand{\UGraph}{{U}}
\newcommand{\UVertex}{{X}}
\newcommand{\UEdge}{{Y}}
\newcommand{\vlambda}{\vec{\lambda}}
\newcommand{\vpi}{\vec{\pi}}
\newcommand{\vphi}{\vec{\phi}}
\newcommand{\vpsi}{\vec{\psi}}
\newcommand{\blambda}{\mathbf{\lambda}}
\newcommand{\tM}{\tilde{M}}
\newcommand{\tpi}{\tilde{\pi}}
\newcommand{\tphi}{\tilde{\phi}}
\newcommand{\tpsi}{\tilde{\psi}}
\newcommand{\vmu}{\vec{\mu}}
\newcommand{\vnu}{\vec{\nu}}
\newcommand{\valpha}{\vec{\alpha}}
\newcommand{\vbeta}{\vec{\beta}}
\newcommand{\ve}{\vec{e}}
\newcommand{\vxi}{\vec{\xi}}
\newcommand{\hgamma}{\gamma}
\newcommand{\LINK}{\mathcal{E}}
\newcommand{\GMSLambda}{\Lambda_{\text{\it GMS}}}
\newcommand{\argmax}{\operatornamewithlimits{argmax}}
\newcommand{\card}{\aleph}
\newcommand{\V}{\mathcal{V}}
\newcommand{\X}{\mathcal{X}}
\newcommand{\Y}{\mathcal{Y}}

\maketitle

\begin{abstract}
\high{In this paper, we consider the problem of link scheduling in 
multi-hop wireless networks under general interference constraints. 
Our goal is to design scheduling schemes that do not use per-flow 
or per-destination information, maintain a single data queue for 
each link, and exploit only local information, while guaranteeing
throughput optimality.} 
Although the celebrated back-pressure algorithm maximizes throughput, 
it requires per-flow or per-destination information. It is usually
difficult to obtain and maintain this type of information, especially 
in large networks, where there are numerous flows. Also, the back-pressure
algorithm maintains a complex data structure at each node, keeps exchanging
queue length information among neighboring nodes, and commonly results 
in poor delay performance. 
\high{In this paper, we propose scheduling schemes that can circumvent 
these drawbacks and guarantee throughput optimality. These schemes use 
either the readily available hop-count information or only the local 
information for each link.} 
We rigorously analyze 
the performance of the proposed schemes using fluid limit techniques via 
an inductive argument and show that they are throughput-optimal. We also 
conduct simulations to validate our theoretical results in various 
settings, and show that the proposed schemes can substantially improve 
the delay performance in most scenarios. 
\end{abstract}

\section{Introduction} \label{sec:int}
Link scheduling is a critical resource allocation functionality 
in multi-hop wireless networks, and also perhaps the most 
challenging. The seminal work of \cite{tassiulas92} introduces 
a joint adaptive routing and scheduling algorithm, called 
back-pressure, that has been shown to be throughput-optimal, 
i.e., it can stabilize the network under any feasible load. 
This paper focuses on the settings with fixed routes, where the
back-pressure algorithm becomes a scheduling algorithm 
consisting of two components: flow scheduling and link scheduling. 
The back-pressure algorithm calculates the weight of a link as 
the product of the link capacity and the maximum ``back-pressure" 
(i.e., the queue length difference between the queues at the
transmitting nodes of this link and the next hop link for each flow) 
among all the flows passing through the link, and solves a MaxWeight 
problem to activate a set of non-interfering links that have the 
largest weight sum. The flow with the maximum queue length difference 
at a link is chosen to transmit packets when the link is activated.

\high{
The back-pressure algorithm, although throughput-optimal, 
needs to solve a MaxWeight problem, which requires centralized 
operations and is NP-hard in general \cite{sharma06}. To this end,
simple scheduling algorithms based on Carrier Sensing Multiple 
Access (CSMA) \cite{jiang10, ni09, rajagopalan09} are developed
to achieve the optimal throughput in a distributed manner for
single-hop traffic, and are later extended to the case of multi-hop 
traffic \cite{jiang10} leveraging the basic idea of back-pressure.
}



However, the back-pressure-type of scheduling algorithms (including
CSMA for multi-hop traffic) have the following shortcomings: 1) 
require per-flow or per-destination information, which is usually 
difficult to obtain and maintain, especially in large networks 
where there are numerous flows, 2) need to maintain 
separate queues for each flow or destination at each node, 3) rely 
on extensive exchange of queue length information among neighboring 
nodes to calculate link weights, which becomes the major obstacle to 
their distributed implementation, and 4) may result in poor overall 
delay performance, as the queue length needs to build up (creating the 
back-pressure) from a flow destination to its source, which leads to 
large queues along the route a flow takes \cite{bui11,stolyar11}.
An important question is whether one can circumvent the above drawbacks 
of the back-pressure-type of algorithms and design throughput-optimal
scheduling algorithms that do not require per-flow or per-destination 
information, maintain a small number of data queues (ideally, a single 
data queue for each link), exploit only local information when making 
scheduling decisions, and potentially have good delay performance. 

There have been some recent studies (e.g., \cite{bui11, liu10c, ying08, 
ying09}) in this direction. A cluster-based back-pressure algorithm 
that can reduce the number of queues is proposed in \cite{ying08}, where 
nodes (or routers) are grouped into clusters and each node needs only to 
maintain separate queues for destinations within its cluster. In 
\cite{bui11}, the authors propose a back-pressure policy making scheduling 
decisions in a shadow layer (where counters are used as per-flow shadow 
queues). Their scheme only needs to maintain a single \emph{First-In 
First-Out (FIFO)} queue instead of per-flow queues for each link and 
shows dramatic improvement in the delay performance. However, their 
shadow algorithm still requires per-flow information and constant 
exchange of shadow queue length information among neighboring nodes. 
The work in \cite{liu10c} proposes to exploit the local queue length 
information to design throughput-optimal scheduling algorithms. Their 
approach combined with CSMA algorithms can achieve fully distributed 
scheduling without any information exchange. Their scheme is based on 
a two-stage queue structure, where each node maintains two types of 
data queues: per-flow queues and per-link queues. The two-stage queue 
structure imposes additional complexity, and is similar to queues with 
regulators \cite{wu07}, which have been empirically noted to have very 
large delays. In \cite{ying09}, the authors propose a back-pressure 
algorithm that integrates the shortest path routing to minimize the average 
number of hops between each source and destination pair. However, their 
scheme further increases the number of queues by maintaining a separate 
queue $\{i,d,k\}$ at each node $i$ for the packets that will be delivered 
to destination node $d$ within $k$ hops. 

Although these algorithms partly alleviate the effect of the 
aforementioned disadvantages of the traditional back-pressure 
algorithms, to the best of our knowledge, no work has addressed
all the aforementioned four issues. In particular, a critical 
drawback of the earlier mentioned works is that they require 
\emph{per-flow or per-destination information} to guarantee 
throughput optimality. In this paper, we propose a class of 
throughput-optimal schemes that can remove this per-flow or 
per-destination information requirement, maintain a single 
data queue for each link, and remove information exchange. 
As a by-product, these proposed schemes also improve the delay 
performance in a variety of scenarios. 

The main contributions of our paper are as follows.

First, we propose a scheduling scheme with \emph{per-hop} queues
to address the four key issues mentioned earlier. The proposed 
scheme maintains multiple FIFO queues $Q_\lk$ at the transmitting 
node of each link $l$. Specifically, any packet whose transmission 
over link $l$ is the $k$-th hop forwarding from its source node
is stored at queue $Q_\lk$.
This hop-count information is much easier to obtain and maintain 
compared to per-flow or per-destination information. For example, 
hop-count information can be obtained using \emph{Time-To-Live} or 
\emph{TTL} information in packet headers. Moreover, as mentioned 
earlier, while the number of flows in a large network is very large, 
the number of hops is often much smaller. 
In the Internet, the longest 
route a flow takes typically has tens of hops\footnote{In the Routing 
Information Protocol (RIP) \cite{rfc2453}, the longest route is limited 
to 15 hops. In general, an upper bound on the length of a route is 255 
hops in the Internet, as specified by TTL in the Internet Protocol (IP) 
\cite{rfc791}.}, while there are billions of users or nodes 
\cite{internet11} and thus the number of flows could be extremely large. 
A shadow algorithm similar to \cite{bui11} is adopted in our framework, 
where a shadow queue is associated with each data queue. We consider 
the MaxWeight algorithm based on shadow queue lengths, and show that 
this per-Hop-Queue-based MaxWeight Scheduler (HQ-MWS) is throughput-optimal 
using fluid limit techniques via a hop-by-hop inductive argument. For 
illustration, in this paper, we focus on the centralized MaxWeight-type 
of policies. However, one can readily extend our approach to a large 
class of scheduling policies (where fluid limit techniques can be used). 
For example, combining our approach with the CSMA-based algorithms of 
\cite{jiang10,ni09,rajagopalan09}, one can completely remove the 
requirement of queue length information exchange, and develop fully 
distributed scheduling schemes, under which no information exchange 
is required. \emph{To the best of our knowledge, this is the first 
work that develops throughput-optimal scheduling schemes without 
per-flow or per-destination information in wireless networks with 
multi-hop traffic.} In addition, we believe that using this type of 
per-hop queue structure to study the problem of link scheduling is of 
independent interest.

Second, we have also developed schemes with per-link queues (i.e., a 
single data queue per link) instead of per-hop queues, extending the 
idea to per-Link-Queue-based MaxWeight Scheduler (LQ-MWS). We propose 
two schemes based on LQ-MWS using different queueing disciplines. We 
first combine it with the \emph{priority} queueing discipline (called 
PLQ-MWS), where a higher priority is given to the packet that traverses 
a smaller number of hops, and then prove throughput optimality of PLQ-MWS. 
It is of independent interest that this type of hop-count-based priority 
discipline enforces stability. This, however, requires that nodes sort 
packets according to their hop-count information. We then remove this 
restriction by combining LQ-MWS with the FIFO queueing discipline (called 
FLQ-MWS), and prove throughput optimality of FLQ-MWS in networks where 
flows do not form loops. 
We further propose fully distributed heuristic algorithms by combining 
our approach with the CSMA algorithms, and show that the fully distributed 
CSMA-based algorithms are throughput-optimal under the time-scale separation 
assumption.

Finally, we show through simulations that the proposed schemes can 
significantly improve the delay performance in most scenarios.
In addition, the schemes with per-link queues (PLQ-MWS and FLQ-MWS)
perform well in a wider variety of scenarios, which implies that 
maintaining per-link queues not only simplifies the data structure, 
but also can contribute to scheduling efficiency and delay performance.

The remainder of the paper is organized as follows. In Section~\ref{sec:mod}, 
we present a detailed description of our system model. In Section~\ref{sec:hq}, 
we prove throughput optimality of HQ-MWS using fluid limit techniques via a 
hop-by-hop inductive argument. We extend our ideas to show throughput-optimality
of PLQ-MWS and FLQ-MWS in Section~\ref{sec:lq}. Further, we show that our approach
combined with the CSMA-based algorithms leads to fully distributed scheduling 
schemes in Section~\ref{sec:csma}. We evaluate different scheduling schemes 
through simulations in Section~\ref{sec:sim}. Finally, we conclude our paper 
in Section~\ref{sec:con}.

\section{System Model} \label{sec:mod}
We consider a multi-hop wireless network described by a directed graph 
$\Graph=(\Vertex,\Edge)$, where $\Vertex$ denotes the set of nodes and 
$\Edge$ denotes the set of links. Nodes are wireless transmitters/receivers 
and links are wireless channels between two nodes if they can directly 
communicate with each other. Let $b(l)$ and $e(l)$ denote the transmitting 
node and receiving node of link $l=(b(l),e(l)) \in \Edge$, respectively. 
Note that we distinguish links $(i,j)$ and $(j,i)$. We assume a time-slotted 
system with a single frequency channel. Let $c_l$ denote the link capacity 
of link $l$, i.e., link $l$ can transmit at most $c_l$ packets during a 
time slot if none of the links that interfere with $l$ is transmitting
at the same time. We assume unit capacity links, i.e., $c_l=1$ for all 
$l \in \Edge$. A flow is a stream of packets from a source node to a 
destination node. Packets are injected at the source, and traverse multiple 
links to the destination via multi-hop communications. Let $\Flow$ denote 
the set of flows in the network. We assume that each flow $s$ has a single, 
fixed, and loop-free route that is denoted by $\Route(s)=(l^s_1, \cdots, 
l^s_{|\Route(s)|})$, where the route of flow $s$ has $|\Route(s)|$ 
hop-length from the source to the destination, $l^s_k$ denotes the $k$-th 
hop link on the route of flow $s$, and $|\cdot|$ denotes the cardinality 
of a set. Let $L^{\max} \triangleq \max_{s} |\Route(s)| < \infty$ denote 
the length of the longest route over all flows. Let $\Hslk \in \{0,1\}$ be 
1, if link $l$ is the $k$-th hop link on the route of flow $s$, and 0, otherwise. 
Note that the assumption of single route and unit capacity is only for ease of 
exposition, and one can readily extend the results to more general scenarios 
with \emph{multiple fixed routes and heterogeneous link capacities}, applying
the techniques used in this paper. We also restrict our attention to those 
links that have flows passing through them. Hence, without loss of generality, 
we assume that $\sum_s \sum^{|\Route(s)|}_{k=1} \Hslk \ge 1$, for all $l \in \Edge$. 

The interference set of link $l$ is defined as $I(l) \triangleq \{ j \in \Edge 
~|~ \text{link}~j ~\text{interferes with link}~ l \}$. We consider a general 
interference model, where the interference is symmetric, i.e., for any $l, j 
\in \Edge$, if $l \in I(j)$, then $j \in I(l)$. A \emph{schedule} is a set of
(active or inactive) links, and can be represented by a vector $M \in 
\{0,1\}^{|\Edge|}$, where component $M_l$ is set to 1 if link $l$ is active, 
and 0 if it is inactive. A schedule $M$ is said to be \emph{feasible} if no two 
links of $M$ interfere with each other, i.e., $l \notin I(j)$ for all $l$, $j$ 
with $M_l = 1$ and $M_j = 1$. Let $\Matching$ denote the set of all feasible 
schedules over $\Edge$, and let $Co(\Matching)$ denote its convex hull.

Let $F_s(t)$ denote the cumulative number of packet arrivals at the source 
node of flow $s$ up to time slot $t$. We assume that packets are of unit 
length. \high{We assume that each arrival process $F_s(t)-F_s(t-1)$ is an 
irreducible positive recurrent Markov chain with countable state space, and 
satisfies the Strong Law of Large Numbers (SLLN): That is, with probability one,
\begin{equation}
\label{eq:slln}
\textstyle \lim_{t \rightarrow \infty} \frac {F_s(t)} {t} = \lambda_s,
\end{equation}
for each flow $s \in \Flow$, where $\lambda_s$ is the mean arrival rate of
flow $s$. We let $\lambda \triangleq [\lambda_s]$ denote the arrival rate 
vector. Also, we assume that the arrival processes are mutually independent 
across flows. (This assumption can be relaxed as in \cite{andrews04}.)} 

As in \cite{andrews04}, a stochastic queueing network is said to be 
\emph{stable}, if it can be described as a discrete-time countable
Markov chain and the Markov chain is \emph{stable} in the following 
sense: The set of positive recurrent states is non-empty, and it 
contains a finite subset such that with probability one, this subset 
is reached within finite time from any initial state. When all the 
states communicate, stability is equivalent to the Markov chain being 
positive recurrent \cite{bramson08}. We define the \emph{throughput 
region} of a scheduling policy as the set of arrival rate vectors for 
which the network is stable under this policy. Further, we define the 
\emph{optimal throughput region} (or \emph{stability region}) as the 
union of the throughput regions of all possible scheduling policies, 
including the offline policies \cite{tassiulas92}. We denote by $\Lambda^{*}$ 
the optimal throughput region, whereby $\Lambda^{*}$ can be represented as
\begin{equation}
\label{eq:otr}
\begin{split}
\textstyle  \Lambda^* \triangleq \{ \lambda ~|~ & \text{for some}~ \phi \in Co(\Matching), \\
& \textstyle \sum_s \sum_k \Hslk \lambda_s \le \phi_l ~\text{for all links}~ l \in \Edge \}.
\end{split}
\end{equation}
An arrival rate vector is strictly inside $\Lambda^{*}$, if the inequalities 
above are all strict.

Throughout the paper, we let $(z)^+ \triangleq \max(z,0)$ denote the larger
value between $z$ and 0.

\section{Scheduling with Per-hop Queues} \label{sec:hq}
In this section, we propose scheduling policies with per-hop queues and shadow 
algorithm. We will later extend our ideas to developing schemes with per-link 
queues in Section~\ref{sec:lq}. We describe our scheduling schemes using the 
centralized MaxWeight algorithm for ease of presentation. 
Our approach combined 
with the CSMA algorithms can be extended to develop fully distributed scheduling
algorithms in Section~\ref{sec:csma}.

\subsection{Queue Structure and Scheduling Algorithm} \label{subsec:desc}
We start with the description of queue structure, and then specify our scheduling
scheme based on per-hop queues and a shadow algorithm. We assume that, at the 
transmitting node of each link $l$, a single FIFO data queue $Q_\lk$ is maintained 
for packets whose $k$-th hop is link $l$, where $1 \le k \le L^{max}$. Such queues 
are called \emph{per-hop} queues. For notational convenience, we also use $Q_\lk(t)$ 
to denote the queue length of $Q_\lk$ at time slot $t$. Let $\Pi_\lk(t)$ denote the 
service of $Q_\lk$ at time slot $t$, which takes a value of $c_l$ (i.e., 1 in our setting), 
if queue $Q_\lk$ is active, or $0$, otherwise. Let $D_\lk(t)$ denote the cumulative 
number of packet departures from queue $Q_\lk$ up to time slot $t$, and let $\Psi_\lk(t) 
\triangleq D_\lk(t) - D_\lk(t-1)$ be the number of packet departures from queue $Q_\lk$ 
at time slot $t$. Since a queue may be empty when it is scheduled, we have $\Psi_\lk(t) 
\le \Pi_\lk(t)$ for all time slots $t \ge 0$. Let $U_\sk(t)$ denote the cumulative 
number of packets transmitted from the $(k-1)$-st hop to the $k$-th hop for flow $s$ 
up to time slot $t$ for $1 \le k \le \Route(s)$, where we set $U_{s,1}(t) = F_s(t)$. 
And let $A_\lk(t)$ be the cumulative number of aggregate packet arrivals (including 
both exogenous arrivals and arrivals from the previous hops) at queue $Q_\lk$ up to 
time slot $t$. Then, we have $A_\lk(t) = \sum_{s} \Hslk U_\sk(t)$, and in particular, 
$A_{l,1}(t) = \sum_{s} H^s_{l,1} F_s(t)$. Let $P_\lk(t) \triangleq A_\lk(t) - A_\lk(t-1)$ 
denote the number of arrivals for queue $Q_\lk$ at time slot $t$. We adopt the convention 
that $A_\lk(0) = 0$ and $D_\lk(0)=0$ for all $l \in \Edge$ and $1 \le k \le L^{\max}$. 
The queue length evolves as
\begin{equation} 
\label{eq:q_evolution}
\textstyle  Q_\lk(t) = Q_\lk(0) + A_\lk(t) - D_\lk(t).
\end{equation}

For each data queue $Q_\lk$, we maintain a shadow queue $\hat{Q}_\lk$, and let $\hat{Q}_\lk(t)$ 
denote its queue length at time slot $t$. The arrival and departure processes of the shadow 
queues are controlled as follows. We denote by $\hat{A}_\lk(t)$ and $\hat{D}_\lk(t)$ its 
cumulative amount of arrivals and departures up to time slot $t$, respectively. Also, let 
$\hat{\Pi}_\lk(t)$, $\hat{P}_\lk(t) \triangleq \hat{A}_\lk(t) - \hat{A}_\lk(t-1)$ and 
$\hat{\Psi}_\lk(t) \triangleq \hat{D}_\lk(t) - \hat{D}_\lk(t-1)$ denote the amount of service, 
arrivals and departures of queue $\hat{Q}_\lk$ at time slot $t$, respectively. Likewise, we 
have $\hat{\Psi}_\lk(t) \le \hat{\Pi}_\lk(t)$ for $t \ge 0$. We set by convention that, 
$\hat{A}_\lk(0)=0$ and $\hat{D}_\lk(0) = 0$ for all queues $\hat{Q}_\lk$. The arrivals 
for shadow queue $\hat{Q}_\lk$ are set to $(1+\epsilon)$ times the average amount of 
packet arrivals at data queue $Q_\lk$ up to time slot $t$, i.e.,
\begin{equation}
\label{eq:shadowarr}
\textstyle \hat{P}_\lk(t) = (1+\epsilon) \frac {A_\lk(t)} {t}, 
\end{equation}
where $\epsilon > 0$ is a sufficiently small positive number such that 
$(1+\epsilon) \lambda$ is also strictly inside $\Lambda^*$ given that $\lambda$ is strictly 
inside $\Lambda^*$. Then, the shadow queue length evolves as
\begin{equation} 
\label{eq:shadowB_evolution}
\textstyle  \hat{Q}_\lk(t) = \hat{Q}_\lk(0) + \hat{A}_\lk(t) - \hat{D}_\lk(t).
\end{equation}

Using these shadow queues, we determine the service of both data queues and shadow 
queues using the following MaxWeight algorithm. 

\noindent {\bf Per-Hop-Queue-based MaxWeight Scheduler (HQ-MWS):} At each 
time slot $t$, the scheduler serves data queues $Q_{l,k^*(l)}$ for $l \in M^*$,
where 
\begin{eqnarray}
&& \textstyle k^*(l) \in \argmax_{k} \hat{Q}_\lk(t), ~\text{for each link}~ l \in \Edge, \label{eq:ks} \\
&& \textstyle M^* \in \argmax_{M \in \Matching} \sum_{l\in\Edge} \hat{Q}_{l,k^*(l)}(t) \cdot M_l. \label{eq:mw}
\end{eqnarray}
In other words, we set the service of data queue as $\Pi_\lk(t)=1$ if $l\in M^*$ 
and $k=k^*(l)$, and $\Pi_\lk(t)=0$ otherwise. We also set the service of shadow 
queues as $\hat{\Pi}_\lk(t) = \Pi_\lk(t)$ for all $l$ and $k$. 

\emph{Remark:} The algorithm needs to solve a MaxWeight problem based on the shadow 
queue lengths, and ties can be broken arbitrarily if there is more than one queue 
having the largest shadow queue length at a link or there is more than one schedule 
having the largest weight sum. Note that we have $\Pi_\lk(t) = \hat{\Pi}_\lk(t)$ under 
this scheduling scheme, for all links $l \in \Edge$ and $1 \le k \le L^{max}$ and for 
all time slots $t \ge 0$. Once a schedule $M^*$ is selected, data queues $Q_{l,k^*(l)}$ 
for links $l$ with $M^*_l = 1$ are activated to transmit packets if they are non-empty, 
and shadow queues $\hat{Q}_{l,k^*(l)}$ ``transmit" shadow packets as well. Note that 
shadow queues are just counters. The arrival and departure process of a shadow queue 
is simply an operation of addition and subtraction, respectively.

\subsection{Throughput Optimality}
We present the main result of this section as follows.

\begin{proposition}
\label{pro:hq}
HQ-MWS is throughput-optimal, i.e., the network is stable under HQ-MWS 
for any arrival rate vector $\lambda$ strictly inside $\Lambda^{*}$.
\end{proposition}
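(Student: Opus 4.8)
The plan is to use the standard fluid-limit approach for proving throughput optimality of MaxWeight-type policies, but with the crucial twist that stability must be established hop-by-hop via induction on the hop index $k$. The overall strategy is: (1) show that the system dynamics under HQ-MWS form a Markov chain; (2) construct fluid limits of the scaled queue-length processes; (3) prove that the fluid limit of the shadow queues drains to zero in finite time (establishing stability of the shadow system); and (4) transfer this stability from the shadow queues back to the real data queues. The reason shadow queues are introduced is precisely to decouple the scheduling decisions from the actual (possibly empty) data queues, so that the MaxWeight weights behave like a clean controlled system driven by the inflated arrival rates $(1+\epsilon)\lambda$.

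**First I would** set up the fluid limit. Define the scaled processes $\hat{q}_\lk(t) = \lim_{r\to\infty} \frac{1}{r}\hat{Q}_\lk(rt)$ (along a convergent subsequence, justified by Lipschitz continuity of the unscaled cumulative processes, which follows from unit capacities and the SLLN arrival assumption in \eqref{eq:slln}). The shadow arrival rate into $\hat{Q}_\lk$ in the fluid limit is $(1+\epsilon)\sum_s \Hslk \lambda_s$ restricted to the $k$-th hop, where the key observation is that $\hat{P}_\lk(t)=(1+\epsilon)\frac{A_\lk(t)}{t}$ depends on the \emph{actual} arrivals $A_\lk(t)=\sum_s\Hslk U_\sk(t)$ to the data queue. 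The standard MaxWeight Lyapunov argument using $V(\hat{q}) = \frac12\sum_{l,k}\hat{q}_\lk^2$ should give $\DIFF V(\hat{q}(t)) < 0$ whenever $\sum_{l,k}\hat{q}_\lk > 0$, because $(1+\epsilon)\lambda$ lies strictly inside $\Lambda^*$, so there exists $\phi\in Co(\Matching)$ dominating the arrival rates with slack, and MaxWeight maximizes the service-rate inner product against the shadow-queue vector. This yields finite-time drainage of the shadow fluid and hence positive recurrence of the shadow queues.

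**The main obstacle** — and the genuinely novel part — is Step (4): transferring stability from shadow to real queues, which requires the hop-by-hop induction. The difficulty is that the fluid-limit arrival rate into $Q_\lk$ equals the throughput of the previous hop $Q_{l',k-1}$ (on flow $s$'s route), and this throughput is only controlled once we know the $(k-1)$-st-hop queues are stable and that the service the scheduler actually grants to hop $k$ matches the shadow service. Concretely, I would argue inductively: for the base case $k=1$, exogenous arrivals $A_{l,1}(t)=\sum_s H^s_{l,1}F_s(t)$ have well-defined rates, and one shows the real queue $Q_{l,1}$ is rate-stable because the granted service (tied to $\hat\Pi$, which serves the inflated $(1+\epsilon)$ shadow arrivals) strictly exceeds the true arrival rate, so the real queue empties and its output rate equals its input rate. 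For the inductive step, assuming hops $1,\dots,k-1$ are rate-stable so that the departure process $U_\sk(t)=D_{l',k-1}(t)$ has rate $\lambda_s$, the same $(1+\epsilon)$ slack in the shadow service guarantees that $Q_\lk$ receives enough service to be rate-stable and pass the rate-$\lambda_s$ flow cleanly to hop $k+1$. The subtle point to verify carefully is that a data queue being scheduled but empty (so $\Psi_\lk < \Pi_\lk$) does not break the argument: one must show the shadow queue, driven by the strictly larger arrival rate, builds up enough to keep selecting the correct hops, and that in the fluid limit the real-queue fluid level is identically zero for stable hops. Closing this induction — carefully tracking how the shadow service \eqref{eq:mw}–\eqref{eq:ks} induces the correct per-hop real service — is where the bulk of the technical work lies.
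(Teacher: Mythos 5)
Your proposal follows essentially the same route as the paper's proof: establish fluid limits of the joint data/shadow system, prove stability of the shadow queues via a quadratic Lyapunov function exploiting the MaxWeight property and the fact that the smoothed shadow arrivals lie (after finite time) strictly inside $\Lambda^*$, and then transfer stability to the data queues through a hop-by-hop induction that uses the coupling $\Pi_{l,k}=\hat{\Pi}_{l,k}$ together with the $(1+\epsilon)$ slack — exactly the paper's decomposition into its shadow-stability lemma, the inductive arrival-rate lemma, and the induction over hops. The one imprecision is your assertion that the shadow arrival rate at hop $k$ \emph{equals} $(1+\epsilon)\sum_s H^s_{l,k}\lambda_s$; for $k\ge 2$ this is a priori only an upper bound (by causality), and the matching lower bound is precisely what the induction must establish once hops $1,\dots,k-1$ are stable — but since the Lyapunov step needs only the upper bound and your inductive step supplies the lower bound, your argument's logic coincides with the paper's.
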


We prove the stability of the network in the sense that the underlying 
Markov chain (whose state accounts for both data queues and shadow 
queues; see Appendix~\ref{app:pro:hq} for the detailed state description) 
is stable under HQ-MWS, using fluid limit techniques \cite{andrews04,dai95}. 
We provide the proof of Proposition~\ref{pro:hq} in Appendix~\ref{app:pro:hq}, 
and discuss the outline of the proof as follows. 

Note that the shadow queues serve only single-hop traffic, i.e., after
packets in the shadow queues are served, they leave the system without 
being transmitted to another shadow queue. We also emphasize that the 
single-hop shadow traffic gets smoothed under the arrival process of 
(\ref{eq:shadowarr}), and in the fluid limits (which will be formally 
established in Appendix~\ref{app:pro:hq}), after a finite time, the 
instantaneous shadow arrival rate is strictly inside the optimal 
throughput region $\Lambda^*$ with small enough $\epsilon>0$. Then, 
using the standard Lyapunov approach, we can show the stability for 
the sub-system consisting of shadow queues. 

Now, we consider the data queues in the fluid limits starting from the first 
hop data queue for each link $l\in\Edge$. Since the arrival process of data 
queue $Q_{l,1}$ satisfies the SLLN, the instantaneous arrival of shadow queue 
$\hat{Q}_{l,1}$ will be equal to $(1+\epsilon) \sum_s H^s_{l,1} \lambda_s$. 
This implies that the service rate of shadow queue $\hat{Q}_{l,1}$ is no smaller 
than $(1+\epsilon) \sum_s H^s_{l,1} \lambda_s$ due to the stability of shadow 
queues. Then, the service rate of data queue $Q_{l,1}$ is also no smaller than 
$(1+\epsilon) \sum_s H^s_{l,1} \lambda_s$ because $\Pi_\lk(t)=\hat{\Pi}_\lk(t)$ 
under HQ-MWS. Since the arrival rate of data queue $Q_{l,1}$ is $\sum_s H^s_{l,1} 
\lambda_s$, the service rate is strictly greater than the arrival rate for 
$Q_{l,1}$, establishing its stability. Using this as an induction base, we can 
show the stability of data queues via a hop-by-hop inductive argument. This 
immediately implies that the fluid limit model of the joint system is stable 
under HQ-MWS.

Although our proposed scheme shares similarities with \cite{bui11,liu10c}, it has important
differences. First, in \cite{bui11}, per-flow information is still required by their 
shadow algorithm. The shadow packets are injected into the network at the sources, 
and are then ``transmitted" to the destinations via multi-hop communications. Their 
scheme strongly relies on the information exchange of shadow queue lengths to calculate 
the link weights. In contrast, we take a different approach for constructing the instantaneous 
arrivals at each shadow queue according to (\ref{eq:shadowarr}) that is based on the 
average amount of packet arrivals at the corresponding data queue. This method of 
injecting shadow packets allows us to decompose multi-hop traffic into single-hop 
traffic for shadow queues and exploit only local information when making scheduling 
decisions. Second, although the basic idea behind the shadow arrival process of 
(\ref{eq:shadowarr}) is similar to the service process of the per-flow queues in 
\cite{liu10c}, the scheme in \cite{liu10c} requires per-flow information and relies 
on a two-stage queue architecture that consists of both per-flow and per-link data 
queues. In contrast, our scheme needs only per-hop (and not per-flow) information, 
i.e., the number of hops each packet has traversed, completely removing per-flow 
information and per-flow queues. This simplification of required information and data 
structure is critical, due to the fact that the maximum number of hops in a network is 
usually much smaller than the number of flows in a large network. For example, in the 
Internet, the longest route a flow takes typically has tens of hops, while there are 
billions of nodes and thus the number of flows could be extremely large. 

Note that the hop-count in our approach is counted from the source. Such per-hop 
information is easy to obtain (e.g., from \emph{Time-to-Live} or \emph{TTL} information 
in the Internet and ad hoc networks). At each link, packets with the same hop-count 
(from the source of each packet to the link) are kept at the same queue, regardless of 
sources, destinations, and flows, which significantly reduces the number of queues. In 
Section~\ref{sec:lq}, we extend our approach to the schemes with per-link queues, and 
further remove even the requirement of per-hop information.

\section{Scheduling with Per-link Queues} \label{sec:lq}
In the previous section, we show that per-hop-queue-based MaxWeight 
scheduler (HQ-MWS) achieves optimal throughput performance. In this 
section, we extend our ideas to developing schemes with \emph{per-link} 
queues. To elaborate, we show that per-link-queue-based MaxWeight
scheduler, when associated with \emph{priority} or \emph{FIFO} 
queueing discipline, can also achieve throughput optimality.

\subsection{MaxWeight Algorithm with Per-link Queues} \label{subsec:lq-mws}
We consider a network where each link $l$ has a single data queue $Q_l$. Let 
$Q_l(t)$, $A_l(t)$, $D_l(t)$, $\Pi_l(t)$, $\Psi_l(t)$ and $P_l(t)$ denote the 
queue length, cumulative arrival, cumulative departure, service, departure and 
arrival at the data queue $Q_l$, respectively. Also, we maintain a shadow queue 
$\hat{Q}_l$ associated with each $Q_l$, and let $\hat{Q}_l(t)$, $\hat{A}_l(t)$, 
$\hat{D}_l(t)$, $\hat{\Pi}_l(t)$, $\hat{\Psi}_l(t)$ and $\hat{P}_l(t)$ denote 
the queue length, cumulative arrival, accumulative departure, service, departure 
and arrival at the shadow queue $\hat{Q}_l$, respectively. Similar to 
(\ref{eq:shadowarr}) for per-hop shadow queues, we control the arrivals to the 
shadow queue $\hat{Q}_l$ as
\begin{equation}
\label{eq:lq-shadowarr}
\textstyle \hat{P}_l(t) = (1+\epsilon) \frac {A_l(t)} {t}, 
\end{equation}
where $\epsilon > 0$ is a sufficiently small positive number.

Next, we specify the MaxWeight algorithm with per-link queues as follows.

\noindent {\bf Per-Link-Queue-based MaxWeight Scheduler (LQ-MWS):} At each 
time slot $t$, the scheduler serves links in $M^*$ (i.e., $\Pi_l(t)=1$ for 
$l\in M^*$, and $\Pi_l(t)=0$ otherwise), where 
\[
\textstyle M^* \in \argmax_{M \in \Matching} \sum_{l \in \Edge} \hat{Q}_l(t) \cdot M_l.
\]
Also, we set the service of shadow queues as $\hat{\Pi}_l(t) = \Pi_l(t)$ for 
all $l$.

Similar as in HQ-MWS, the shadow traffic under LQ-MWS gets smoothed due to 
the shadow arrival assignment of (\ref{eq:lq-shadowarr}), and the instantaneous 
arrival rate of shadow queues can be shown to be strictly inside the optimal 
throughput region $\Lambda^*$. Hence, we show in Lemma~\ref{lem:sub-stable-lq}
(see Appendix~\ref{app:shadow-stab-lq}) that the fluid limit model for the 
sub-system consisting of shadow queues is stable under LQ-MWS, using the standard 
Lyapunov approach and following the same line of analysis for HQ-MWS.

\subsection{LQ-MWS with Priority Discipline} \label{subsec:plq}
We develop a scheduling scheme by combining LQ-MWS with priority 
queueing discipline, called \textbf{PLQ-MWS}. Regarding priority 
of packets at each per-link queue, we define \emph{hop-class} as 
follows: A packet has hop-class-$k$, if the link where the packet 
is located is the $k$-th hop from the source of the packet. When 
a link is activated to transmit packets, packets with a smaller 
hop-class will be transmitted earlier; and packets with the same 
hop-class will be transmitted in a FIFO fashion. 

\begin{proposition}
\label{pro:plq}
PLQ-MWS is throughput-optimal. 
\end{proposition}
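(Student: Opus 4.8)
The plan is to mirror the fluid-limit argument used for Proposition~\ref{pro:hq}, but now to account for the fact that a single per-link queue $Q_l$ aggregates packets of several hop-classes whose service is resolved by the priority rule. First I would construct the fluid limit model exactly as in the proof of Proposition~\ref{pro:hq} (the formal scaling and the fluid equations are deferred to the appendix), and partition the single data queue into \emph{virtual} sub-queues indexed by hop-class: let $q_{l,k}(t)$ denote the fluid amount of hop-class-$k$ packets residing in $Q_l$, so that the fluid queue length is $\bar{Q}_l=\sum_k q_{l,k}$. The arrival structure is inherited from the per-hop description: hop-class-$k$ fluid arriving at link $l$ equals the hop-class-$(k-1)$ fluid departing from the preceding link on each route (for $k\ge 2$), while hop-class-$1$ arrives exogenously at link $l$ at rate $\sum_s H^s_{l,1}\lambda_s$.

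The second step is to import the stability of the shadow subsystem. By Lemma~\ref{lem:sub-stable-lq} the fluid shadow queues $\hat{Q}_l$ empty in finite time; since their fluid arrival rate is $(1+\epsilon)\bar{a}_l$, where $\bar{a}_l=\lim_t A_l(t)/t$ is the aggregate fluid arrival rate at $Q_l$, and since $\hat{\Pi}_l(t)=\Pi_l(t)$, this forces the total fluid service rate granted to link $l$, call it $\bar{\mu}_l$, to satisfy $\bar{\mu}_l\ge(1+\epsilon)\bar{a}_l$ after a finite time. This is precisely the consequence of shadow stability that drives the HQ-MWS argument; the only new feature is that here $\bar{\mu}_l$ is a single per-link rate that must be apportioned among hop-classes by the priority discipline.

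The core of the proof is a hop-by-hop induction on the hop-class $k$, carried out simultaneously over all links. The induction hypothesis at level $k-1$ states that for every link $l$ and every $j\le k-1$ the sub-queue empties, $q_{l,j}(t)\to 0$, and the fluid departure rate of hop-class-$j$ from $l$ equals its nominal rate $\bar{a}_{l,j}=\sum_s H^s_{l,j}\lambda_s$. For the base case $k=1$, hop-class-$1$ has the highest priority and may consume the full rate $\bar{\mu}_l$; since $\bar{a}_{l,1}\le\bar{a}_l=\sum_j\bar{a}_{l,j}$ and $\bar{\mu}_l\ge(1+\epsilon)\bar{a}_l$, we get $\bar{\mu}_l>\bar{a}_{l,1}$ and hence $q_{l,1}\to 0$. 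For the inductive step, flow conservation at the previous hop (via the level-$(k-1)$ hypothesis) fixes the hop-class-$k$ arrival rate at $l$ to be $\bar{a}_{l,k}$; moreover, since all lower classes are null in the fluid limit they draw service exactly at their throughput $\sum_{j<k}\bar{a}_{l,j}$, so the residual rate available to hop-class-$k$ is at least
\[
\bar{\mu}_l-\sum_{j<k}\bar{a}_{l,j}\;\ge\;(1+\epsilon)\sum_j\bar{a}_{l,j}-\sum_{j<k}\bar{a}_{l,j}\;\ge\;(1+\epsilon)\,\bar{a}_{l,k}\;>\;\bar{a}_{l,k}.
\]
Thus hop-class-$k$ is served strictly faster than it is fed, $q_{l,k}(t)\to 0$, and its departure rate equals $\bar{a}_{l,k}$, closing the induction. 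Summing over $k\le L^{\max}$ gives $\bar{Q}_l(t)\to 0$ for every link, which, together with shadow stability, shows that the fluid model of the joint system is stable; the standard fluid-limit criterion \cite{dai95,andrews04} then yields positive recurrence and hence throughput optimality.

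I expect the main obstacle to be making the priority-service step rigorous in the fluid limit. The delicate point is to justify that, within one physical queue operating under a cross-class priority rule, the service allocated to hop-class-$k$ equals the total service minus the amount absorbed by the (emptied) lower classes, i.e., that the discipline translates into the residual-capacity statement used above. This requires establishing that the virtual sub-queues $q_{l,k}$ admit well-defined, absolutely continuous fluid limits with the claimed dynamics, and that a nonzero higher class cannot be indefinitely starved by lower classes once the latter are null. Handling this, rather than the rate comparison itself, is where the technical care concentrates; the remaining ingredients reduce to the shadow-stability bound and the inductive rate comparison, both of which closely parallel the proof of Proposition~\ref{pro:hq}.
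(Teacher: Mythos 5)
Your proof follows the same skeleton as the paper's: fluid limits, virtual hop-class sub-queues inside each per-link queue, shadow stability imported from Lemma~\ref{lem:sub-stable-lq}, the priority rule translated into a residual-capacity statement (the paper records this as the fluid model equation $\pi_{l,k}(t)=\pi_l(t)-\sum_{j<k}\psi_{l,j}(t)$), and a hop-by-hop induction. However, the induction step as written is circular. The first inequality in your displayed chain requires $\bar{\mu}_l\ge(1+\epsilon)\sum_j\bar{a}_{l,j}$, where the sum runs over \emph{all} hop-classes at their \emph{nominal} rates $\bar{a}_{l,j}=\sum_s H^s_{l,j}\lambda_s$. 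What shadow stability actually gives is only $\bar{\mu}_l\ge(1+\epsilon)\bar{a}_l$ with $\bar{a}_l$ the \emph{actual} aggregate fluid arrival rate at $Q_l$; your identification $\bar{a}_l=\sum_j\bar{a}_{l,j}$ assumes that every hop-class reaches link $l$ at its nominal rate. For a class $j>k$, those arrivals are class-$(j-1)$ departures from upstream links whose sub-queues have not yet been shown stable at level $k$ of the induction; if any of them is growing, the relayed rate runs strictly below nominal and $\bar{a}_l<\sum_j\bar{a}_{l,j}$. So the inequality you invoke presupposes stability of all sub-queues, which is precisely what the induction is supposed to deliver.

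The gap is repairable, and the repair is exactly what the paper does: at induction level $k$, lower-bound the shadow arrival rate (and hence the granted service) using only the classes the hypothesis already controls, namely $\hat{p}_l(t)\ge(1+\epsilon)\sum_s\sum_{j\le k}H^s_{l,j}\lambda_s$, obtained from $A_l(t)\ge\sum_{j\le k}A_{l,j}(t)$ together with a Lemma~\ref{lem:p}-type argument that converts stability of the lower classes into a rate statement, as in the proof of Proposition~\ref{pro:hq}. Your chain then still closes, since
\[
\bar{\mu}_l-\sum_{j<k}\bar{a}_{l,j}\;\ge\;(1+\epsilon)\sum_{j\le k}\bar{a}_{l,j}-\sum_{j<k}\bar{a}_{l,j}\;=\;(1+\epsilon)\bar{a}_{l,k}+\epsilon\sum_{j<k}\bar{a}_{l,j}\;>\;\bar{a}_{l,k},
\]
and the drift of $q_{l,k}$ is at most $-\epsilon\sum_{j\le k}\bar{a}_{l,j}<0$. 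Note also that your base case needs only the valid inequality $\bar{a}_l\ge\bar{a}_{l,1}$ (class-1 arrivals are exogenous), not the equality $\bar{a}_l=\sum_j\bar{a}_{l,j}$ you stated. With this modification, and the fluid-limit justification of the priority-service equation that you correctly flagged as the technical core, your argument coincides with the paper's proof.
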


We provide the outline of the proof and refer to Appendix~\ref{app:pro:plq} for 
the detailed proof. Basically, we follow the line of analysis 
for HQ-MWS using fluid limit techniques and induction method. Since a link 
transmits packets according to their priorities (i.e., hop-classes or hop-count 
from their respective source nodes), we can view packets with hop-class-$k$ at 
link $l$ as in a sub-queue $Q_{l,k}$ (similar to the per-hop queues under HQ-MWS).
Now, we consider the data queues in the fluid limits. Since the exogenous arrival 
process satisfies the SLLN, the instantaneous arrival to shadow queue $\hat{Q}_l$ 
will be at least $(1+\epsilon) \sum_s H^s_{l,1} \lambda_s$ for each link $l \in \Edge$. 
This implies that the service rate of shadow queue $\hat{Q}_l$ is no smaller than 
$(1+\epsilon) \sum_s H^s_{l,1} \lambda_s$ due to the stability of the shadow queues
(see Lemma~\ref{lem:sub-stable-lq} in Appendix~\ref{app:shadow-stab-lq}). 
Then, the service rate of sub-queue $Q_{l,1}$ is also no smaller than $(1+\epsilon) 
\sum_s H^s_{l,1} \lambda_s$, because: 1) $\Pi_l(t)=\hat{\Pi}_l(t)$ under PLQ-MWS; and 
2) the highest priority is given to sub-queue $Q_{l,1}$ when link $l$ is activated 
to transmit. Since the arrival rate of sub-queue $Q_{l,1}$ is $\sum_s H^s_{l,1} 
\lambda_s$, the service rate is strictly greater than the arrival rate for sub-queue
$Q_{l,1}$, establishing its stability. Similarly, we can show that the hop-class-$j$ 
sub-queues are stable for all $j \le k+1$, given the stability of the hop-class-$j^{\prime}$ 
sub-queues for all $j^{\prime} \le k$. Therefore, we can show the stability of the 
data queues via a hop-by-hop inductive argument. This immediately implies that the
fluid limit model of the joint system is stable under PLQ-MWS.

\high{
We emphasize that a ``bad" priority discipline may cause instability (even in 
wireline networks). See \cite{rybko92,lu91} for two simple counterexamples 
showing that in a two-station network, a static priority discipline that gives 
a higher priority to customers with a larger hop-count, may result in instability.  
(Interested readers are also referred to Chapter~3 of \cite{bramson08} for a good 
summary of the instability results.) The key intuition of these counterexamples 
is that, by giving a higher priority to packets with a larger hop-count in one 
station, the priority discipline may impede forwarding packets with a smaller 
hop-count to the next-hop station, which in turn starves the next-hop station. 
On the other hand, PLQ-MWS successfully eliminates this type of inefficiency by 
giving a higher priority to the packets with a smaller hop-count, and continues 
to push the packets to the following hops. 
}

Note that PLQ-MWS is different from HQ-MWS, although they appear 
to be similar. HQ-MWS makes scheduling decisions based on the queue 
length of each per-hop shadow queue. This may result in a waste of 
service if a per-hop queue is activated but does not have enough 
packets to transmit, even though the other per-hop queues of the 
same link have packets. In contrast, PLQ-MWS makes decisions based 
on the queue length of each per-link shadow queue and allows a link 
to transmit packets of multiple hop-classes, avoiding such an 
inefficiency. The performance difference due to this phenomenon will 
be illustrated through simulations in Section~\ref{sec:sim}. 
Furthermore, the implementation of PLQ-MWS is easier than HQ-MWS, 
since PLQ-MWS needs to maintain only one single shadow queue per link.

Another aspect of PLQ-MWS we would like to discuss
is about the \emph{hop-count-based priority discipline} in the 
context of multi-class queueing networks (or wireline networks). 
In operations research, stability of multi-class queueing networks 
has been extensively studied in the literature (e.g., see 
\cite{bramson08} and the references therein). To the best of our 
knowledge, however, there is very limited work on the topic of 
``priority enforces stability" \cite{chen00,chen02,chen96}. In 
\cite{chen00,chen02}, the authors obtained sufficient conditions
(based on linear or piecewise linear Lyapunov functions) for the 
stability of a multiclass fluid network and/or queueing network 
under priority disciplines. However, to verify these sufficient 
conditions relies on verifying the feasibility of a set of 
inequalities, which in general can be very difficult. The most 
related work to ours is \cite{chen96}. There, the authors showed
that under the condition of ``Acyclic Class Transfer", where 
customers can switch classes unless there is a loop in class
transfers, a simple priority discipline stabilizes the network 
under the usual traffic condition (i.e., the normalized load is 
less than one). 
Their priority discipline gives a higher priority to customers 
that are closer to their respective sources.

Interestingly, our hop-count-based priority discipline (for wireline networks) 
is similar to the discipline proposed in \cite{chen96}. However, there is a
major difference in that while \cite{chen96} studies stability of wireline 
networks (without link interferences) under the usual traffic condition, we 
consider stability of wireless networks with interference constraints that 
impose the (link) scheduling problem, which is much more challenging. In wireless 
networks, the service rate of each link depends on the underlying scheduling 
scheme, rather than being fixed as in wireline networks. Hence, the difficulty
is to establish the usual traffic condition by designing appropriate wireless
scheduling schemes. In this paper, we develop PLQ-MQS scheme and show that the 
usual traffic condition and then stability can be established via a hop-by-hop 
inductive argument under the PLQ-MWS scheme.

\subsection{LQ-MWS with FIFO Discipline} \label{subsec:flq}
In this section, we develop a scheduling scheme, called \textbf{FLQ-MWS},
by combining the LQ-MWS algorithm developed in Section~\ref{subsec:lq-mws} 
with \emph{FIFO} queueing discipline (instead of priority queueing discipline), 
and show that this scheme is throughput-optimal if flows do not form loops. 
We emphasize that FLQ-MWS requires neither per-flow information nor 
hop-count information. 

To begin with, we define a positive integer $r(l)$ as the rank of link 
$l \in \Edge$, and call $R(\Edge)=(r(l), l \in \Edge)$ a ranking of $\Edge$. 
Recall that $\Route(s)$ denotes the loop-free route of flow $s$. In the 
following, we prove a key property of the network where flows do not form 
loops, which will be used to prove the main results in this section.

\begin{lemma}
\label{lem:mon}
Consider a network $\Graph=(\Vertex,\Edge)$ with a set of flows $\Flow$, 
where the flows do not form loops. There exists a ranking $R(\Edge)$ such 
that the following two statements hold:
\begin{enumerate}
\item For any flow $s \in \Flow$, the ranks are monotonically increasing 
when one traverses the links of flow $s$ from $l^s_1$ to $l^s_{|\Route(s)|}$, 
i.e., $r(l^s_i) < r(l^s_{i+1})$ for all $1 \le i < |\Route(s)|$. 
\item The packet arrivals at a link are either exogenous, or forwarded 
from links with a smaller rank.
\end{enumerate}
\end{lemma}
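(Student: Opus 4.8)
The plan is to reduce the existence of a valid ranking to a topological-ordering argument on an auxiliary digraph whose vertices are the links. First I would construct a \emph{link-precedence digraph} $H=(\Edge,\mathcal{A})$ on the link set, placing a directed arc $(l,l')\in\mathcal{A}$ whenever $l'$ immediately follows $l$ on the route of some flow, i.e., there exist $s\in\Flow$ and an index $1\le i<|\Route(s)|$ with $l=l^s_i$ and $l'=l^s_{i+1}$. Multiple flows may induce the same arc, which is harmless. Under this construction, statement~1 is exactly the requirement that every arc of $H$ point from a smaller to a larger rank, and statement~2 says precisely that any link feeding packets into $l$ is a predecessor of $l$ in $H$; so both conclusions will follow at once from any rank assignment that is consistent with the arc directions of $H$.

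The key step is to show that $H$ is acyclic. I would argue by contradiction: a directed cycle $l_1\to l_2\to\cdots\to l_m\to l_1$ in $H$ forces $e(l_j)=b(l_{j+1})$ for each consecutive pair (indices modulo $m$), since every arc is realized by a consecutive hop pair of some flow; chaining these node identifications exhibits a closed directed walk through the distinct links $l_1,\dots,l_m$ in $\Graph$, i.e., a loop formed collectively by the flows. This contradicts the hypothesis that the flows do not form loops. I expect this to be the main obstacle, because it requires making the informal phrase ``flows do not form loops'' precise and checking that a cycle in the link-level precedence digraph is genuinely the kind of loop the hypothesis forbids. Note in particular that each individual route is loop-free, so the forbidden object is a cycle assembled from segments of several \emph{different} flows, and the argument must track the head/tail identifications $e(l_j)=b(l_{j+1})$ carefully rather than relying on any single route.

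Once acyclicity is established, the remainder is standard. Since $H$ is a finite directed acyclic graph, it admits a topological order; I would set $r(l)$ to be the position of $l$ in such an order, a positive integer in $\{1,\dots,|\Edge|\}$, so that $(l,l')\in\mathcal{A}$ implies $r(l)<r(l')$. Statement~1 then follows immediately, since for each flow $s$ every consecutive pair $(l^s_i,l^s_{i+1})$ is an arc of $H$, giving $r(l^s_i)<r(l^s_{i+1})$ and hence a strictly increasing rank sequence along the entire route. For statement~2, any packet arriving at link $l$ is either exogenous, when $l=l^s_1$ is the first hop of its flow, or forwarded from the previous hop $l''$ with $l=l^s_{i+1}$ and $l''=l^s_i$; in the latter case $(l'',l)\in\mathcal{A}$, so $r(l'')<r(l)$, i.e., the forwarding link has strictly smaller rank. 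This completes the construction.
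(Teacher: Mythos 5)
Your proof is correct, but it takes a genuinely different route from the paper's. The paper proceeds constructively: it introduces components, flow-paths, and flow-trees, gives an explicit rank-assignment procedure (Algorithm~\ref{alg:ra}) that processes one flow-path at a time and retroactively bumps up ranks when conflicts arise, and then proves monotonicity along every flow-path by a two-level induction (Lemma~\ref{lem:alg}). You instead reduce everything to a standard fact: build the link-precedence digraph $H$ whose arcs are consecutive hop pairs of flows, show $H$ is acyclic, and take any topological order as the ranking. The one step in your argument that genuinely needs care is the acyclicity claim, and you handle it correctly: the hypothesis forbids \emph{flow-loops} in the sense of Definition~\ref{def:fl}, not cycles of links in $\Graph$ (the paper explicitly notes, via Figs.~\ref{fig:ft8} and \ref{fig:ft9}, that a component can contain a cycle of links yet no flow-loop). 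A directed cycle in $H$ is stronger than a closed walk in $\Graph$: each arc is witnessed by a flow traversing the two links consecutively, so the cycle chains flow segments with the required entry/exit link identifications, which is exactly a flow-loop with $i_n < j_n = i_n+1$; conversely your argument never claims the mere existence of a link cycle suffices. Your approach buys brevity and transparency --- the equivalence ``no flow-loops $\Leftrightarrow$ $H$ is a DAG'' plus topological sorting replaces several pages of case analysis --- and it yields integer ranks in $\{1,\dots,|\Edge|\}$ just as well, which is all that the proof of Proposition~\ref{pro:flq} (induction over depth-$k$ sets) requires. What the paper's version buys is an explicit incremental algorithm and the flow-path/flow-tree vocabulary it reuses for exposition, but since the paper itself remarks that the ranking is used only for analysis and never in the scheduling algorithm, nothing downstream depends on the constructive form, and your shorter argument would suffice.
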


We provide the proof of Lemma~\ref{lem:mon} in Appendix~\ref{app:lem:mon}.
Note that such a ranking with the monotone property exists because the flows 
do not form a loop. In contrast, it is clear that if flows form a loop, then 
such a ranking does not exist. Two examples of the networks where flows do 
not form loops are provided in Figs.~\ref{fig:ft8} and \ref{fig:ft9}, and an 
example of the network where flows do form a loop is provided in Fig.~\ref{fig:fl}.
Note that the ranking is only for the purpose of analysis and plays a key role 
in proving the system stability under FLQ-MWS, while it will not be used in the 
actual link scheduling algorithm.

Now, we give the main results of this section in the following proposition.

\begin{proposition}
\label{pro:flq}
FLQ-MWS is throughput-optimal in networks where \emph{flows do not form loops}.
\end{proposition}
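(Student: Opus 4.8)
The plan is to reuse the fluid-limit machinery developed for HQ-MWS (Proposition~\ref{pro:hq}) and PLQ-MWS (Proposition~\ref{pro:plq}), but to drive the induction by the \emph{rank} of links from Lemma~\ref{lem:mon} rather than by hop-count. The hop-by-hop argument cannot be transplanted directly: under FIFO, packets of every hop-class that pass through link $l$ share the single buffer $Q_l$ and are interleaved in order of arrival, so one cannot carve out a hop-class-$k$ sub-queue whose service is individually controlled, as was done for PLQ-MWS. What the ranking $R(\Edge)$ buys us is an acyclic ordering of the data queues: by statement~2 of Lemma~\ref{lem:mon}, every packet arriving at a link is either exogenous or forwarded from a strictly smaller-rank link, so the links can be processed in increasing order of rank with no circular dependence.

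First I would build the fluid limit of the joint Markov chain (the data queues $Q_l$, the shadow queues $\hat{Q}_l$, and the associated cumulative arrival, departure and service processes), exactly as in the proof of Proposition~\ref{pro:hq}. Along a convergent subsequence the scaled processes converge to Lipschitz-continuous fluid limits obeying the fluid analogues of (\ref{eq:lq-shadowarr}) and of the per-link queue-evolution equations. By Lemma~\ref{lem:sub-stable-lq}, the shadow sub-system is stable in the fluid limit, so after a finite time every fluid shadow queue $\bar{\hat{Q}}_l(t)$ is zero, and hence its fluid departure rate equals its fluid arrival rate. Because LQ-MWS enforces $\hat{\Pi}_l(t)=\Pi_l(t)$, the fluid service rate given to the data queue $Q_l$ equals that of $\hat{Q}_l$; combined with (\ref{eq:lq-shadowarr}), this forces the fluid service rate of $Q_l$ to be at least $(1+\epsilon)$ times its own fluid arrival rate. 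This single comparison---service rate strictly above arrival rate---is what I will feed into every step of the induction.

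Next I would carry out the induction on rank. In the base case a minimal-rank link receives only exogenous traffic (there is no smaller-rank link to forward into it), so its fluid arrival rate is exactly $\sum_s \sum_k \Hslk \lambda_s$; together with the service bound above, the drift is strictly negative and $\bar{Q}_l(t)$ reaches and remains at $0$ after a finite time. For the inductive step, assume every data queue of rank at most $r$ has emptied in the fluid limit. The standard fluid conservation for an underloaded FIFO queue then gives, for each such link $l'$, that its fluid departure rate equals its fluid arrival rate and, more precisely, that the departure rate of each flow crossing $l'$ equals that flow's arrival rate $\lambda_s$. Feeding these rates forward through statement~2 of Lemma~\ref{lem:mon}, the total fluid arrival rate into any link $l$ of rank $r+1$ equals $\sum_s \sum_k \Hslk \lambda_s$, so the same service-versus-arrival comparison drives $\bar{Q}_l(t)$ to $0$. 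Since there are finitely many rank values, the induction shows that all fluid data queues are eventually $0$; by the usual fluid-stability criterion this yields stability of the underlying Markov chain, and therefore throughput optimality, for every $\lambda$ strictly inside $\Lambda^*$.

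The hard part will be the FIFO flow-conservation claim used in the inductive step: I must show that, in the fluid limit of an underloaded per-link FIFO queue, the departure rate of each individual flow equals its own arrival rate, even though all flows and hop-classes are mixed in one FIFO buffer. This is exactly where multiclass FIFO networks are known to break down, and it is where the no-loop hypothesis is essential. Lemma~\ref{lem:mon} guarantees that the rank induction never requires the departure rate of a queue that has not already been shown to be underloaded, so the circular dependence that produces FIFO instability cannot arise. Concretely, I would argue that on any interval where $\bar{Q}_l(t)\equiv 0$ the cumulative fluid input and output coincide, and that because FIFO preserves arrival order the asymptotic per-flow composition of the output matches that of the input, so the interleaving cannot distort long-run per-flow rates. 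Making this composition-preservation rigorous from the fluid equations for FIFO is the delicate step; once it is in place, the rank induction closes cleanly and the remaining estimates are routine.
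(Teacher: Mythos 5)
Your proposal follows essentially the same route as the paper's proof: establish stability of the shadow sub-system via Lemma~\ref{lem:sub-stable-lq}, use the ranking of Lemma~\ref{lem:mon} to get a well-founded order on links, and then run the negative-drift induction over rank (the paper's Appendix organizes this via the depth-$k$ sets $R_k$, with the base case being rank-1 links whose arrivals are purely exogenous). The one place you diverge is the step you flag as the ``hard part'': per-flow conservation through an emptied FIFO queue. The paper never needs any order-preservation or composition argument for FIFO. It handles this step by pure causality and counting, exactly as in Lemma~\ref{lem:p}: every packet of flow $s$ that has arrived exogenously by time $t$ and is destined for a rank-$(k{+}1)$ link has either already reached that link or is sitting in some lower-rank queue, so the cumulative arrivals there satisfy a bound of the form $A_l(t) \ge \sum_s \sum_k \Hslk F_s(t) - \sum_{l' } Q_{l'}(t)$, where the sum of backlogs runs over lower-rank links; since those backlogs vanish in the fluid limit by the inductive hypothesis, the fluid arrival rate at the rank-$(k{+}1)$ link equals $\sum_s \sum_k \Hslk \lambda_s$. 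Equivalently, per-flow departures from an emptied queue are sandwiched between per-flow arrivals minus the total queue length and per-flow arrivals, so emptiness in the fluid limit forces per-flow conservation under \emph{any} service discipline, not just FIFO. This observation is worth internalizing: the queueing discipline inside each per-link queue plays no role in the stability argument once the acyclic ranking exists; the no-loop hypothesis matters only because it makes the rank induction well-founded (with loops, no such ranking exists and the arrival rate at a link cannot be pinned down without circular reference). So your outline is correct, but the step you anticipated as delicate is in fact routine, and resolving it the paper's way removes the only unfinished piece of your argument.
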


We omit the detailed proof and refer to Appendix~\ref{app:lem:mon}.
In the following, we provide the outline of the proof. Motivated by Lemma~\ref{lem:mon}, 
we extend our analysis for HQ-MWS (or PLQ-MWS). Compared to the PLQ-MWS algorithm, 
there are differences only in the operations with data queues, and the underlying 
LQ-MWS algorithm remains the same. Thus, the shadow queues will exhibit similar 
behaviors, and the fluid limit model for the sub-system of shadow queues is stable 
under FLQ-MWS (see Lemma~\ref{lem:sub-stable-lq} in Appendix~\ref{app:shadow-stab-lq}). 
Also, note that Lemma~\ref{lem:mon} implies that given the qualified ranking 
(without loss of generality, assuming that the smallest rank is 1), the packet 
arrivals at links with rank 1 are all exogenous, then following a similar argument 
in the proof of Proposition~\ref{pro:hq}, we can prove the stability of the 
corresponding data queues by showing that the instantaneous arrival rate is less
than the instantaneous service rate. Since Lemma~\ref{lem:mon} also implies that 
the packet arrivals at links with rank~2 are either exogenous or from links with 
rank~1, we can similarly show the stability of links with rank~2. Repeating the 
above argument, we can prove the stability of all data queues by induction, which 
completes the proof of Proposition~\ref{pro:flq}. 

\begin{corollary}
\label{cor:flq}
FLQ-MWS is throughput-optimal in tree networks.
\end{corollary}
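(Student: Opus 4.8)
The plan is to obtain Corollary~\ref{cor:flq} as a direct consequence of Proposition~\ref{pro:flq}: since FLQ-MWS is throughput-optimal whenever the flows do not form loops, it suffices to verify that a tree network always satisfies this hypothesis. Concretely, I would show that a tree admits a ranking $R(\Edge)$ with the two monotonicity properties of Lemma~\ref{lem:mon}; the existence of such a ranking is precisely the certificate that the flows form no loop, after which Proposition~\ref{pro:flq} applies verbatim.

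First I would root the tree at an arbitrary node and let $d(v)$ denote the depth (distance to the root) of node $v$, with $D \triangleq \max_v d(v)$. Each directed link is either \emph{upward} (from a node to its parent) or \emph{downward} (from a node to one of its children). I would assign ranks in two disjoint bands: an upward link $(u,b)$, with $b$ the parent of $u$, gets rank $r = D - d(u) + 1$, and a downward link $(a,v)$, with $a$ the parent of $v$, gets rank $r = D + d(v)$. Since every upward rank lies in $\{1,\dots,D\}$ and every downward rank lies in $\{D+1,\dots,2D\}$, all upward links precede all downward links and all ranks are positive.

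The hard part---and the reason a naive ``rank by depth'' fails---is that a flow route in a tree is not monotone in depth: it ascends from the source to the least common ancestor $a$ of source and destination, and then descends to the destination. I would check that the proposed ranking is nonetheless strictly increasing along every route. While ascending, the child endpoint's depth strictly decreases, so $D - d(\cdot) + 1$ strictly increases; while descending, the child endpoint's depth strictly increases, so $D + d(\cdot)$ strictly increases; and at the turning node $a$ the last upward link has rank at most $D$ while the first downward link has rank at least $D+1$, so the rank jumps up across the turn. This establishes property~(1) of Lemma~\ref{lem:mon}, and property~(2) follows immediately, since for any forwarded packet the previous link on its flow has strictly smaller rank.

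Finally I would address the one genuine subtlety created by distinguishing the directed links $(i,j)$ and $(j,i)$: a tree has no undirected cycle, but one might worry that traversing a single edge in both directions creates a cyclic dependency among links. The explicit ranking above already rules this out, but for completeness I would also give the direct contradiction: a cyclic chain of consecutive links would project to a closed walk in the tree using distinct directed links, and examining a node of maximum depth on this walk forces its two walk-neighbors to coincide with its unique parent, so that some flow traverses a link and its reverse consecutively, contradicting the loop-free route assumption. Hence the flows in a tree never form a loop, and Corollary~\ref{cor:flq} follows from Proposition~\ref{pro:flq}.
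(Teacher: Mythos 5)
Your proof is correct, and at the top level it is the same reduction the paper uses: verify that a tree network satisfies the hypothesis of Proposition~\ref{pro:flq} and then invoke that proposition. The difference lies in how the hypothesis is verified, and here your version is genuinely more substantive. The paper's own proof is one sentence---a tree network ``does not contain a cycle of links,'' hence flows are loop-free---which leans on the remark after Lemma~\ref{lem:mon} that a flow-loop forces a cycle of links. Your argument adds two things. First, the two-band depth ranking (upward links ranked $D-d(u)+1$, downward links ranked $D+d(v)$) directly exhibits a ranking with the monotone properties of Lemma~\ref{lem:mon}, bypassing the paper's Algorithm~\ref{alg:ra} entirely; since the proof of Proposition~\ref{pro:flq} only ever uses the ranking supplied by Lemma~\ref{lem:mon}, this construction alone already suffices. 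Second, your maximum-depth backtracking argument closes a subtlety that the paper's one-liner glosses over: because the model distinguishes links $(i,j)$ and $(j,i)$, a tree network \emph{can} contain directed cycles of length two---indeed the paper's own flow-tree examples in Figs.~\ref{fig:ft8} and \ref{fig:ft9} are explicitly said to contain cycles of links---so the inference ``tree $\Rightarrow$ no cycle of links $\Rightarrow$ no flow-loop'' is not airtight as stated. Your observation that a flow-loop in a tree would project to a closed walk that must backtrack at a deepest node, forcing some single flow to traverse an edge and its reverse consecutively and thereby violating loop-freeness of its route, is exactly the missing step (it uses the fact, implicit in Definition~\ref{def:fl}, that every consecutive pair of links in the walk is a consecutive pair on some one flow's route). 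In short, the paper buys brevity where you buy rigor; the two routes are logically interchangeable, but yours would survive scrutiny of the directed-link technicality without amendment.
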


The above corollary follows immediately from Proposition~\ref{pro:flq}, 
because a tree network itself does not contain a cycle of links and flows
are all loop-free.

\section{Extension to CSMA-based Distributed Algorithms} \label{sec:csma}
In this section, we employ CSMA techniques to develop fully distributed 
throughput-optimal scheduling schemes for multi-hop traffic. We consider 
per-link-queue-based schemes combined with the CSMA-based scheduling of 
\cite{ni09}. 

\subsection{Basic Scheduling Algorithm} \label{subsec:basic}
We start with description of basic scheduling algorithm based on CSMA. 
As in \cite{ni09}, we divide each time slot $t$ into a \emph{control} 
slot and a \emph{data} slot, where the control slot is further divided 
into $W$ mini-slots. The purpose of the control slot is to generate a 
collision-free transmission schedule $M(t) \in \Matching$. To this end, 
the distributed CSMA scheduling selects at each time slot a set of links 
that form a feasible schedule. Such a schedule is called a \emph{decision} 
schedule and used to change links' state (between active and inactive). 
Let $\sigma(t)$ denote a decision schedule at time slot $t$.

Let $\Matching_0 \subseteq \Matching$ denote the set of possible
decision schedules under our CSMA-based algorithm. A decision 
schedule is selected through a randomized procedure, e.g., a 
decision schedule $\sigma(t) \in \Matching_0$ is selected with a
positive probability $\alpha(\sigma(t))$ satisfying that 
$\sum_{\sigma(t) \in \Matching_0} \alpha(\sigma(t)) = 1$. Based 
on the decision schedule, the schedule for actual data transmission 
is determined as follows. For each link $l \in \sigma(t)$, if no 
link in its interfering neighbors $I(l)$ was active at time slot
$t-1$, then the state of link $l$ becomes active with probability 
$p_l$ (which will be specified later) and inactive with probability 
$\bar{p}_l=1-p_l$ during time slot $t$. If at least one link in 
$I(l)$ was active in the previous time slot, then link $l$ remains 
inactive\footnote{In the previous data slot, link $l$ must be inactive 
since the schedule must be feasible.} in the current data slot. Any 
link $l^{\prime} \notin \sigma(t)$ will have its state unchanged from 
the previous time slot. Since the current state $M(t)$ depends only on 
the previous state $M(t-1)$ and the randomly selected decision schedule 
$\sigma(t)$, the transmission schedule $M(t)$ evolves as a discrete-time 
Markov chain (DTMC). Our basic scheduling algorithm is very similar to 
that of \cite{ni09}. The key difference is that the link activation 
probability is based on the shadow queue lengths instead of the data 
queue lengths. We refer the readers to \cite{ni09} for the detailed 
operations of the CSMA-based algorithms.

\subsection{Distributed Implementation with Per-link Queues} \label{subsec:dist}
In this section, we describe our distributed CSMA-based scheduling scheme 
with per-link queues, called \textbf{LQ-CSMA}. The LQ-CSMA algorithm can
be combined with priority or FIFO queueing discipline to develop fully 
distributed scheduling schemes.

We use the system settings and notations of per-link-queue structure as in 
Section~\ref{sec:lq}. We also control the shadow arrivals as (\ref{eq:lq-shadowarr}). 
As in \cite{ni09}, we set link activation probability $p_l = \frac{ e^{w_l(t)}} 
{e^{w_l(t)} + 1}$, where $w_l(t)$ is the weight of link $l$. We begin with 
defining a class of functions that will be used for weight calculation. As 
in \cite{eryilmaz05,ni09}, let $\mB$ denote the set of functions $g(\cdot): 
[0,\infty] \rightarrow [0,\infty]$ that satisfy the following conditions:
\begin{enumerate}
\item $g(x)$ is a non-decreasing and continuous function with $\lim_{x 
\rightarrow \infty} g(x) = \infty$.
\item Given any $M_1>0, M_2>0$ and $0 < \epsilon < 1$, there exists a 
$B < \infty$, such that for all $x>B$, we have $(1-\epsilon) g(x) \le g(x-M_1) 
\le g(x+M_2) \le (1+\epsilon) g(x)$.
\end{enumerate}
For example, functions $g(x)=\log(x+1)$, $g(x)=x^\alpha$ 
with $\alpha>0$, and $g(x)=e^{\sqrt{x}}$ belong to $\mB$, while $g(x)=e^x$ does
not. Similar to Chapter 4 of \cite{shah04}, to guarantee the existence of the 
fluid limit, we further define $\mC$ as a subset of $\mB$ such that $g(0)=0$, 
and for any $(x_1,\dots,x_n)$ and $(y_1,\dots,y_n)$ in $[0,\infty]^n$ and 
for any $\eta \in [0,1]$,
\begin{equation}
\label{eq:gcond}
\sum_i g(x_i) \ge \eta \sum_i g(y_i) \Rightarrow \sum_i g(r x_i) \ge \eta \sum_i g(r y_i),
~\text{for all}~ r>0.
\end{equation}
For example, $g(x)=x^\alpha$ with $\alpha>0$ is in $\mC$.

We set the weight of link $l \in \Edge$ at time slot $t$ as $w_l(t) = 
g_l(\hat{Q}_l(t))$, where $g_l \in \mC$. We highlight the differences 
from the original CSMA-based scheduling schemes as follows: i) the link 
weight is calculated by a function in set $\mC$ instead of $\mB$. This 
restriction is necessary to apply the fluid limit techniques; ii) the 
shadow queue length $\hat{Q}_l(t)$ is used for the weight calculation 
instead of the data queue length $Q_l(t)$. The following scheduling 
scheme is an extension of per-link-queue-based scheduling schemes to
CSMA-based algorithm. 

\noindent {\bf Per-Link-Queues-and-CSMA-based Scheduling Algorithm (LQ-CSMA):}
 
Let $p_l = \frac {e^{w_l(t)}} {e^{w_l(t)}+1}$, where $w_l(t) = g_l(\hat{Q}_l(t))$ 
is an appropriate function of the shadow queue length of link $l$ as shown above. 
At the beginning of each time slot, each link $l$ randomly selects a backoff time 
among $\{0,1,2,\cdots,W-1\}$, where $W$ denotes the contention window size. Link 
$l$ will send an INTENT message to announce its decision of attempting channel 
when this backoff time expires, unless an interfering link in $I(l)$ sent an INTENT 
message in an earlier mini-slot. The details are shown in Algorithm~\ref{alg:lq-csma}, 
which is similar to the Q-CSMA algorithm of \cite{ni09}, except that the activation 
probability $p_l$ is now determined based on the shadow queue lengths.

\begin{algorithm}
\caption{LQ-CSMA (at time slot $t$)} \label{alg:lq-csma}
\begin{algorithmic}
\State 1) Link $l$ selects a random (integer) backoff time $B_l$ 
uniformly in $[0,W-1]$ and waits for $B_l$ control mini-slots.
\State 2) IF link $l$ hears an INTENT message from a link in
$I(l)$ before the $(B_l + 1)$-st control mini-slot, $l$ will 
not be included in $\sigma(t)$ and will not transmit an INTENT
message anymore. Link $l$ will set $M_l(t) = M_l(t-1)$.
\State 3) IF link $l$ does not hear an INTENT message from any
link in $I(l)$ before the $(B_l + 1)$-st control mini-slot, 
it will send (broadcast) an INTENT message to all links in
$I(l)$ at the beginning of the $(B_l+1)$-st control mini-slot.

{\addtolength{\leftskip}{2pc} 
    \State - If there is a collision (i.e., if there is another 
	link in $I(l)$ transmitting an INTENT message in the same
	mini-slot), link $l$ will not be included in $\sigma(t)$ and
	will set $M_l(t) = M_l(t - 1)$.
    \State - If there is no collision, link $l$ will be included 
	in $\sigma(t)$ and decide its state as follows:
	
}

{\addtolength{\leftskip}{4pc}
\If{no links in $I(l)$ were active in the previous data slot}
	\State $M_l(t) = 1$ with probability $p_l$, $0 < p_l < 1$;
	\State $M_l(t) = 0$ with probability $\bar{p}_l = 1 - p_l$.
\Else
	\State $M_l(t) = 0$.
\EndIf
\State 4) IF $M_l(t)=1$, link $l$ will transmit a packet in the 
data slot, and will set $\hat{Q}_l(t)=(\hat{Q}_l(t)-1)^+$.}
\end{algorithmic}
\end{algorithm} 

\textbf{Remark:} The weight function $g_l(\hat{Q}_l(t))$ needs
to be appropriately chosen such that the DTMC of the transmission 
schedules converge faster compared to the dynamics of the link 
weights. For example\footnote{In \cite{jiang10, ni09, rajagopalan09}, 
the weight function $g_l$ is a function of the queue length $Q_l(t)$ 
rather than $\hat{Q}_l(t)$.}, $g_l(Q_l(t)) = \alpha Q_l(t)$ with a 
small $\alpha$ is suggested as a heuristic to satisfy the time-scale 
separation assumption in \cite{jiang10} and $g_l(Q_l(t)) = \log \log 
(Q_l(t)+e)$ is used in the proof of throughput optimality in 
\cite{rajagopalan09} to essentially separate the time scales. In addition, 
it has been reported in \cite{ni09} that the weight function $g_l(Q_l(t)) 
= \log (\alpha Q_l(t))$ with a small $\alpha$ gives the best empirical 
delay performance. In this paper, we make the time-scale separation 
assumption as in \cite{jiang10,ni09} and assume that the DTMC is in the 
steady state at every time slot.

Applying Lemma~3 of \cite{ni09}, we can show that the transmission schedule
$M(t)$ produced by LQ-CSMA is feasible and the decision schedule $\sigma$ 
satisfies $\bigcup_{\sigma \in \Matching_0} \sigma = \Edge$ when $W \ge 2$. 
Applying Proposition~1 of \cite{ni09}, we can obtain that the DTMC of the 
transmission schedules is irreducible and aperiodic (and reversible in this 
case), and has the following product-form stationary distribution:
\begin{eqnarray}
&& \textstyle \mu(M) = \frac {1} {\kappa} \prod_{l \in M} \frac {p_l} {\bar{p}_l}, \label{eq:dist} \\
&& \textstyle \kappa = \sum_{M \in \Matching} \prod_{l \in M} \frac {p_l} {\bar{p}_l}.
\end{eqnarray}
Then from Proposition~2 of \cite{ni09}, we can obtain the following lemma.

\begin{lemma}
\label{lem:weight}
If the window size $W \ge 2$, LQ-CSMA has the product-form distribution 
given by (\ref{eq:dist}). Further, given any $\zeta$ and $\gamma$, 
$0 < \zeta, \gamma < 1$, there exists a $Q_B>0$ such that: at any time 
slot $t$, with probability greater than $1-\zeta$, LQ-CSMA chooses a 
schedule $M(t) \in \Matching$ that satisfies
\begin{equation}
\label{eq:weight}
\sum_{l \in \Edge} w_l(t) \cdot M_l(t) \ge (1-\gamma) \max_{M \in \Matching}
\sum_{l \in \Edge} w_l(t) \cdot M_l 
\end{equation}
whenever $\|\hat{Q}(t)\| > Q_B$.
\end{lemma}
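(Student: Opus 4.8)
The plan is to inherit both parts of the lemma from the CSMA machinery of \cite{ni09}, since the only departure from Q-CSMA is that the activation probabilities use the shadow queue $\hat{Q}_l(t)$ in place of the data queue $Q_l(t)$; this substitution leaves every structural argument intact. The product form (\ref{eq:dist}) is already established above: once $W\ge 2$ guarantees $\bigcup_{\sigma\in\Matching_0}\sigma=\Edge$ and the DTMC of transmission schedules is irreducible, aperiodic, and reversible (Proposition~1 of \cite{ni09}), detailed balance gives $\mu(M)\propto\prod_{l\in M}p_l/\bar{p}_l$. The high-probability MaxWeight guarantee is Proposition~2 of \cite{ni09}; I would reproduce the underlying concentration argument so that the role of the weight function $g_l\in\mC$ is transparent, and then invoke the time-scale separation assumption to read the bound at stationarity.

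First I would rewrite the stationary distribution as a Gibbs distribution. Since $p_l=e^{w_l(t)}/(e^{w_l(t)}+1)$ yields $p_l/\bar{p}_l=e^{w_l(t)}$, (\ref{eq:dist}) becomes
\begin{equation}
\mu(M)=\frac{1}{\kappa}\exp\Big(\sum_{l\in M}w_l(t)\Big),\qquad \kappa=\sum_{M\in\Matching}\exp\Big(\sum_{l\in M}w_l(t)\Big).
\end{equation}
Writing $W^\ast\triangleq\max_{M\in\Matching}\sum_{l\in M}w_l(t)$ and letting $\Matching_{\mathrm{bad}}\triangleq\{M\in\Matching:\sum_{l\in M}w_l(t)<(1-\gamma)W^\ast\}$ denote exactly the schedules that violate (\ref{eq:weight}), I would bound a bad draw by comparing numerator and denominator of $\mu$, lower-bounding the partition function by the single maximizing schedule:
\begin{equation}
\mu(\Matching_{\mathrm{bad}})=\frac{\sum_{M\in\Matching_{\mathrm{bad}}}e^{\sum_{l\in M}w_l(t)}}{\sum_{M\in\Matching}e^{\sum_{l\in M}w_l(t)}}\le\frac{|\Matching|\,e^{(1-\gamma)W^\ast}}{e^{W^\ast}}=|\Matching|\,e^{-\gamma W^\ast}.
\end{equation}

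Next I would translate ``$\|\hat{Q}(t)\|$ large'' into ``$W^\ast$ large.'' Because every singleton $\{l\}$ is a feasible schedule, $W^\ast\ge\max_{l\in\Edge}w_l(t)=\max_{l\in\Edge}g_l(\hat{Q}_l(t))\ge\underline{g}\big(\max_l\hat{Q}_l(t)\big)$, where $\underline{g}\triangleq\min_{l\in\Edge}g_l$ is again nondecreasing with $\underline{g}(x)\to\infty$ (a finite minimum of functions in $\mB$). By equivalence of norms on $[0,\infty)^{|\Edge|}$ there is a constant $c>0$ with $\max_l\hat{Q}_l(t)\ge c\,\|\hat{Q}(t)\|$, so $W^\ast\ge\underline{g}(c\,\|\hat{Q}(t)\|)\to\infty$ and hence $|\Matching|\,e^{-\gamma W^\ast}\to 0$ as $\|\hat{Q}(t)\|\to\infty$. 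Given $\zeta,\gamma\in(0,1)$, I would then choose $Q_B$ so large that $\|\hat{Q}(t)\|>Q_B$ forces $|\Matching|\,e^{-\gamma W^\ast}<\zeta$; under the time-scale separation assumption the DTMC is at stationarity each slot, so the selected $M(t)$ avoids $\Matching_{\mathrm{bad}}$ with probability exceeding $1-\zeta$ and therefore satisfies (\ref{eq:weight}).

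The main obstacle is not analytical depth but the uniformity bookkeeping: the weights may be computed by distinct functions $g_l$, and a single threshold $Q_B$ must work for an arbitrary prescribed pair $(\zeta,\gamma)$. Handling this cleanly rests on the uniform divergence of $\underline{g}$ over the finite link set and on the norm-to-coordinate comparison above, both of which are routine given $|\Edge|<\infty$ and the defining properties of $\mC\subseteq\mB$. This is precisely why the statement can be inherited from Proposition~2 of \cite{ni09} after replacing $Q_l$ with $\hat{Q}_l$, with no change to the argument.
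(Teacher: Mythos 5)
Your proposal is correct and follows essentially the same route as the paper, which simply inherits the result from \cite{ni09} (Lemma~3, Propositions~1 and~2) after substituting $\hat{Q}_l(t)$ for $Q_l(t)$ and invoking the time-scale separation assumption. The Gibbs-measure concentration bound $\mu(\Matching_{\mathrm{bad}})\le|\Matching|e^{-\gamma W^\ast}$ together with the lower bound $W^\ast\ge \underline{g}\bigl(\|\hat{Q}(t)\|/|\Edge|\bigr)$ that you spell out is precisely the content of Proposition~2 of \cite{ni09}, which the paper cites without reproducing, so your write-up is a correct and self-contained version of the omitted argument.
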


We omit the proof and refer interested readers to \cite{ni09} (Lemma 3,
Propositions~1 and 2) for details. 

Note that we have (\ref{eq:lple}) since Lemma~\ref{lem:lple} also holds 
under LQ-CSMA. Applying Lemma~\ref{lem:weight} and following the same 
line of analysis for the proof of Lemma~\ref{lem:sub-stable}, we can 
easily show that the sub-system of shadow queues $\hat{q}$ is stable 
under LQ-CSMA in the fluid limit model. 

\begin{lemma}
\label{lem:lq-csma}
Given any $\zeta$ and $\gamma$, $0 < \theta, \gamma < 1$, with probability 
greater than $1-\theta$, the sub-system of shadow queues $\hat{q}$ operating 
under LQ-CSMA satisfies that: For any $\zeta > 0$, there exists a finite 
$T_4 > 0$ such that, for any fluid model solution with $\|\hat{q}(0)\| \le 1$, 
we have
\begin{equation}
\|\hat{q}(t)\| \le \zeta, ~\text{for all}~ t \ge T_4,
\end{equation}
for any arrival rate vector strictly inside $(1-\gamma)\Lambda^{*}$. 
\end{lemma}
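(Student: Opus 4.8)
The plan is to reproduce, in the fluid limit, the Lyapunov drift argument already carried out for the shadow sub-system under the exact MaxWeight scheduler (Lemma~\ref{lem:sub-stable} and its per-link analogue Lemma~\ref{lem:sub-stable-lq}), replacing the exact MaxWeight optimality by the \emph{approximate} guarantee of Lemma~\ref{lem:weight}. The key structural fact I would exploit is that the shadow queues $\hat q$ see only single-hop traffic injected according to (\ref{eq:lq-shadowarr}); hence the shadow sub-system is a self-contained single-hop network driven by a smoothed arrival process, and its stability can be analyzed independently of the data queues. First I would invoke the existence of the fluid limit of $\hat q$: this is exactly where the restriction $g_l\in\mC$ (rather than merely $g_l\in\mB$) is needed, since the scaling property (\ref{eq:gcond}) ensures that the rescaled weights converge along the usual subsequences and the limiting service process is well defined, as in Chapter~4 of \cite{shah04}. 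Under the time-scale separation assumption, at each slot the transmission schedule is drawn from the product-form stationary distribution, so in the fluid limit the instantaneous service at a regular time is the expected schedule under that distribution evaluated at the current weights.

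Next I would bound the arrival drift. Because $U_\sk(t)\le F_s(t)$ (a link cannot forward more than its flows have generated) and each exogenous process obeys the SLLN, the cumulative data arrival satisfies $A_l(t)/t\to\nu_l$ with $\nu_l\le\sum_s\sum_k\Hslk\lambda_s$, so by (\ref{eq:lq-shadowarr}) and a Ces\`aro average the instantaneous fluid arrival rate to $\hat q_l$ is bounded above by $(1+\epsilon)\sum_s\sum_k\Hslk\lambda_s$ after a finite time. Since $\lambda$ is strictly inside $(1-\gamma)\Lambda^*$, the defining inequalities (\ref{eq:otr}) give a $\phi\in Co(\Matching)$ and a slack $\delta>0$ with $(1+\epsilon)\sum_s\sum_k\Hslk\lambda_s\le(1-\gamma)\phi_l-\delta$ for all $l\in\Edge$, provided $\epsilon$ is chosen small enough. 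This is the only place the factor $1-\gamma$ enters on the arrival side, and it is what shrinks the achievable region to $(1-\gamma)\Lambda^*$.

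The core step is the negative-drift computation with the Lyapunov function $V(\hat q)=\sum_{l\in\Edge}\int_0^{\hat q_l}g_l(y)\,dy$, whose gradient is precisely the weight vector $(g_l(\hat q_l))_l$ that LQ-CSMA uses. Along the fluid trajectory, $\dot V=\sum_l g_l(\hat q_l)\,(\hat\lambda_l-\hat\mu_l)$, where $\hat\mu_l$ is the fluid service rate. On the event of probability at least $1-\theta$ and whenever $\|\hat Q(t)\|>Q_B$, Lemma~\ref{lem:weight} yields $\sum_l g_l(\hat q_l)\hat\mu_l\ge(1-\gamma)\max_{M\in\Matching}\sum_l g_l(\hat q_l)M_l\ge(1-\gamma)\sum_l g_l(\hat q_l)\phi_l$. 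Combining this with the arrival bound of the previous paragraph gives $\dot V\le\sum_l g_l(\hat q_l)\big[(1-\gamma)\phi_l-\delta\big]-(1-\gamma)\sum_l g_l(\hat q_l)\phi_l=-\delta\sum_l g_l(\hat q_l)<0$ whenever $\hat q\neq 0$. A standard argument then shows $V(\hat q(t))$ reaches $0$ in finite time and the trajectory stays within an arbitrarily small ball thereafter, which produces the claimed $T_4$.

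I expect the main obstacle to be the faithful passage from the \emph{per-slot}, \emph{high-probability} guarantee of Lemma~\ref{lem:weight}, which is valid only on $\{\|\hat Q(t)\|>Q_B\}$ and only with per-slot probability $1-\zeta$, to a drift inequality that holds uniformly along the \emph{entire} fluid trajectory. Handling this requires two things: using the growth and scaling conditions defining $\mB$ and $\mC$ so that the sub-threshold set $\{\|\hat Q\|\le Q_B\}$ becomes negligible under the fluid rescaling (it collapses to the origin, where $\hat q$ has already been drained), and aggregating the per-slot exceptional events into a single event of probability at most $\theta$ outside of which the weighted-service bound is guaranteed; this is exactly the content of the ``with probability greater than $1-\theta$'' qualifier in the statement. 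Everything else is a routine repetition of the Lyapunov drift computation in the proof of Lemma~\ref{lem:sub-stable}.
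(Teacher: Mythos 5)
Your proposal follows the paper's proof essentially step for step: the same Lyapunov function $\hat V(\hat q)=\sum_{l}\int_0^{\hat q_l}g_l(y)\,dy$, the same arrival-rate bound after a finite time obtained from causality and the SLLN (the paper's Lemma~\ref{lem:lple}), the same substitution of the approximate guarantee of Lemma~\ref{lem:weight} together with condition (\ref{eq:gcond}) in place of exact MaxWeight optimality, and the same treatment of the threshold $Q_B$ and the probability-$(1-\theta)$ qualifier (the sub-threshold region is swamped by the fluid rescaling whenever $\hat q(t)\neq 0$). The only cosmetic deviations are that you assert convergence of $A_l(t)/t$ where only an upper bound is available (and is all that is needed), and that you say $V$ ``reaches $0$ in finite time'' when the drift bound $-\delta\sum_l g_l(\hat q_l)$ only yields entry into, and permanence in, an arbitrarily small ball --- which is exactly what the lemma requires.
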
 

The proof is provided in Appendix~\ref{app:lem:lq-csma}.

The LQ-CSMA algorithm combined with priority queueing discipline 
and FIFO queueing discipline is called \textbf{PLQ-CSMA} and 
\textbf{FLQ-CSMA}, respectively. We present the main results 
of this section as follows.

\begin{proposition}
\label{pro:plq-csma} 
PLQ-CSMA is throughput-optimal.
\end{proposition}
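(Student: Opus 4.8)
The plan is to reduce the argument to the fluid-limit induction already developed for PLQ-MWS (Proposition~\ref{pro:plq}), exploiting the fact that the only difference between the two schemes lies in how the transmission schedule is generated: PLQ-MWS solves an exact MaxWeight problem on the shadow queues, whereas PLQ-CSMA produces a schedule that is near-optimal with high probability via LQ-CSMA. Since the data-queue operations (per-link queues served under the hop-count priority discipline) are identical in both schemes, and since we again enforce $\Pi_l(t)=\hat{\Pi}_l(t)$ for every link $l\in\Edge$, the whole proof hinges on (i) establishing stability of the shadow sub-system under LQ-CSMA and (ii) transferring this stability to the data queues hop-by-hop.

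First I would fix an arrival rate vector $\lambda$ strictly inside $\Lambda^*$ and choose $\epsilon>0$ and $\gamma\in(0,1)$ small enough that the shadow arrival rate $(1+\epsilon)\lambda$ lies strictly inside $(1-\gamma)\Lambda^*$; this is possible precisely because $\lambda$ is in the interior of $\Lambda^*$. Under the time-scale separation assumption, Lemma~\ref{lem:lq-csma} then guarantees that the fluid-limit sub-system of shadow queues $\hat{q}$ is stable, i.e.\ it is driven to the origin after a finite time with probability arbitrarily close to one. This plays exactly the role that Lemma~\ref{lem:sub-stable-lq} plays in the PLQ-MWS argument, so the remainder of the proof can proceed along the same lines.

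Next I would carry over the hop-by-hop induction on the data sub-queues. As in the proof of Proposition~\ref{pro:plq}, I view the packets of hop-class $k$ at link $l$ as residing in a virtual sub-queue $Q_{l,k}$, ordered by the priority discipline. Stability of $\hat{q}$ forces the fluid service rate of each shadow queue $\hat{Q}_l$ to be at least its arrival rate $(1+\epsilon)\sum_s H^s_{l,1}\lambda_s$; because $\Pi_l(t)=\hat{\Pi}_l(t)$ and hop-class-$1$ packets receive the highest priority, this service rate is inherited by $Q_{l,1}$, which strictly exceeds its true arrival rate $\sum_s H^s_{l,1}\lambda_s$, yielding stability of $Q_{l,1}$ as the induction base. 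Assuming stability of all sub-queues of hop-class at most $k$, the traffic forwarded into the hop-class-$(k+1)$ sub-queues has a well-defined fluid arrival rate, and the same priority-plus-shadow-service argument gives their stability. The induction terminates at $L^{\max}$, establishing stability of the full joint fluid model and hence throughput optimality.

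The main obstacle, relative to the exact-MaxWeight case, is that LQ-CSMA attains the MaxWeight value only up to a multiplicative factor $(1-\gamma)$ and only with probability exceeding $1-\zeta$ (Lemma~\ref{lem:weight}), so the shadow service guarantee is probabilistic rather than deterministic. This is exactly what forces the appearance of $(1-\gamma)\Lambda^*$ in Lemma~\ref{lem:lq-csma} and what must be absorbed by choosing $\epsilon$ and $\gamma$ jointly small; the delicate point is to verify that the fluid limit exists (which motivated restricting the weight functions to the class $\mC$) and that the ``with high probability'' weight bound integrates into a genuine fluid service-rate inequality for the shadow queues. Once Lemma~\ref{lem:lq-csma} is in hand, however, the priority-based induction on the data queues is essentially identical to that of PLQ-MWS, so the bulk of the new work is confined to the shadow-queue stability already packaged by that lemma.
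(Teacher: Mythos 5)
Your proposal is correct and takes essentially the same approach as the paper: the paper derives Proposition~\ref{pro:plq-csma} by invoking Lemma~\ref{lem:lq-csma} for stability of the shadow sub-system under LQ-CSMA (with $\epsilon$ and $\gamma$ chosen jointly small so that $(1+\epsilon)\lambda$ lies strictly inside $(1-\gamma)\Lambda^*$), and then repeating verbatim the hop-by-hop, priority-based induction on the data sub-queues from the proof of Proposition~\ref{pro:plq}. The paper omits the detailed proof for exactly the reasons you identify, namely that once shadow-queue stability is packaged in Lemma~\ref{lem:lq-csma}, nothing in the data-queue argument changes.
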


\begin{proposition}
\label{pro:flq-csma}
FLQ-CSMA is throughput-optimal in networks where flows do not form loops.
\end{proposition}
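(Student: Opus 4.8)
The plan is to prove Proposition~\ref{pro:flq-csma} by combining the two ingredients that already underpin the related results: the shadow-queue stability under LQ-CSMA established in Lemma~\ref{lem:lq-csma}, and the loop-free ranking guaranteed by Lemma~\ref{lem:mon}. This mirrors the two extensions already carried out in the paper --- the way Proposition~\ref{pro:flq} (FLQ-MWS) extends the priority analysis of Proposition~\ref{pro:plq} by replacing the per-hop decomposition with a rank-by-rank induction, and the way Proposition~\ref{pro:plq-csma} (PLQ-CSMA) extends Proposition~\ref{pro:plq} by replacing exact MaxWeight with the near-optimal CSMA schedule. Since FLQ-CSMA uses exactly the same underlying LQ-CSMA algorithm (driven by the per-link shadow queues) as PLQ-CSMA and differs only in the data-queue discipline (FIFO instead of priority), the shadow sub-system behaves identically in both schemes. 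First I would set up the fluid limit model of the joint system of data queues $Q_l$ and shadow queues $\hat{Q}_l$ under FLQ-CSMA; existence of this fluid limit relies on the weight functions $g_l$ lying in the class $\mC$, whose scaling property~(\ref{eq:gcond}) is precisely what guarantees fluid-limit convergence. By Lemma~\ref{lem:lq-csma}, with high probability the shadow sub-system is stable in the fluid limit for any shadow arrival rate strictly inside $(1-\gamma)\Lambda^*$, so that the instantaneous fluid service rate allotted to each link $l$ (since $\hat{\Pi}_l(t)=\Pi_l(t)$) is at least its shadow arrival rate, which by~(\ref{eq:lq-shadowarr}) equals $(1+\epsilon)$ times the data-queue arrival rate at link $l$.

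Next I would run the rank-by-rank inductive argument enabled by Lemma~\ref{lem:mon}. Since the flows do not form loops, fix a ranking $R(\Edge)$ with the monotone property, and assume without loss of generality that the smallest rank is $1$. For the base case, Statement~2 of Lemma~\ref{lem:mon} gives that the arrivals at every rank-$1$ link are purely exogenous and hence satisfy the SLLN~(\ref{eq:slln}); combined with the shadow-driven service guarantee above, each rank-$1$ data queue is served at rate at least $(1+\epsilon)$ times its arrival rate, so in the fluid limit it empties and its departure process satisfies the SLLN at the same rate as its arrivals. For the inductive step, suppose all data queues at links of rank strictly less than $k$ are stable in the fluid limit. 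By Statement~2 of Lemma~\ref{lem:mon}, every arrival at a rank-$k$ link is either exogenous or forwarded from a strictly lower-rank link; by the induction hypothesis the departure processes of those lower-rank links satisfy the SLLN with rates equal to their arrival rates, so the aggregate arrival process at each rank-$k$ link again satisfies the SLLN, and the shadow-driven service rate exceeds this arrival rate by the factor $(1+\epsilon)$. This establishes stability of the rank-$k$ data queues and closes the induction, showing that the fluid limit model of the joint system is stable under FLQ-CSMA.

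Finally, because the factor $\gamma$ controlling the shrinkage to $(1-\gamma)\Lambda^*$ in Lemma~\ref{lem:lq-csma} can be chosen arbitrarily small, for any $\lambda$ strictly inside $\Lambda^*$ one first fixes $\epsilon>0$ small enough that $(1+\epsilon)\lambda$ remains strictly inside $\Lambda^*$ (as in the construction of the shadow arrivals) and then fixes $\gamma>0$ small enough that the shadow load $(1+\epsilon)\lambda$ lies strictly inside $(1-\gamma)\Lambda^*$, so that Lemma~\ref{lem:lq-csma} applies and the network is stable, giving throughput optimality. I expect the main obstacle to be the handling of the FIFO discipline together with the probabilistic nature of CSMA. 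Under FIFO one cannot split a per-link queue into hop-class sub-queues as was done under priority in PLQ-CSMA, so the rank ordering of Lemma~\ref{lem:mon} must carry the entire inductive burden; the delicate point is verifying that the FIFO departure process of a stable lower-rank link does satisfy the SLLN with the correct per-next-hop rate split, so that arrivals to higher-rank links are well behaved --- this is exactly where the loop-free assumption is indispensable. Layered on top of this, relative to the exact-MaxWeight analysis of FLQ-MWS, one must propagate the $(1-\gamma)\Lambda^*$ region together with the high-probability near-optimality of Lemma~\ref{lem:weight} and the resulting shadow-queue stability of Lemma~\ref{lem:lq-csma} consistently through every step of the rank induction, and appeal to the time-scale-separation assumption to justify using the stationary product-form distribution~(\ref{eq:dist}) in the fluid limit.
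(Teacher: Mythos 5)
Your proposal is correct and follows essentially the same route as the paper: the paper's (omitted) proof likewise combines the shadow-queue stability of Lemma~\ref{lem:lq-csma} with the rank-by-rank induction from the proof of Proposition~\ref{pro:flq} (via Lemma~\ref{lem:mon}), exactly as you do. Your additional care with the order of quantifiers --- fixing $\epsilon$ so that $(1+\epsilon)\lambda$ stays strictly inside $\Lambda^*$ and then choosing $\gamma$ small enough that it lies inside $(1-\gamma)\Lambda^*$ --- fills in a detail the paper leaves implicit, but it does not change the approach.
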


Since the fluid limit model for the sub-system of shadow queues $\hat{q}$ 
is stable from Lemma~\ref{lem:lq-csma}, the results of Propositions~\ref{pro:plq-csma} 
and \ref{pro:flq-csma} follow the same line of analysis for the proof of
Propositions~\ref{pro:plq} and \ref{pro:flq}, respectively. We omit the proofs.

\section{Numerical Results} \label{sec:sim}
In this section, we evaluate different scheduling schemes through simulations. 
We compare scheduling performance of HQ-MWS, PLQ-MWS, FLQ-MWS with the original 
back-pressure (BP) algorithm under the \emph{node-exclusive}\footnote{It is also 
called the \emph{primary} or \emph{1-hop} interference model, where two links 
sharing a common node cannot be activated simultaneously. It has been known as 
a good representation for Bluetooth or FH-CDMA networks \cite{sharma06}.} 
interference model. Note that we focus on the node-exclusive interference model 
only for the purpose of illustration. Our scheduling schemes can be applied to 
general interference constraints as specified in Section~\ref{sec:mod}.
\high{We will first focus on a simple linear network topology to illustrate the 
advantages of the proposed schemes, and further validate our theoretical results 
in a larger and more realistic grid network topology. The impact of the parameter 
$\epsilon$ on the scheduling performance will also be explored and discussed.}

First, we evaluate and compare the scheduling performance of HQ-MWS, PLQ-MWS, 
FLQ-MWS and the back-pressure algorithm in a simple linear network that consists 
of 11 nodes and 10 links as shown in Fig.~\ref{fig:linear_1_a}, where nodes are 
represented by circles and links are represented by dashed lines with link capacity, 
respectively. We establish 10 flows that are represented by arrows, where each 
flow $i$ is from node $1$ to node $i+1$ via all the nodes in-between. We consider 
uniform traffic where all flows have packet arrivals at each time slot following 
Poisson distribution with the same mean rate $\lambda > 0$. We run our simulations 
with changing traffic load $\lambda$. Clearly, in this scenario, any traffic load 
with $\lambda < 0.5$ is feasible. We use $\epsilon=0.005$ for HQ-MWS, PLQ-MWS and FLQ-MWS. 
\high{We evaluate the performance by measuring average packet delays 
(in unit of time slot) over all the delivered packets (that reach their respective 
destination nodes) in the network.}

\begin{figure}[t]
\centering
\subfigure[Linear network topology with ten links]{\epsfig{file=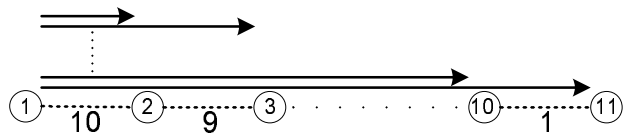,width=0.5\linewidth}\label{fig:linear_1_a}}\\
\subfigure[Average delay]{\epsfig{file=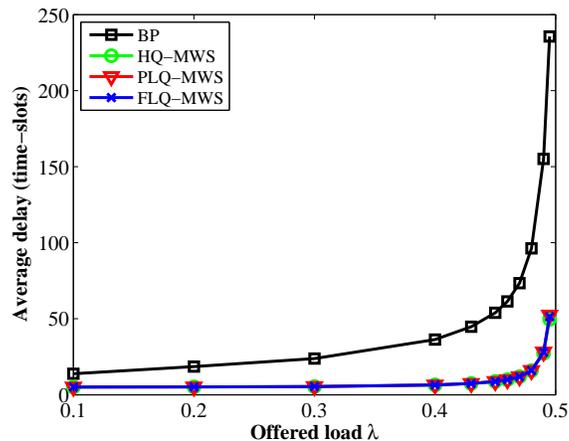,width=0.5\linewidth}\label{fig:linear_1_b}}\\
\caption{Performance of BP, HQ-MWS, PLQ-MWS and FLQ-MWS in a linear network topology ($\epsilon=0.005$).}
\label{fig:linear_1}
\end{figure}

\begin{figure}[t]
\centering
\subfigure[Linear network topology with ten links]{\epsfig{file=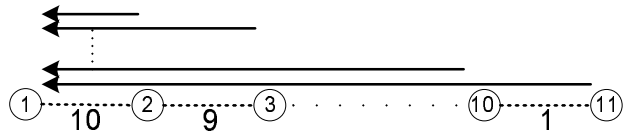,width=0.5\linewidth}\label{fig:linear_2_a}}\\
\subfigure[Average delay]{\epsfig{file=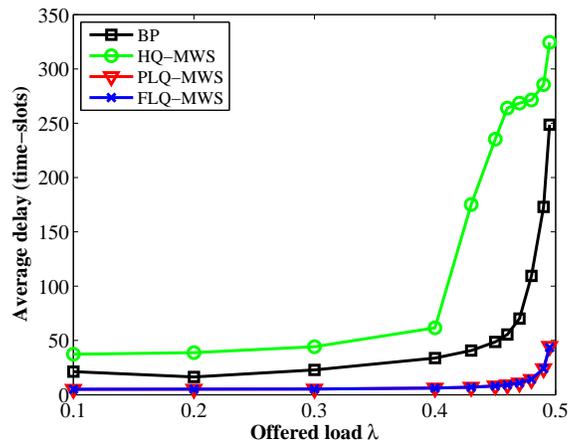,width=0.5\linewidth}\label{fig:linear_2_b}}\\
\caption{Performance of BP, HQ-MWS, PLQ-MWS and FLQ-MWS in a linear network topology ($\epsilon=0.005$).}
\label{fig:linear_2}
\end{figure}

Fig.~\ref{fig:linear_1_b} plots the average delays under different offered loads 
to examine the performance limits of different scheduling schemes. Each result 
represents a simulation run that lasts for $10^7$ time slots. Since the optimal 
throughput region $\Lambda^*$ is defined as the set of arrival rate vectors under 
which queue lengths and thus delays remain finite, we can consider the traffic 
load, under which the average delay increases rapidly, as the boundary of the 
optimal throughput region. Fig.~\ref{fig:linear_1_b} shows that all schemes achieve 
the same boundary (i.e., $\lambda<0.5$), which supports our theoretical results on 
throughput optimality. Moreover, all the three proposed schemes achieve substantially 
better delay performance than the back-pressure algorithm. This is because under 
the back-pressure algorithm, the queue lengths have to build up along the route 
a flow takes from the destination to the source, and in general, earlier hop link 
has a larger queue length. This leads to poor delay performance especially when 
the route of a flow is lengthy, which is the case in Fig.~\ref{fig:linear_1_a}. 
Note that in this specific scenario, there is only one per-hop queue at each link 
under HQ-MWS. Hence, HQ-MWS is equivalent to PLQ-MWS and FLQ-MWS in this scenario, 
which explains why the three proposed schemes perform the same as in Fig.~\ref{fig:linear_1_b}. 

\begin{figure}
\centering
\subfigure[A grid network topology]{\epsfig{file=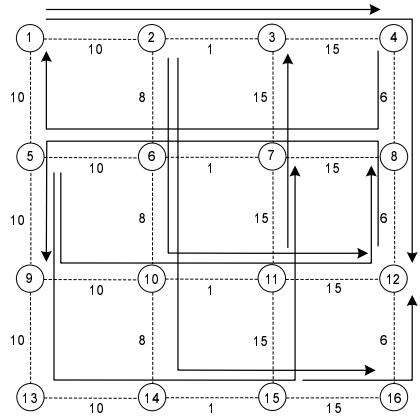,width=0.33\linewidth}\label{fig:grid_a}}\\
\subfigure[Average delay for MWS schemes with $\epsilon=0.05$]{\epsfig{file=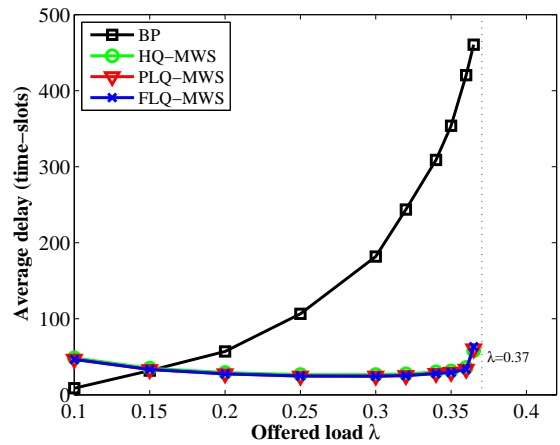,width=0.5\linewidth}\label{fig:grid_b}}
\subfigure[Average delay for CSMA schemes with $\epsilon=0.005$]{\epsfig{file=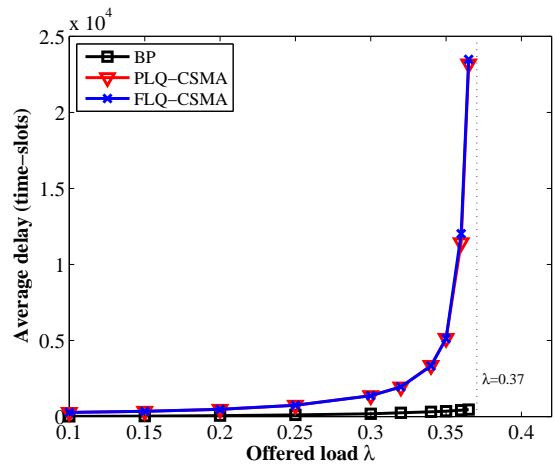,width=0.5\linewidth}\label{fig:grid_c}}
\caption{Performance of all the proposed scheduling schemes in a grid network 
with 16 nodes and 24 links. In Fig.~\ref{fig:grid_b}, the vertical dotted line 
$\lambda=0.37$ denotes an upper bound for the feasible values of $\lambda$.}
\label{fig:grid}
\end{figure}

Second, we evaluate the performance of the proposed schemes in the same linear network 
as in the previous case while reversing the direction of each flow. The new topology is 
illustrated in Fig.~\ref{fig:linear_2_a}. In this scenario, the number of per-hop queues 
HQ-MWS maintains for each link is the same as the number of flows passing through that 
link. Hence, HQ-MWS is expected to operate differently from PLQ-MWS and FLQ-MWS, and 
achieves different (and potentially poorer) delay performance. All the other simulation 
settings are kept the same as in the previous case. Fig.~\ref{fig:linear_2_b} shows that 
all schemes achieve the same boundary (i.e., $\lambda<0.5$) in this scenario, which again 
supports our theoretical results on throughput performance. However, we observe that 
HQ-MWS has the worst delay performance, while PLQ-MWS and FLQ-MWS achieve substantially 
better performance. This is because PLQ-MWS and FLQ-MWS transmit packets 
more efficiently and do not waste service as long as there are enough packets at the 
activated link, while the back-pressure algorithm and HQ-MWS maintain multiple queues 
for each link, and may waste service if the activated queue has less packets than the 
link capacity. HQ-MWS has larger delays than the back-pressure algorithm because the 
scheduling decisions of HQ-MWS are based on the shadow queue lengths rather than the 
actual queue lengths: a queue with very small (or even zero) queue length could be 
activated. This introduces another type of inefficiency in HQ-MWS. Note that PLQ-MWS 
and FLQ-MWS also make scheduling decisions based on the shadow queue lengths. However, 
their performance improvement from a single queue per link dominates delay increases 
from the inefficiency. These observations imply that maintaining per-link queues not 
only simplifies the data structure, but also improves scheduling efficiency and reduces 
delays.

\high{
Next, we evaluate the performance of all the proposed schemes in a larger grid network 
with 16 nodes and 24 links as shown in Fig.~\ref{fig:grid_a}, where the capacity of each 
link has been shown beside the link and carefully assigned to avoid traffic symmetry. 
Similar type of grid networks have been adopted in the literature (e.g., \cite{ni09,bui11,
ji11}) to numerically evaluate scheduling performance. We establish 10 multi-hop flows 
that are represented by arrows in Fig.~\ref{fig:grid_a}. Again, we consider uniform traffic 
where each flow has independent packet arrivals at each time slot following Poisson 
distribution with the same mean rate $\lambda > 0$. In this scenario, we can calculate 
an upper bound of $1/(4/8+2/10+2) = 10/27 \approx 0.37$ for the feasible value of $\lambda$, 
by looking at the flows passing through node 6, which is the bottleneck in the network. 

We choose $\epsilon=0.05$ for HQ-MWS, PLQ-MWS and FLQ-MWS. Under each scheduling scheme 
along with the back-pressure algorithm, we measure average packet delays under different 
offered loads to examine their performance limits. Fig.~\ref{fig:grid_b} shows that the 
proposed schemes have higher packet delays than the back-pressure algorithm when traffic 
load is light (e.g., $\lambda < 0.15$). This is due to the aforementioned inefficiency 
under the proposed schemes: since the scheduling decisions are based on the shadow 
queue lengths rather than the actual queue lengths, queues with very small (or even zero) 
queue length can be activated. However, the effect tends to decrease with heavier traffic 
load, since the queue lengths are likely to be large. The results also show that the proposed 
schemes consistently outperform the back-pressure algorithm when $\lambda > 0.15$. Note that 
with $\epsilon = 0.05$, the shadow traffic rate vector is outside the optimal throughput 
region when $\lambda > 0.37/(1+0.05) \approx 0.35$, however, interestingly, the schedules 
chosen based on the shadow queue lengths can still stabilize the data queues even if $0.35 
< \lambda < 0.37$ (which is still feasible). 
Nevertheless, we later will show that this is not always the case. 
For PLQ-CSMA and FLQ-SMA, similar as in \cite{ni09}, we choose contention window size 
$W=48$, weight function $w_l(t)=\log(0.1 \hat{Q}_l(t))$, and link activation probability 
$p_l = \frac {e^{w_l(t)}} {e^{w_l(t)}+1}$. We choose $\epsilon=0.005$ for PLQ-CSMA and 
FLQ-CSMA, and plot their average delays over offered loads in Fig.~\ref{fig:grid_c}, 
along with the back-pressure algorithm. Fig.~\ref{fig:grid_c} shows that although PLQ-CSMA 
and FLQ-CSMA achieve the optimal throughput performance, they suffer from very poor delay 
performance as expected. This is due to the long mixing time of the underlying Markov 
chain formed by the transmission schedules \cite{ni09}. 
Note that in the above scenario, FLQ-MWS does not guarantee throughput optimality, since 
flows $(5 \rightarrow 9 \rightarrow 10 \rightarrow 11 \rightarrow 12 \rightarrow 8)$ and 
$(12 \rightarrow 8 \rightarrow 7 \rightarrow 6 \rightarrow 5 \rightarrow 9)$ form a loop. 
However, the results in Fig.~\ref{fig:grid_b} suggest that all the schemes, including FLQ-MWS, 
empirically achieve the optimal throughput performance. This opens up an interesting question 
about throughput performance of FLQ-MWS in general settings.
}

\begin{figure}[t]
\centering
\epsfig{file=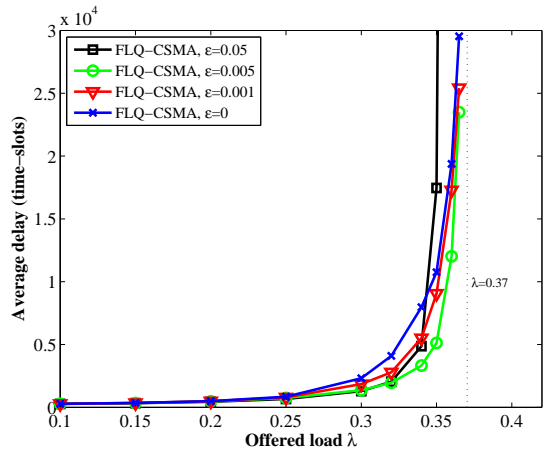,width=0.5\linewidth}
\caption{The impact of the value of $\epsilon$ on the scheduling performance.}
\label{fig:epsilon}
\end{figure}

\high{
Finally, we investigate sensitivity of parameter $\epsilon$ on the scheduling
performance, by runing simulations for PLQ-CSMA and FLQ-CSMA with different values 
of $\epsilon$ in the grid network in Fig.~\ref{fig:grid_a}. Since the performances 
of PLQ-CSMA and FLQ-CSMA are very close, we report only the results for FLQ-CSMA 
in Fig.~\ref{fig:epsilon}, where we plot average packet delays over the offered load
$\lambda$ for FLQ-CSMA with $\epsilon=0, 0.001, 0.005$ and $0.05$, respectively. 
The results show that the delay performance generally improves with a larger value 
of $\epsilon$, in particular under moderate and heavy traffic loads (e.g., $\lambda 
> 0.25$). This is because a larger value of $\epsilon$ leads to more aggressive 
link activations. However, it can be observed that a larger value of $\epsilon$ 
(e.g., $\epsilon = 0.05$) could make the system unstable when the offered load is 
close to the capacity boundary (e.g., $\lambda > 0.35$). On the other hand, the 
impact of $\epsilon$ becomes marginal under light traffic loads (i.e., $\lambda$ 
is small), as the inefficiency of small queue activation dominates the scheduling 
performance. Interestingly, although we require $\epsilon$ be positive in the 
analysis for throughput optimality, the simulation results show that the proposed 
schemes can empirically achieve the optimal throughput performance even when 
$\epsilon=0$, leading to much larger delays though.
}

\section{Conclusion} \label{sec:con}
In this paper, we developed scheduling policies with per-hop or per-link queues and 
a shadow algorithm to achieve the overall goal of removing per-flow or per-destination 
information requirement, simplifying queue structure, exploiting only local information, 
and potentially reducing delay. We showed throughput optimality of the proposed schemes 
that use only the readily available hop-count information, using fluid limit techniques 
via an inductive argument. We further simplified the solution using FIFO queueing discipline 
with per-link queues and showed that this is also throughput-optimal in networks without 
flow-loops. The problem of proving throughput optimality in general networks with algorithms 
(like FLQ-MWS) that use only per-link information remains an important open and challenging 
problem. Further, it is also worthwhile to investigate the problem with dynamic routing 
and see if per-flow and per-destination information can be removed even when routes are
not fixed.

\appendices

\high{
\section{Proof of Proposition~\ref{pro:hq}} \label{app:pro:hq}
To begin with, let $Q(t) \triangleq [Q_\lk(t)]$ and $\hat{Q}(t) \triangleq 
[\hat{Q}_\lk(t)]$ denote the queue length vector and the shadow queue length 
vector at time slot $t$, respectively. We use $\|\cdot\|$ to denote the 
$L_1$-norm of a vector, e.g., $\|Q(t)\|=\sum_{l\in\Edge} \sum_{k=1}^{L^{\max}} 
Q_\lk(t)$. We let $m_\lk(i)$ be the index of the flow to which the $i$-th 
packet of queue $Q_\lk$ belongs. In particular, $m_\lk(1)$ indicates the 
index of the flow to which the head-of-line packet of queue $Q_\lk$ belongs. 
We define the state of queue $Q_\lk$ at time slot $t$ as $\Queue_\lk(t) = 
[m_\lk(1), \cdots, m_\lk(Q_\lk(t))]$ in an increasing order of the arriving 
time, or an empty sequence if $Q_\lk(t)=0$. Then we denote its vector by 
$\Queue(t) \triangleq [\Queue_\lk(t)]$. Define $\Int_{\Flow} \triangleq 
\{1,2,\cdots, |\Flow|\}$, and let $\Int^{\infty}_{\Flow}$ be the set of 
finitely terminated sequences taking values in $\Int_{\Flow}$. It is evident 
that $\Queue_\lk(t) \in \Int^{\infty}_{\Flow}$, and hence $\Queue(t) \in 
(\Int^{\infty}_{\Flow})^{|\Edge| \times L^{\max}}$. We define $\Markov(t) 
\triangleq (\Queue(t),\hat{Q}(t), \frac{1}{t+1} A(t))$, and then $\Markov 
= (\Markov(t), t\ge0)$ is the process describing the behavior of the 
underlying system. Note that in the third term of $\Markov(t)$, we use 
$\frac{1}{t+1}A(t)$ instead of $\frac{1}{t}A(t)$ so that it is well-defined 
when $t=0$. Clearly, the evolution of $\Markov$ forms a countable Markov 
chain under HQ-MWS. We abuse the notation of $L_1$-norm by writing the norm 
of $\Markov(t)$ as $\| \Markov(t) \| \triangleq \| Q(t) \| + \lceil \| 
\hat{Q}(t) \| \rceil + \lceil \frac{1}{t+1} \|A(t)\| \rceil$. Let $\Markov^{(x)}$ 
denote a process $\Markov$ with an initial condition such that
\begin{equation}
\label{eq:init_config}
\textstyle \| \Markov^{(x)}(0) \| = x.
\end{equation}

The following Lemma was derived in \cite{rybko92} for continuous-time 
countable Markov chains, and it follows from more general results in 
\cite{malyshev79} for discrete-time countable Markov chains. 

\begin{lemma}[Theorem~4 of \cite{andrews04}]
\label{lem:stab_cri}
Suppose that there exist a $\xi>0$ and a finite integer $T > 0$ such 
that for any sequence of processes $\{\frac {1} {x} \Markov^{(x)}(x T), 
x=1,2,\cdots\}$, we have
\begin{equation}
\label{eq:stab_cri}
\textstyle \limsup_{x \rightarrow \infty} \Expect \left[\frac {1} {x} 
\| \Markov^{(x)} (x T) \| \right] \le 1 - \xi.
\end{equation}
Then, the Markov chain $\Markov$ is stable.
\end{lemma}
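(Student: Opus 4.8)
The plan is to derive positive recurrence of the Markov chain $\Markov$ from the single-time contraction hypothesis (\ref{eq:stab_cri}), following the fluid-limit-to-stability program of \cite{dai95,rybko92,malyshev79}. The Lyapunov function I would use is simply the norm $V(y)\triangleq\|y\|$ appearing in (\ref{eq:init_config}), and the engine of the proof is a \emph{state-dependent} (multiplicative) Foster--Lyapunov drift: the hypothesis says that, starting from \emph{any} state of norm $x$, after $xT$ slots the expected norm has contracted to at most $(1-\xi)x+o(x)$, so $V$ drops by a fixed fraction over a horizon that is itself proportional to the current value of $V$. The goal is to convert this into the statement that a finite subset of the state space is reached in finite expected time from every initial state, which is exactly the notion of stability fixed in Section~\ref{sec:mod}.

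First I would read the $\limsup$ in (\ref{eq:stab_cri}) uniformly over initial conditions: since it is required to hold for \emph{any} sequence of processes with $\|\Markov^{(x)}(0)\|=x$, there is a threshold $x_0$ such that $\Expect[\|\Markov(xT)\|\mid\Markov(0)=y]\le(1-\xi/2)\|y\|$ for every state $y$ with $\|y\|=x\ge x_0$. I would then iterate this across blocks: sample the chain at times $t_0=0$ and $t_{n+1}=t_n+\lceil\|Y_n\|\rceil T$, where $Y_n\triangleq\Markov(t_n)$, so that each block lasts a number of slots proportional to the current norm. On the event $\{\|Y_n\|\ge x_0\}$ the strong Markov property and the uniform bound give $\Expect[\|Y_{n+1}\|\mid Y_n]\le(1-\xi/2)\|Y_n\|$, making $\|Y_n\|$ a positive supermartingale that contracts geometrically until it enters the bounded set $\{\|y\|<x_0\}$. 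Hence the number of blocks to reach that set has a geometric tail, and since the $n$-th block has length at most $(\|Y_n\|+1)T$ with $\|Y_n\|$ itself decaying geometrically in expectation, the \emph{total} number of slots to reach $\{\|y\|<x_0\}$ has finite expectation. Finally, I would localize $\{\|y\|<x_0\}$ to a genuinely finite subset using the a.s.\ convergence $A(t)/t\to\lambda$ from the SLLN (\ref{eq:slln}): the bounded data- and shadow-queue coordinates range over a finite set once the arrival-rate coordinate $\frac{1}{t+1}A(t)$ has settled, which together with the finite expected hitting time yields stability.

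The main obstacle will be the passage from the \emph{single} scaled time $T$ to a trajectory-wide drift. Because the horizon $\lceil\|Y_n\|\rceil T$ grows with the initial norm, the classical one-step Foster criterion does not apply directly; one must verify that the contraction observed at the scaled instant $T$ reflects a true decrease of $V$ over the whole block and is not undone by the excursions the norm can make inside a block. This requires the a priori estimates that tame the scaling --- per slot the norm changes by a bounded-in-mean amount (arrivals are governed by positive-recurrent chains satisfying (\ref{eq:slln}), and each queue loses at most one packet per slot), so the family $\{\frac1x\|\Markov^{(x)}(\lfloor x t\rfloor)\|\}$ is uniformly integrable and well-controlled on $[0,T]$. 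Folding the rounding ($\lceil\cdot\rceil T$ versus $xT$), the $\limsup$ (valid only for large $x$), and the shift of the reference time into one clean contraction constant $1-\xi/2$ is the delicate bookkeeping; once it is in place, the supermartingale argument and the recurrence conclusion are routine. As this criterion is precisely Theorem~4 of \cite{andrews04}, with its continuous-time form in \cite{rybko92} and the general discrete-time version in \cite{malyshev79}, I would ultimately defer to those references for the fully rigorous argument.
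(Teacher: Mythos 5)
The paper never proves this lemma: it is imported verbatim as Theorem~4 of \cite{andrews04}, with the remark that it was derived in \cite{rybko92} for continuous-time countable chains and follows from the more general discrete-time results of \cite{malyshev79}. So there is no in-paper argument to compare yours against; the only question is whether your sketch faithfully reflects how the cited result is actually established, and in outline it does. Your two main moves --- (i) upgrading the $\limsup$ over sequences of initial conditions to a uniform contraction $\Expect\left[\|\Markov(xT)\| \mid \Markov(0)=y\right]\le(1-\xi/2)\|y\|$ for all states $y$ with $\|y\|=x\ge x_0$ (if this failed along a subsequence of states, those states would themselves form an admissible sequence violating (\ref{eq:stab_cri})), and (ii) sampling the chain at state-dependent block times $t_{n+1}=t_n+\lceil\|Y_n\|\rceil T$ and running a supermartingale / geometric-contraction argument to get a finite expected hitting time --- are precisely the state-dependent Foster--Lyapunov (Malyshev--Menshikov) mechanism underlying the cited theorem. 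One minor remark: the within-block excursion control you worry about is not actually needed for this argument; the drift at block endpoints plus the geometric decay of $\Expect\|Y_n\|$ already bounds the expected total number of slots, since the block lengths are proportional to the endpoint norms.

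The genuine weak point is your final ``localization'' step, and the SLLN-based fix you propose does not work as stated. The sublevel set $\{\|y\|<x_0\}$ is \emph{not} finite here: the coordinates $\hat{Q}(t)$ and $\frac{1}{t+1}A(t)$ of $\Markov(t)$ take countably infinitely many distinct values in any bounded region (they are rescaled partial sums of arrivals), so reaching $\{\|y\|<x_0\}$ in finite expected time does not by itself yield the finite subset required by the paper's definition of stability. Invoking (\ref{eq:slln}) to say the rate coordinate ``settles'' does not close this gap, because that coordinate keeps taking fresh values forever even while it remains bounded; what is needed is an argument that from anywhere in the (infinite) sublevel set a fixed finite set is reached with uniformly positive probability in bounded time, which is exactly the bookkeeping carried out in \cite{andrews04} and \cite{malyshev79}. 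Since you ultimately defer to those references --- which is all the paper itself does --- this is not fatal, but you should recognize that this last step requires a real argument rather than the SLLN remark.
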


Lemma~\ref{lem:stab_cri} implies the stability of the network. 
A stability criterion of type (\ref{eq:stab_cri}) leads to a 
fluid limit approach \cite{dai95} to the stability problem of 
queueing systems. We start our analysis by establishing the 
\emph{fluid limit model} as in \cite{andrews04,dai95}. We 
define another process $\System \triangleq \left(F, U, Q, \Pi, 
\Psi, A, D, P, \hat{Q}, \hat{\Pi}, \hat{\Psi}, \hat{A}, \hat{D}, 
\hat{P} \right)$, where the tuple denotes a list of vector 
processes. Clearly, a sample path of $\System^{(x)}$ uniquely 
defines the sample path of $\Markov^{(x)}$. Then we extend the 
definition of $\System$ to each continuous time $t \ge 0$ as 
$\System^{(x)}(t) \triangleq \System^{(x)}(\lfloor t \rfloor)$. 

Recall that a sequence of functions $f_n(\cdot)$ is said to converge to
a function $f(\cdot)$ \emph{uniformly over compact (u.o.c.)} intervals 
if for all $t \ge 0$, $\lim_{n \rightarrow \infty} \sup_{0 \le t^{\prime} 
\le t} |f_n(t^{\prime}) - f(t^{\prime})|=0$. Next, we consider a sequence 
of processes $\{\frac {1} {\xn} \System^{(\xn)}(\xn \cdot)\}$ that is 
scaled both in time and space. Then, using the techniques of Theorem~4.1 
of \cite{dai95} or Lemma 1 of \cite{andrews04}, we can show the convergence 
properties of the sequences in the following lemma.
\begin{lemma}
\label{lem:fluid}
With probability one, for any sequence of processes $\{\frac {1} {\xn} 
\System^{(\xn)}(\xn \cdot)\}$, where $\{\xn\}$ is a sequence of positive
integers with $\xn \rightarrow \infty$, there exists a subsequence 
$\{\xnj\}$ with $\xnj \rightarrow \infty$ as $j \rightarrow \infty$ 
such that the following \emph{u.o.c. convergences} hold:
\begin{eqnarray}
&& \textstyle\frac{1}{\xnj}F^{(\xnj)}_s(\xnj t) \rightarrow  f_s(t), \label{eq:fluid_f} \\
&& \textstyle\frac{1}{\xnj}U^{(\xnj)}_\sk(\xnj t) \rightarrow  u_\sk(t), \label{eq:fluid_u}\\
&& \textstyle\frac{1}{\xnj}A^{(\xnj)}_\lk(\xnj t) \rightarrow  a_\lk(t), \label{eq:fluid_a} \\
&& \textstyle\frac{1}{\xnj}\hat{A}^{(\xnj)}_\lk(\xnj t) \rightarrow  \hat{a}_\lk(t), \label{eq:fluid_ha} \\
&& \textstyle\frac{1}{\xnj}Q^{(\xnj)}_\lk(\xnj t) \rightarrow  q_\lk(t), \label{eq:fluid_q}\\ 
&& \textstyle\frac{1}{\xnj}\hat{Q}^{(\xnj)}_\lk(\xnj t) \rightarrow  \hat{q}_\lk(t), \\
&& \textstyle\frac{1}{\xnj}D^{(\xnj)}_\lk(\xnj t) \rightarrow  d_\lk(t), \label{eq:fluid_d} \\ 
&& \textstyle\frac{1}{\xnj}\hat{D}^{(\xnj)}_\lk(\xnj t) \rightarrow  \hat{d}_\lk(t), \label{eq:fluid_hd}\\
&& \textstyle\frac{1}{\xnj}\int_0^{\xnj t} \Pi^{(\xnj)}_\lk(\tau)d\tau \rightarrow \int_0^t \pi_\lk(\tau) d\tau, \label{eq:fluid_pi}\\
&& \textstyle\frac{1}{\xnj}\int_0^{\xnj t} \hat{\Pi}^{(\xnj)}_\lk(\tau)d\tau \rightarrow \int_0^t \hat{\pi}_\lk(\tau) d\tau, \\
&& \textstyle\frac{1}{\xnj}\int_0^{\xnj t} \Psi^{(\xnj)}_\lk(\tau)d\tau \rightarrow \int_0^t \psi_\lk(\tau) d\tau, \label{eq:fluid_psi}\\
&& \textstyle\frac{1}{\xnj}\int_0^{\xnj t} \hat{\Psi}^{(\xnj)}_\lk(\tau)d\tau \rightarrow \textstyle \int_0^t \hat{\psi}_\lk(\tau) d\tau, \\
&& \textstyle\frac{1}{\xnj}\int_0^{\xnj t} P^{(\xnj)}_\lk(\tau)d\tau \rightarrow  \int_0^t p_\lk(\tau) d\tau, \label{eq:fluid_p}\\
&& \textstyle\frac{1}{\xnj}\int_0^{\xnj t} \hat{P}^{(\xnj)}_\lk(\tau)d\tau \rightarrow \int_0^t \hat{p}_\lk(\tau) d\tau, \label{eq:fluid_hp}
\end{eqnarray}
where the functions $f_s,u_\sk,a_\lk,d_\lk,q_\lk,\hat{a}_\lk,\hat{d}_\lk,\hat{q}_\lk$ 
are Lipschitz continuous in $[0,\infty)$.
\end{lemma}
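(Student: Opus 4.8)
The plan is to follow the now-standard route to fluid limits of Dai \cite{dai95} (Theorem~4.1) and Andrews et al.\ \cite{andrews04} (Lemma~1): fix a single probability-one sample-path event on which all the strong laws of large numbers hold, and then argue deterministically on that event that the scaled family $\{\frac{1}{\xn}\System^{(\xn)}(\xn\cdot)\}$ is uniformly bounded and equicontinuous on compact time intervals, so that a subsequence converges uniformly over compacts by the Arzel\`a--Ascoli theorem. Since $\System$ is a finite list of vector processes, I would extract a u.o.c.-convergent subsequence one coordinate at a time and take a finite diagonal subsequence $\{\xnj\}$ along which the convergences (\ref{eq:fluid_f})--(\ref{eq:fluid_hp}) all hold simultaneously. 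Uniform boundedness on compacts comes for free from the normalized initial condition (\ref{eq:init_config}) together with the Lipschitz bounds established below.

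The first ingredient is a functional strong law for the exogenous arrivals. Because each increment process $F_s(t)-F_s(t-1)$ is a positive recurrent Markov chain satisfying the scalar SLLN (\ref{eq:slln}), the nondecreasing step functions $\frac{1}{\xn}F^{(\xn)}_s(\xn t)$ converge pointwise to the continuous linear limit $f_s(t)=\lambda_s t$ at every $t$; a standard monotonicity (P\'olya-type) argument upgrades this pointwise convergence to u.o.c.\ convergence, giving (\ref{eq:fluid_f}). This is the only place where the arrivals being a Markov chain, rather than having deterministically bounded per-slot increments, matters, and it is handled entirely through the assumed SLLN.

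The second ingredient is a uniform Lipschitz bound for every remaining process. Under unit link capacities the service, departure, and inter-hop transfer counts have per-slot increments bounded by $1$ (indeed $\Psi_\lk(t)\le\Pi_\lk(t)\le c_l=1$, and likewise for the shadow counterparts), so each scaled integral in (\ref{eq:fluid_pi})--(\ref{eq:fluid_psi}) and each scaled cumulative process $\frac{1}{\xn}D^{(\xn)}_\lk(\xn\cdot)$, $\frac{1}{\xn}U^{(\xn)}_\sk(\xn\cdot)$, $\frac{1}{\xn}A^{(\xn)}_\lk(\xn\cdot)$ is Lipschitz with a constant independent of $\xn$. The queue-length processes are then Lipschitz as differences of Lipschitz functions through the evolutions (\ref{eq:q_evolution}) and (\ref{eq:shadowB_evolution}). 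Equicontinuity of the full family follows, Arzel\`a--Ascoli applies, and since the u.o.c.\ limit of uniformly Lipschitz functions is itself Lipschitz, the asserted Lipschitz continuity of $f_s,u_\sk,a_\lk,d_\lk,q_\lk,\hat{a}_\lk,\hat{d}_\lk,\hat{q}_\lk$ is obtained.

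The step I expect to require the most care is the shadow-arrival process. Unlike the data-queue counts, the shadow arrivals (\ref{eq:shadowarr}) are driven by the running time-average $A_\lk(t)/t$, which is not an increment-bounded counting process, so a Lipschitz bound for $\hat{A}_\lk$ cannot be read off directly. The argument I would use is that every per-hop arrival stream is dominated by the exogenous input, $A_\lk(t)\le\sum_{s}F_s(t)$ (a packet reaching any hop must first have been injected), whence $A_\lk(t)/t\le\sum_s F_s(t)/t$ stays bounded almost surely for all large $t$ by (\ref{eq:slln}); this bounds $\hat{P}_\lk(t)=(1+\epsilon)A_\lk(t)/t$ and hence makes $\frac{1}{\xn}\hat{A}^{(\xn)}_\lk(\xn\cdot)$ asymptotically Lipschitz, restoring equicontinuity for (\ref{eq:fluid_ha}) and (\ref{eq:fluid_hp}). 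With this in hand the diagonalization goes through and the lemma follows. Note that the lemma asserts only \emph{existence} of the limits and their Lipschitz regularity, so the explicit identification of $\hat{a}_\lk$ (and the inductive identification of the data-queue rates) is deferred to the subsequent stability analysis and is not needed here.
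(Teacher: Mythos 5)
Your overall strategy---fix a probability-one event on which the SLLN holds, establish uniform Lipschitz bounds for the capacity-limited counting processes, apply the Arzel\`a--Ascoli theorem, and extract a finite diagonal subsequence---is exactly the route taken in the paper's proof of Lemma~\ref{lem:fluid}, including the same split between coordinates handled by the SLLN (the exogenous arrivals) and coordinates handled by unit-capacity increment bounds (services, departures, forwarded arrivals). The problem lies in the one step you yourself flag as the most delicate: the shadow arrival processes.

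Your domination $A_\lk(t)\le\sum_s F_s(t)$ is false when the network starts with packets already in its queues, and in this lemma the initial condition is precisely the scaled one (\ref{eq:init_config}), so $\|Q(0)\|$ can be of order $\xn$. The correct causality bound is $A_\lk(t)\le\sum_s \Hslk F_s(t)+\sum_i\sum_h Q_{i,h}(0)$ (the paper's (\ref{eq:causality})), and inserting the missing term breaks your estimate: summing $\hat{P}_\lk(\tau)\le(1+\epsilon)\bigl(\sum_s F_s(\tau)/\tau + \|Q(0)\|/\tau\bigr)$ over $\tau\le \xn t$ produces a term of order $\|Q(0)\|\log(\xn t)\sim \xn\log(\xn t)$, so after dividing by $\xn$ this bound on $\frac{1}{\xn}\hat{A}^{(\xn)}_\lk(\xn t)$ diverges logarithmically---it does not even give uniform boundedness on compacts, hence no equicontinuity and no Arzel\`a--Ascoli. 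The repair is to bound the per-slot arrivals physically rather than by cumulative injections: arrivals into any per-hop queue during one slot consist of at most $|\Edge|$ forwarded packets (each forwarding link has unit capacity) plus exogenous injections, so $A_\lk(t)\le|\Edge|\,t+\sum_s F_s(t)$, which is exactly the bound (\ref{eq:hp_bound}) that the paper itself uses later for uniform integrability. Combined with the almost-sure finiteness of $\sup_\tau F_s(\tau)/\tau$ (a consequence of the SLLN), this yields a per-slot bound on $\hat{P}_\lk$ uniform in $\tau$, restoring the Lipschitz property needed for (\ref{eq:fluid_ha}) and (\ref{eq:fluid_hp}). A smaller slip of the same kind: $U_{s,1}=F_s$ and $A_{l,1}=\sum_s H^s_{l,1}F_s$ are \emph{not} ``Lipschitz with a constant independent of $\xn$,'' since the Markovian arrivals have unbounded per-slot increments; these first-hop coordinates must be handled through the SLLN limit (as the paper does, treating $k=1$ via (\ref{eq:fluid_f}) and only $k>1$ via the capacity bound), not through Arzel\`a--Ascoli.
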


Note that the proof of the above lemma is quite standard using the techniques developed 
in \cite{andrews04,dai95,dai00}. We provide the proof in Appendix~\ref{app:lem:fluid} 
for completeness.

Any set of limiting functions $(f, u, q$, $\pi, \psi$, $a, d, p$, $\hat{q}$, $\hat{\pi}$, 
$\hat{\psi}$, $\hat{a}$, $\hat{d}$, $\hat{p})$ is called a \emph{fluid limit}. The family 
of these fluid limits is associated with our original stochastic network. The scaled
sequences $\{\frac {1} {\xn} \System^{(\xn)}(\xn \cdot)\}$ and their limits are referred
to as a \emph{fluid limit model} \cite{bramson08}. Since some of the limiting functions, 
namely $f_s,u_\sk,a_\lk,d_\lk,q_\lk,\hat{a}_\lk,\hat{d}_\lk,\hat{q}_\lk$, are Lipschitz 
continuous in $[0,\infty)$, they are absolutely continuous. Therefore, these limiting 
functions are differentiable at almost all time $t \in [0,\infty)$, which we call 
\emph{regular} time. 

Next, we will present the \emph{fluid model equations} of the system, i.e., 
Eqs. (\ref{eq:flt})-(\ref{eq:pipi}). Fluid model equations can be thought of 
as belonging to a \emph{fluid network} which is the deterministic equivalence
of the original stochastic network. Any set of functions satisfying the fluid 
model equations is called a \emph{fluid model solution} of the system. We show 
in the following lemma that any fluid limit is a fluid model solution.

\begin{lemma}
\label{lem:fluid_eq}
Any fluid limit $(f, u, q$, $\pi, \psi$, $a, d, p$, $\hat{q}$, $\hat{\pi}$, 
$\hat{\psi}$, $\hat{a}$, $\hat{d}$, $\hat{p})$ satisfies the following equations:
\begin{eqnarray} 
&& \textstyle	f_s(t) = \lambda_s t, \label{eq:flt} \\
&& \textstyle	q_\lk(t) = q_\lk(0) + a_\lk(t) - d_\lk(t), \label{eq:qad} \\
&& \textstyle	a_\lk(t) = \sum_s \Hslk u_\sk(t), \label{eq:ar} \\
&& \textstyle	a_\lk(t) = \int_0^t p_\lk(\tau) d\tau, \label{eq:ap} \\
&& \textstyle	d_\lk(t) = \int_0^t \psi_\lk(\tau) d\tau, \label{eq:dp} \\
&& \textstyle	\psi_\lk(t) \le \pi_\lk(t), \label{eq:pp} \\
&& \textstyle	\frac{d}{dt} q_\lk(t) = p_\lk(t) - \psi_\lk(t),  \label{eq:dq1} \\
&& \textstyle	\frac{d}{dt} q_\lk(t) = \left\{
\begin{array}{ll}
p_\lk(t) - \pi_\lk(t), & \text{if}~q_\lk(t) > 0,\\
(p_\lk(t) - \pi_\lk(t))^+, & \text{otherwise},
\end{array}
\right. \label{eq:dq2} \\
&& \textstyle	\hat{q}_\lk(t) = \hat{q}_\lk(0) + \hat{a}_\lk(t)  - \hat{d}_\lk(t),  \label{eq:hqad} \\
&& \textstyle	\hat{a}_\lk(t) = \int_0^t \hat{p}_\lk(\tau) d\tau, \\
&& \textstyle	\hat{d}_\lk(t) = \int_0^t \hat{\psi}_\lk(\tau) d\tau, \\
&& \textstyle	\hat{\psi}_\lk(t) \le \hat{\pi}_\lk(t),  \label{eq:hpp} \\
&& \textstyle	\frac{d}{dt} \hat{q}_\lk(t) = \hat{p}_\lk(t) - \hat{\psi}_\lk(t), \label{eq:hdq1}\\
&& \textstyle	\frac{d}{dt} \hat{q}_\lk(t) = \left\{
\begin{array}{ll}
\hat{p}_\lk(t) - \hat{\pi}_\lk(t), & \text{if}~\hat{q}_\lk(t) > 0,\\
(\hat{p}_\lk(t) - \hat{\pi}_\lk(t))^+, & \text{otherwise},
\end{array}
\right. \label{eq:hdq2} \\
&& \textstyle	\|q(0)\| + \| \hat{q}(0)\| \le 1, \label{eq:init} \\
&& \textstyle	\pi_\lk(t) = \hat{\pi}_\lk(t). \label{eq:pipi}
\end{eqnarray}
\end{lemma}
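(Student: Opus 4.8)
The plan is to verify each of~(\ref{eq:flt})--(\ref{eq:pipi}) by taking the corresponding exact identity for the discrete-time system, dividing by the scaling factor $x_{n_j}$, and passing to the limit along the subsequence using the u.o.c.\ convergences of Lemma~\ref{lem:fluid}. Almost all of the equations fall out of this recipe directly. Equation~(\ref{eq:flt}) is immediate from the SLLN~(\ref{eq:slln}), since $\frac{1}{x_{n_j}}F^{(x_{n_j})}_s(x_{n_j}t)\to\lambda_s t$. Equations~(\ref{eq:qad}) and~(\ref{eq:hqad}) are the scaled limits of the queue-evolution identities~(\ref{eq:q_evolution}) and~(\ref{eq:shadowB_evolution}); equation~(\ref{eq:ar}) is the scaled limit of $A_\lk(t)=\sum_s\Hslk U_\sk(t)$. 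The integral representations~(\ref{eq:ap}),~(\ref{eq:dp}) and their shadow counterparts follow by writing each cumulative process as a running sum of its per-slot increments (e.g.\ $A_\lk(t)=\sum_{\tau=1}^{t}P_\lk(\tau)$), noting that for the piecewise-constant extensions this sum coincides with the integral of the increment process, and invoking the convergences~(\ref{eq:fluid_pi})--(\ref{eq:fluid_hp}). The service bounds~(\ref{eq:pp}),~(\ref{eq:hpp}) are inherited from the pre-limit inequalities $\Psi_\lk(t)\le\Pi_\lk(t)$ and $\hat{\Psi}_\lk(t)\le\hat{\Pi}_\lk(t)$; the scaled initial condition~(\ref{eq:init}) comes from the normalization~(\ref{eq:init_config}) together with the definition of $\|\Markov(t)\|$ (the $A(0)=0$ term drops out); and~(\ref{eq:pipi}) is nothing but the HQ-MWS rule $\hat{\Pi}_\lk(t)=\Pi_\lk(t)$ carried through the limit. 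Differentiating~(\ref{eq:qad}) and~(\ref{eq:hqad}) at regular times and substituting~(\ref{eq:ap})--(\ref{eq:dp}) then yields the flow-balance relations~(\ref{eq:dq1}) and~(\ref{eq:hdq1}).

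The substantive content is the complementarity pair~(\ref{eq:dq2}) and~(\ref{eq:hdq2}), which I would establish by the standard ``no wasted service'' argument. Fix a regular time $t_0$ with $q_\lk(t_0)>0$. By Lipschitz continuity, $q_\lk$ stays bounded away from $0$ on a neighborhood of $t_0$, so the unscaled queue length $Q^{(x_{n_j})}_\lk(x_{n_j}t)$ is of order $x_{n_j}$ and hence strictly positive at every slot of the corresponding pre-limit window, for all large $j$. A non-empty data queue that is activated transmits a full packet, so on that window $\Psi_\lk=\Pi_\lk$ slot-by-slot; passing to the limit gives $\psi_\lk=\pi_\lk$ there, and with~(\ref{eq:dq1}) this is exactly the top branch $\frac{d}{dt}q_\lk(t_0)=p_\lk(t_0)-\pi_\lk(t_0)$. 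For the boundary branch, at a regular time with $q_\lk(t_0)=0$ the non-negativity of the fluid queue forces $\frac{d}{dt}q_\lk(t_0)\ge0$, while flow balance~(\ref{eq:dq1}) together with the service bound~(\ref{eq:pp}) and the non-idling property (the activated server serves at full rate as long as fluid is present) pin the derivative down to $(p_\lk(t_0)-\pi_\lk(t_0))^+$. The shadow complementarity~(\ref{eq:hdq2}) is handled identically, the only difference being that shadow service is likewise capped by the activation indicator, so the same non-emptiness and non-idling reasoning applies verbatim.

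The step I expect to be the main obstacle is making the ``$q_\lk(t_0)>0$ implies no scheduled service is wasted'' claim rigorous and uniform along the subsequence: one must rule out, in the fluid scaling, an appreciable fraction of slots on which an activated queue is empty despite the fluid level being bounded away from zero. This is controlled by the u.o.c.\ convergence of the scaled queue-length process together with the unit per-slot service, but it is the one place where the combinatorial and probabilistic structure of the pre-limit enters rather than pure bookkeeping. Everything else reduces to linearity of limits and the elementary discrete identities above, closely paralleling Theorem~4.1 of~\cite{dai95} and Lemma~1 of~\cite{andrews04}.
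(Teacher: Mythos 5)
Your proposal is correct and takes essentially the same route as the paper's own proof: pass each exact pre-limit identity to the fluid limit via the u.o.c.\ convergences of Lemma~\ref{lem:fluid}, differentiate at regular times to get (\ref{eq:dq1}) and (\ref{eq:hdq1}), and read off (\ref{eq:init}) and (\ref{eq:pipi}) from the initial normalization and the HQ-MWS rule. The only difference is one of detail, not of method: the paper compresses the complementarity equations into the phrase that (\ref{eq:dq1}) ``can be rewritten as'' (\ref{eq:dq2}), whereas you correctly identify this as the one substantive step and supply the standard no-wasted-service justification (a positive fluid level makes the pre-limit queue of order $\xnj$, hence $\Psi_\lk = \Pi_\lk$ slot-by-slot on the window), which is precisely the argument, inherited from \cite{dai95}, that the paper leaves implicit.
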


\begin{proof}
\high{
Note that (\ref{eq:flt}) follows from the strong law of large numbers.
Eqs.~(\ref{eq:qad})-(\ref{eq:pp}) and (\ref{eq:hqad})-(\ref{eq:hpp}) are satisfied 
from the definitions. Since each of the limiting functions $q_\lk(t)$ is differentiable 
at any regular time $t\ge0$, (\ref{eq:dq1}) is satisfied from (\ref{eq:ap}) and 
(\ref{eq:dp}), by taking derivative of both sides of (\ref{eq:qad}). Similarly, 
(\ref{eq:hdq1}) is satisfied. Further, (\ref{eq:dq1}) and (\ref{eq:hdq1}) can be 
rewritten as (\ref{eq:dq2}) and (\ref{eq:hdq2}), respectively. Eq.~(\ref{eq:init}) 
is from the initial configuration (\ref{eq:init_config}), and (\ref{eq:pipi}) is 
due to the operations of HQ-MWS algorithm. 
}
\end{proof}

Due to the result of Lemma~\ref{lem:stab_cri}, we want to show 
that the stability criterion of (\ref{eq:stab_cri}) holds. Note 
that from system causality, we have $a_\lk(t) \le t \sum_s H^s_\lk 
\lambda_s + \sum_s \sum_h q_{s,h}(0)$ for all link $l \in \Edge$ 
and all $1 \le k \le L^{\max}$, for all $t \ge 0$. Then, we have 
\[
\begin{split}
\textstyle \lim_{j \rightarrow \infty} & \textstyle \frac {1} {\xnj} \| A^{(\xnj)} (\xnj t) \| \\
&\le \textstyle \sum_l \sum_k (t \sum_s H^s_\lk \lambda_s + \sum_s \sum_h q_{s,h}(0))
\end{split}
\]
almost surely, and thus,
\begin{equation}
\label{eq:A0}
\textstyle \lim_{j \rightarrow \infty} \frac {1} {\xnj} \left \lceil 
\frac {1} {\xnj t + 1} \| A^{(\xnj)} (\xnj t) \| \right \rceil = 0
\end{equation}
almost surely, for all $t \ge 0$. Therefore, it remains to be shown that 
the fluid limit model for the joint system of data queues and shadow
queues is stable (Lemma~\ref{lem:joint-stable}). Then, by uniform 
integrability of the sequence $\{ \frac {1} {x} \| \Markov^{(x)}(x T)\|, 
x=1,2,\cdots \}$ it implies that (\ref{eq:stab_cri}) holds. We divide the 
proof of Lemma~\ref{lem:joint-stable} into two parts: 1) in 
Lemma~\ref{lem:sub-stable}, we show that the sub-system consisting of 
shadow queues is stable; 2) in Lemma~\ref{lem:stable}, the sub-system 
consisting of data queues is stable. Before proving Lemmas~\ref{lem:sub-stable} 
and \ref{lem:stable}, we state and prove Lemmas~\ref{lem:ple} and \ref{lem:p}, 
which are used to prove Lemmas~\ref{lem:sub-stable} and \ref{lem:stable}, 
respectively.

The following lemma shows that the instantaneous shadow arrival 
rate is bounded in the fluid limit, and is used to show that the 
fluid limit model for the sub-system consisting of shadow queues 
is stable under HQ-MWS.

\begin{lemma}
\label{lem:ple}
For all (scaled) time $t>0$, and for all links $l \in \Edge$ and $1 \le 
k \le L^{\max}$, with probability one, the following inequality holds,
\begin{equation}
\label{eq:lk}
\textstyle \hat{p}_\lk(t) \le (1+\epsilon) \left( \sum_s \Hslk \lambda_s + \frac {1} {t} \right),
\end{equation}
and in particular,
\begin{equation}
\label{eq:l1}
\textstyle \hat{p}_{l,1}(t) = (1+\epsilon) \sum_s H^s_{l,1} \lambda_s.
\end{equation}
\end{lemma}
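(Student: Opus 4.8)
The plan is to first pin down the exact form of the fluid-limit shadow arrival rate $\hat{p}_\lk$ in terms of the fluid data-arrival process $a_\lk$, and then combine this with the causality bound already recorded above. Summing the shadow arrival rule (\ref{eq:shadowarr}) over slots gives $\hat{A}_\lk(t) = (1+\epsilon)\sum_{\tau=1}^{t} \frac{A_\lk(\tau)}{\tau}$. Applying this to the scaled process and dividing by $\xnj$,
\[
\textstyle \frac{1}{\xnj}\hat{A}^{(\xnj)}_\lk(\xnj t) = (1+\epsilon)\,\frac{1}{\xnj}\sum_{\tau=1}^{\lfloor \xnj t\rfloor}\frac{A^{(\xnj)}_\lk(\tau)}{\tau},
\]
and substituting $\tau = \xnj \sigma$ turns the right-hand side into a Riemann sum $\sum_\sigma \frac{1}{\xnj}\cdot\frac{1}{\sigma}\cdot\frac{A^{(\xnj)}_\lk(\xnj\sigma)}{\xnj}$ for $\int_0^t \frac{a_\lk(\sigma)}{\sigma}\,d\sigma$. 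Invoking the u.o.c.\ convergence $\frac{1}{\xnj}A^{(\xnj)}_\lk(\xnj\cdot)\to a_\lk(\cdot)$ from (\ref{eq:fluid_a}), I would conclude $\hat{a}_\lk(t) = (1+\epsilon)\int_0^t \frac{a_\lk(\sigma)}{\sigma}\,d\sigma$. Since $a_\lk$ is continuous, the map $\sigma\mapsto a_\lk(\sigma)/\sigma$ is continuous on $(0,\infty)$, so differentiating via the fundamental theorem of calculus yields
\[
\textstyle \hat{p}_\lk(t) = (1+\epsilon)\,\frac{a_\lk(t)}{t}, \quad t>0,
\]
which, incidentally, holds at \emph{every} $t>0$ rather than only at regular times, matching the statement of the lemma.

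The main obstacle I anticipate is justifying the passage from the scaled sum to the integral near the origin, where the weight $1/\sigma$ blows up. The key observation is that $a_\lk$ is Lipschitz continuous with $a_\lk(0)=0$ (Lemma~\ref{lem:fluid}), so $a_\lk(\sigma)/\sigma$ is bounded by the Lipschitz constant on $(0,t]$; hence the integrand is integrable, the contribution of the initial interval $[0,1/\xnj]$ vanishes as $\xnj\to\infty$, and the Riemann-sum convergence can be made uniform on $[\delta,t]$ for each $\delta>0$ using the u.o.c.\ convergence together with the boundedness of $1/\sigma$ away from $0$. This is precisely where the Lipschitz continuity built into Lemma~\ref{lem:fluid} is needed.

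With the identity $\hat{p}_\lk(t) = (1+\epsilon)\,a_\lk(t)/t$ in hand, both claims follow directly. For (\ref{eq:lk}), I would substitute the causality bound $a_\lk(t)\le t\sum_s \Hslk \lambda_s + \sum_s\sum_h q_{s,h}(0)$ stated just above the lemma, and use that each packet resides in exactly one per-hop queue so that $\sum_s\sum_h q_{s,h}(0)=\|q(0)\|\le 1$ by the initial condition (\ref{eq:init}); dividing by $t$ then gives $\hat{p}_\lk(t)\le (1+\epsilon)\big(\sum_s \Hslk \lambda_s + \tfrac{1}{t}\big)$. For (\ref{eq:l1}), the first-hop arrivals are purely exogenous, $A_{l,1}(t)=\sum_s H^s_{l,1}F_s(t)$, so by the SLLN (\ref{eq:flt}) the fluid limit is exact, $a_{l,1}(t)=t\sum_s H^s_{l,1}\lambda_s$, and plugging into the identity gives the equality $\hat{p}_{l,1}(t)=(1+\epsilon)\sum_s H^s_{l,1}\lambda_s$.
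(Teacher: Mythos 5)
Your overall route is sound in outline and genuinely different in organization from the paper's: you first derive the closed-form fluid identity $\hat{a}_\lk(t) = (1+\epsilon)\int_0^t a_\lk(\sigma)/\sigma \, d\sigma$, hence $\hat{p}_\lk(t)=(1+\epsilon)a_\lk(t)/t$, and only then apply the causality bound and the initial condition (\ref{eq:init}); given that identity, your derivations of (\ref{eq:lk}) and (\ref{eq:l1}) are correct. The paper never forms this identity: it computes $\hat{p}_\lk(t)$ at a regular time $t>0$ as the double limit $\lim_{\delta\to0}\lim_{j\to\infty}\frac{1}{\delta\xnj}\sum_{\tau=\lceil t\xnj\rceil}^{\lfloor(t+\delta)\xnj\rfloor}\hat{P}_\lk(\tau)$, substitutes the discrete causality bound $A_\lk(\tau)\le\sum_s\Hslk F_s(\tau)+\sum_i\sum_h Q_{i,h}(0)$ inside the sum, and evaluates the pieces with the SLLN, a Ces\`{a}ro-mean lemma (Lemma~\ref{lem:conv}), and the harmonic-sum estimate (\ref{eq:taulimit}) together with L'Hospital's rule. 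A concrete advantage of that structure, which your route gives up, is that every summation index satisfies $\tau\ge\lceil t\xnj\rceil$, so the weights $1/\tau$ are uniformly $O(1/(t\xnj))$ and no singularity at the time origin ever appears.

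That singularity is exactly where your proposal has a genuine gap. You control the near-origin part of the Riemann-sum convergence by the bound $a_\lk(\sigma)/\sigma\le L$ coming from Lipschitz continuity of $a_\lk$ and $a_\lk(0)=0$. But this bounds only the \emph{limiting} integrand; the pre-limit sum involves the discrete ratios $A^{(\xnj)}_\lk(\tau)/\tau$, and u.o.c.\ convergence yields only $A^{(\xnj)}_\lk(\tau)\le \xnj\bigl(a_\lk(\tau/\xnj)+\eta_j\bigr)$, where $\eta_j$ is the uniform error on $[0,t]$ and tends to $0$ at an \emph{unspecified} rate. Over the range $1\le\tau\le\xnj\delta$ the harmonic weights amplify this error to roughly $\frac{1}{\xnj}\sum_{\tau\le \xnj\delta}\frac{\xnj\eta_j}{\tau}\approx \eta_j\log(\xnj\delta)$, which need not vanish since u.o.c.\ convergence carries no rate. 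So ``Lipschitz continuity of $a_\lk$ plus u.o.c.\ convergence'' does not close the step, and your identity is not yet established. Two standard repairs exist: (i) use a bound on the pre-limit ratios that is uniform in $j$, namely $A_\lk(\tau)/\tau \le |\Edge| + \sum_s F_s(\tau)/\tau$ (finite link capacities plus the SLLN), which is precisely the bound (\ref{eq:hp_bound}) the paper invokes for uniform integrability; this makes the $[0,\delta]$ contribution $O(\delta)$ uniformly in $j$. Or (ii) invoke the Lipschitz continuity of $\hat{a}_\lk$ (not $a_\lk$), also asserted in Lemma~\ref{lem:fluid}: since $\hat{a}_\lk(0)=0$, the near-origin shadow mass satisfies $\hat{a}_\lk(\delta)=O(\delta)$, so you can prove your identity on $[\delta,t]$ --- where your uniform-convergence argument is valid because $1/\sigma\le 1/\delta$ --- and then let $\delta\downarrow0$. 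With either repair in place, the rest of your argument goes through and yields the lemma.
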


\begin{proof}
We start by stating the following lemma, which will be used to prove
Lemma~\ref{lem:ple}. 

\begin{lemma}
\label{lem:conv}
If a sequence $\{F(n), n=1,2,\cdots\}$ satisfies $\lim_{n \rightarrow
\infty} F(n) = f$, then the following holds,
\[
\textstyle \lim_{n \rightarrow \infty} \frac {\sum^{n}_{\tau=1} F(\tau)} {n} = f.
\]
\end{lemma}

\begin{proof}
We want to show that, for any $\epsilon_1 > 0$, there exists an $N < \infty$ 
such that $\left |\frac {\sum^{n}_{\tau=1} F(\tau)} {n} - f \right| < \epsilon_1$, 
for all $n \ge N$. 

Since $\lim_{n \rightarrow \infty} F(n) = f$, then for any $\epsilon_1 > 0$, 
there exists a $N_1 < \infty$ such that $\left|F(n) - f \right| < \frac 
{\epsilon_1} {3}$, for all $n \ge N_1$. Letting $N = \max \left\{N_1, \frac 
{3(N_1-1)f} {\epsilon_1}, \frac {3\sum^{N_1-1}_{\tau=1} F(\tau)} {\epsilon_1} 
\right\}$, then for all $n \ge N$, we have
\begin{equation}
\begin{split}
\left|\frac {\sum^{n}_{\tau=1} F(\tau)} {n} - f \right|
&= \left|\frac {\sum^{N_1-1}_{\tau=1} F(\tau)} {n} + 
\frac {\sum^{n}_{\tau=N_1} F(\tau)} {n} - f \right| \\
&\le \frac {\epsilon_1} {3} + \left|\frac {\sum^{n}_{\tau=N_1} F(\tau)} {n} 
- \frac {n-N_1+1} {n} f \right| + \left|\frac {N_1-1} {n} f \right| \\
&< \frac {\epsilon_1} {3} + \frac {n-N_1+1} {n} \frac {\epsilon_1} {3} 
+ \frac {\epsilon_1} {3} \le \epsilon_1.
\end{split}
\end{equation}
\end{proof}


Now, we prove Lemma~\ref{lem:ple}. Note that we have
\begin{equation}
\label{eq:causality}
\textstyle A_\lk(t) \le \sum_{s \in \Flow} \Hslk F_s(t) + 
\sum_{i \in \Edge} \sum^{L^{\max}}_{h=1} Q_{i,h}(0),
\end{equation}
for any $t > 0$ and for any link $l \in \Edge$ and $1 \le k 
\le L^{\max}$ due to system causality.

Since the arrival processes satisfy SLLN of type (\ref{eq:slln}),
we obtain from Lemma~\ref{lem:conv} that with probability one,
\begin{equation}
\label{eq:slimit}
\textstyle \lim_{n \rightarrow \infty} \frac {\sum^{n}_{\tau=1} \frac 
{F_s(\tau)} {\tau}} {n} = \lambda_s, ~\text{for all}~ s \in \Flow.
\end{equation}

Note that we will omit the superscript $(\xnj)$ of the random 
variables (depending on the choice of the sequence $\{\xnj\}$) 
throughout the rest of the proof for notational convenience (e.g., 
we use $A_\lk(t)$ to denote $A^{(\xnj)}_\lk(t)$). Then, for all 
regular time $t > 0$, all links $l \in \Edge$ and $1 \le k \le 
L^{\max}$, we have
\[
\begin{split}
&\hat{p}_\lk(t) \\
&= \frac{d}{dt} \textstyle \int_0^t \hat{p}_\lk(\tau) d\tau 
= \lim_{\delta \rightarrow 0} \frac{\textstyle \int_0^{t+\delta} \hat{p}_\lk(\tau) d\tau -
\textstyle \int_0^t \hat{p}_\lk(\tau) d\tau} {\delta} \\
&\stackrel{(\ref{eq:fluid_hp})}= \lim_{\delta \rightarrow 0} \lim_{j \rightarrow \infty} 
\frac{ \sum^{\lfloor (t+\delta)\xnj \rfloor}_{\tau=\lceil t\xnj \rceil} \hat{P}_\lk(\tau)} {\delta \xnj} \\
&\stackrel{(\ref{eq:shadowarr})}= (1+\epsilon) \lim_{\delta \rightarrow 0} \lim_{j \rightarrow \infty} 
\frac{ \sum^{\lfloor (t+\delta)\xnj \rfloor}_{\tau=\lceil t\xnj \rceil} \frac {A_\lk(\tau)} {\tau}} {\delta \xnj} \\
&\stackrel{(\ref{eq:causality})}\le (1+\epsilon) \lim_{\delta \rightarrow 0} \lim_{j \rightarrow \infty} 
\frac{ \sum^{\lfloor (t+\delta)\xnj \rfloor}_{\tau=\lceil t\xnj \rceil} \frac {\sum_s 
\Hslk F_s(\tau) + \sum_i \sum_h Q_{i,h}(0)} {\tau}} {\delta \xnj} \\
&= (1+\epsilon ) \sum_s H^s_\lk \lim_{\delta \rightarrow 0} \lim_{j \rightarrow \infty} 
\frac{ \sum^{\lfloor (t+\delta )\xnj \rfloor}_{\tau=1} \frac {F_s(\tau )} {\tau}} {\lfloor(t+\delta ) 
\xnj\rfloor} \cdot \frac {\lfloor(t+\delta ) \xnj \rfloor} {\delta \xnj} \\
&~~~- (1+\epsilon ) \sum_s H^s_\lk \lim_{\delta \rightarrow 0} \lim_{j \rightarrow \infty}
\frac {\sum^{\lceil t\xnj \rceil - 1}_{\tau=1} \frac {F_s(\tau )} {\tau}} {\lceil t \xnj \rceil - 1} 
\cdot \frac {\lceil t \xnj \rceil - 1} {\delta \xnj} \\
&~~~+ (1+\epsilon) \lim_{\delta \rightarrow 0} \lim_{j \rightarrow \infty} 
\frac {\sum_i \sum_h Q_{i,h}(0)} {\delta \xnj} \cdot \sum^{\lfloor (t+\delta)\xnj \rfloor}_{\tau=\lceil t\xnj \rceil} \frac {1}  {\tau}  \\
&\le (1+\epsilon) \sum_s H^s_\lk \lambda_s \lim_{\delta \rightarrow 0} \left(  
\frac {t+\delta} {\delta} - \frac {t} {\delta} \right) + (1+\epsilon) \frac {1} {t} \\
&= (1+\epsilon) \left( \sum_s H^s_\lk \lambda_s + \frac {1} {t} \right),
\end{split}
\]
where in the last inequality, the first term is from (\ref{eq:slimit}), 
and the second term is from the fact that: i) $\|q(0)\| + \|\hat{q}(0)\| \le 1$ 
implies $\lim_{j \rightarrow \infty} \frac {\sum_j \sum_h Q_{j,h}(0)} {\xnj} \le 1$; 
and ii) 
\begin{equation}
\label{eq:taulimit}
\begin{split}
& \lim_{j \rightarrow \infty} \int^{\lfloor (t+\delta)\xnj \rfloor}_{\tau=\lceil t\xnj \rceil}
\frac {1} {\tau+1} d\tau 
\le \lim_{j \rightarrow \infty} 
\sum^{\lfloor (t+\delta)\xnj \rfloor}_{\tau=\lceil t\xnj \rceil} \frac {1}  {\tau} 
\le \lim_{j \rightarrow \infty} \int^{\lfloor (t+\delta)\xnj \rfloor}_{\tau=\lceil t\xnj \rceil}
\frac {1} {\tau} d\tau \\
\Longleftrightarrow~~~& \lim_{j \rightarrow \infty} \log \left( \frac {\lfloor (t+\delta)\xnj \rfloor+1}
{\lceil t\xnj \rceil+1} \right) 
\le \lim_{j \rightarrow \infty} 
\sum^{\lfloor (t+\delta)\xnj \rfloor}_{\tau=\lceil t\xnj \rceil} \frac {1}  {\tau} 
\le \lim_{j \rightarrow \infty} \log \left( \frac {\lfloor (t+\delta)\xnj \rfloor}
{\lceil t\xnj \rceil} \right) \\
\Longleftrightarrow~~~& \lim_{j \rightarrow \infty} 
\sum^{\lfloor (t+\delta)\xnj \rfloor}_{\tau=\lceil t\xnj \rceil} \frac {1}  {\tau} 
= \log \frac {t+\delta} {t}.
\end{split}
\end{equation}
Combining i) and ii), we have
\[
\begin{split}
&\textstyle \lim_{\delta \rightarrow 0} \lim_{j \rightarrow \infty}
\frac {\sum_i \sum_h Q_{i,h}(0)} {\delta \xnj} \cdot 
\sum^{\lfloor (t+\delta)\xnj \rfloor}_{\tau=\lceil t\xnj \rceil} \frac {1} {\tau} \\
&~~~~~\textstyle \le \lim_{\delta \rightarrow 0} \left( \frac {1} {\delta} \cdot \log \frac 
{t+\delta} {t} \right)= \frac {1} {t},
\end{split}
\]
where the equality is from the L'Hospital's Rule.

So far, we have shown (\ref{eq:lk}). Note that when $k=1$, Eq.~(\ref{eq:causality}) 
reduces to $A_{l,1}(t) = \sum_{s \in \Flow} H^s_{l,1} F_s(t)$. Then, in the above 
derivation of $\hat{p}_\lk(t)$, the first inequality (which follows from 
(\ref{eq:causality})) becomes an equality and the right-hand side of this inequality 
becomes 
\[
\textstyle (1+\epsilon) \lim_{\delta \rightarrow 0} \lim_{j \rightarrow \infty} 
\frac{ \sum^{\lfloor (t+\delta)\xnj \rfloor}_{\tau=\lceil t\xnj \rceil} 
\frac {\sum_s H^s_{l,1} F_s(\tau)} {\tau}} {\delta \xnj}.
\]
Hence, we obtain (\ref{eq:l1}).
\end{proof}

\emph{Remark:} Lemma~\ref{lem:ple} holds when the exogenous arrival 
processes satisfy the SLLN, and the shadow arrivals are controlled as 
in (\ref{eq:shadowarr}). Note that Lemma~\ref{lem:ple} does not hold 
for data queues $Q_\lk$, since the data arrival processes do not satisfy
(\ref{eq:shadowarr}) due to their dependency on the service of the 
previous hop queues. Lemma~\ref{lem:ple} is important to proving the 
stability of the shadow queues, and implies that in the fluid limit 
model, the instantaneous arrival rate of shadow queues is strictly 
inside the optimal throughput region $\Lambda^*$ after a finite time. 


Then, in the following lemma, we show that the fluid limit model for 
the sub-system consisting of shadow queues is stable\footnote{Similar 
to \cite{andrews04}, we consider a weaker criterion for the stability 
of the fluid limit model in Lemma~\ref{lem:sub-stable}, which can imply 
the stability of the original system from Lemma~\ref{lem:stab_cri}.} 
under HQ-MWS.

\begin{lemma}
\label{lem:sub-stable}
The fluid limit model for the sub-system of shadow queues $\hat{q}$ 
operating under HQ-MWS satisfies that: For any $\zeta>0$, there exists 
a finite $T_1 > 0$ such that for any fluid model solution with 
$\|\hat{q}(0)\| \le 1$, we have that with probability one,
\[
\textstyle \|\hat{q}(t)\| \le \zeta, ~\text{for all}~ t \ge T_1,
\]
for any arrival rate vector strictly inside $\Lambda^{*}$. 
\end{lemma}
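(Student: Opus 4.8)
The plan is to run the standard Lyapunov-drift argument for MaxWeight, specialized to the fact that the shadow queues carry only single-hop traffic (so that the clean arrival bound of Lemma~\ref{lem:ple} applies). I would work throughout with a fluid model solution satisfying the equations of Lemma~\ref{lem:fluid_eq}, and take as Lyapunov function the quadratic form $V(t) \triangleq \frac{1}{2}\sum_{l\in\Edge}\sum_{k=1}^{L^{\max}} \hat{q}_\lk(t)^2$. The goal is to show that, once time is large enough, $V$ has a drift bounded above by $-c\sqrt{V}$ for some $c>0$; this forces $\|\hat{q}\|$ to reach $0$ in finite time and remain there.

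First I would compute the drift at a regular time $t$. Using (\ref{eq:hdq1}) together with the fact that $\hat{\psi}_\lk(t)=\hat{\pi}_\lk(t)$ whenever $\hat{q}_\lk(t)>0$ (by comparing (\ref{eq:hdq1}) and (\ref{eq:hdq2})), the terms with $\hat{q}_\lk=0$ drop out and
\[
\textstyle \frac{d}{dt}V(t) = \sum_{l,k}\hat{q}_\lk(t)\,\hat{p}_\lk(t) - \sum_{l,k}\hat{q}_\lk(t)\,\hat{\pi}_\lk(t).
\]
For the service term I would invoke the defining property of HQ-MWS: only the queue $k^*(l)\in\argmax_k \hat{q}_\lk$ is served at each link, and the activated links maximize $\sum_l \hat{q}_{l,k^*(l)}M_l=\sum_l (\max_k \hat{q}_\lk)M_l$ over $M\in\Matching$. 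Since $\hat{\pi}_\lk=0$ for $k\neq k^*(l)$, the link-level service rates $\mu_l\triangleq\hat{\pi}_{l,k^*(l)}$ form a point of $Co(\Matching)$ that maximizes $\sum_l(\max_k\hat{q}_\lk)\mu_l$, so $\sum_{l,k}\hat{q}_\lk\hat{\pi}_\lk \ge \sum_l (\max_k \hat{q}_\lk)\phi_l$ for every $\phi\in Co(\Matching)$. For the arrival term I would use Lemma~\ref{lem:ple}: writing $\hat{\nu}_\lk \triangleq (1+\epsilon)\sum_s \Hslk \lambda_s$, it gives $\hat{p}_\lk(t)\le \hat{\nu}_\lk + (1+\epsilon)/t$, and the crucial aggregation step that exploits the per-hop structure is $\sum_{l,k}\hat{q}_\lk \hat{\nu}_\lk \le \sum_l (\max_k \hat{q}_\lk)\sum_k \hat{\nu}_\lk$, bounding each per-hop queue by the maximum at its link.

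Because $\lambda$ is strictly inside $\Lambda^{*}$ and $\epsilon$ is chosen so that $(1+\epsilon)\lambda$ is also strictly inside $\Lambda^{*}$, there exist $\phi\in Co(\Matching)$ and $\delta>0$ with $\sum_k \hat{\nu}_\lk \le \phi_l - \delta$ for all $l$. Combining the two bounds, and picking $T_0$ large enough that the residual $(1+\epsilon)/t$ contribution is dominated by $\delta/2$, I obtain
\[
\textstyle \frac{d}{dt}V(t) \le -\frac{\delta}{2}\sum_l \max_k \hat{q}_\lk(t) \le -c\sqrt{V(t)}, \quad t\ge T_0,
\]
where the last step uses $\sum_l\max_k\hat{q}_\lk \ge \max_{l,k}\hat{q}_\lk \ge \sqrt{2V/(|\Edge|L^{\max})}$. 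Since $\|\hat{q}(0)\|\le 1$ and arrivals are bounded, $V(T_0)$ is finite, and $\frac{d}{dt}\sqrt{V}\le -c/2$ drives $V$ to $0$ in finite time $T_1$; the same bound (whose right side vanishes at $\hat{q}=0$) keeps $\hat{q}$ at $0$ afterward, so $\|\hat{q}(t)\|\le\zeta$ for all $t\ge T_1$. The main obstacle I expect is twofold: rigorously transferring the discrete MaxWeight optimality to the fluid service rates $\hat{\pi}_\lk$ (handling ties in $\argmax_k$ and the fact that $\max_k\hat{q}_\lk$ is only piecewise smooth), and handling the transient $(1+\epsilon)/t$ term from Lemma~\ref{lem:ple}, which makes the drift negative only after a finite time and thus forces the two-phase (bounded growth, then finite-time decay) structure rather than a single uniform drift argument.
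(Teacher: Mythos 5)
Your proposal is correct and follows essentially the same route as the paper's proof: a quadratic Lyapunov function on the shadow queues, the arrival bound of Lemma~\ref{lem:ple} with a finite waiting time to absorb the $(1+\epsilon)/t$ transient, and the MaxWeight property transferred to the fluid service rates, with the per-hop terms aggregated through $\max_k \hat{q}_\lk$ at each link. The only (cosmetic) differences are that the paper packages your aggregation step by explicitly constructing per-hop targets $\phi_\lk$ with $\sum_k \phi_\lk = \phi_l$ and $\hat{p}_\lk < \phi_\lk$, and it concludes via a level-set drift condition ($\hat{V}\ge\zeta_1$ implies drift $\le -\zeta_2$) rather than your $-c\sqrt{V}$ differential inequality, both of which yield the same finite-time bound.
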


\begin{proof}
Suppose $\lambda$ is strictly inside $\Lambda^{*}$, we can find a small $\epsilon>0$ 
such that $(1+\epsilon)\lambda$ is strictly inside $\Lambda^{*}$. Then, there 
exists a vector $\phi \in Co(\Matching)$ such that $(1+\epsilon)\lambda < \phi$, 
i.e., $(1+\epsilon)\sum_s \sum_k \Hslk \lambda_s < \phi_l$, for all $l\in\Edge$. 
Let $\beta$ denote the smallest difference between the two vectors, which is defined 
as $\beta \triangleq \min_{l\in\Edge} (\phi_l - (1+\epsilon) \sum_s \sum_k \Hslk \lambda_s)$.
Clearly, we have $\beta>0$. Let $T^{\prime}$ be a finite time such that $T^{\prime} > \frac 
{(1+\epsilon)L^{\max}} {\beta}$, then we have $(1+\epsilon) \left( \sum_s \sum_k 
\Hslk \lambda_s + \frac {L^{\max}} {T^{\prime}} \right)< \phi_l$. Let $\phi_\lk 
\triangleq (1+\epsilon) \left(\sum_s \Hslk \lambda_s + \frac {1} {T^{\prime}} \right)
+ \frac {\phi_l - (1+\epsilon) \left( \sum_k \sum_s \Hslk \lambda_s + \frac {L^{\max}} 
{T^{\prime}} \right) } {L^{\max}}$.
Then, we have
\begin{equation}
\label{eq:phiphi}
\textstyle \sum_k \phi_\lk = \phi_l,
\end{equation}
and from (\ref{eq:lk}), we have
\begin{equation}
\label{eq:lp}
\textstyle \hat{p}_\lk(t) \le (1+\epsilon) \left(\sum_s \Hslk \lambda_s 
+ \frac {1} {T^{\prime}} \right) < \phi_\lk,
\end{equation}
for all regular time $t \ge T^{\prime}$. This implies that the 
instantaneous arrival rate of shadow queues is strictly inside 
the optimal throughput region $\Lambda^*$.

We consider a quadratic-form Lyapunov function $\hat{V}(\hat{q}(t)) = \frac 1 2 
\sum_l \sum_k (\hat{q}_\lk(t))^2$. It is sufficient to show that for any $\zeta_1 
> 0$, there exist $\zeta_2>0$ and a finite time $T^*>0$ such that at any regular 
time $t \ge T^*$, $\hat{V}(\hat{q}(t)) \ge \zeta_1$ implies $\frac{D^+}{dt^+} 
\hat{V}(\hat{q}(t)) \le -\zeta_2$. Since $\hat{q}(t)$ is differentiable for any 
regular time $t \ge T^{\prime}$, we can obtain the derivative of $\hat{V}(\hat{q}(t))$ as
\begin{equation}
\label{eq:dol}
\begin{split}
\textstyle \frac{D^+}{dt^+} \hat{V}(\hat{q}(t))
=&\textstyle \sum_l \sum_k \hat{q}_\lk(t) \cdot \left(\hat{p}_\lk(t) - \hat{\pi}_\lk(t)\right)  \\
=&\textstyle  \sum_l \sum_k \hat{q}_\lk(t) \cdot \left(\hat{p}_\lk(t) - \phi_\lk\right) \\
&\textstyle + \sum_l \sum_k \hat{q}_\lk(t) \cdot \left(\phi_\lk - \hat{\pi}_\lk(t)\right),
\end{split}
\end{equation}
where $\frac{D^+}{dt^+} \hat{V}(\hat{q}(t)) = \lim_{\delta \downarrow 0} \frac 
{\hat{V}(\hat{q}(t+\delta)) - \hat{V}(\hat{q}(t))} {\delta}$, and the first 
equality is from (\ref{eq:hdq2}).

Let us choose $\zeta_3>0$ such that $\hat{V}(\hat{q}(t)) \ge \zeta_1$ implies 
$\max_{l \in \Edge, 1 \le k \le L^{\max}} \hat{q}_\lk(t) \ge \zeta_3$. Then in 
the final result of (\ref{eq:dol}), we can conclude that the first term is 
bounded. That is,
\[
\begin{split}
&\textstyle \sum_l \sum_k \hat{q}_\lk(t) \cdot \left(\hat{p}_\lk(t) - \phi_\lk\right)
\le -\zeta_3 \min_{l,k} (\phi_\lk - \hat{p}_\lk(t)) \\
& \textstyle \le -\zeta_3 \min_{l,k} (\phi_\lk - (1+\epsilon) (\sum_s \Hslk \lambda_s 
+ \frac {1} {T^{\prime}}) ) \triangleq -\zeta_2 < 0,
\end{split}
\]
where the second inequality is from (\ref{eq:lp}). For the second term, 
since HQ-MWS chooses schedules that maximize the shadow queue length 
weighted rate, the service rate satisfies that
\begin{equation}
\label{eq:piphi}
\textstyle \hat{\pi}(t) \in \argmax_{\phi \in 
Co(\Matching)} \sum_l \hat{q}_{l,k^*(l)}(t) \cdot \phi_l,
\end{equation}
where i) $\hat{q}_{l,k^*(l)}(t) = \max_k \hat{q}_\lk(t)$, and ii) $\hat{\pi}_l(t) = 
\sum_k \hat{\pi}_\lk(t)$ with $\hat{\pi}_\lk(t)=0$ when $\hat{q}_\lk(t) < 
\hat{q}_{l,k^*(l)}(t)$. This implies that
$\sum_l \sum_k  \hat{q}_\lk(t) \cdot \phi_\lk 
\le \sum_l \sum_k \hat{q}_{l,k^*}(t) \cdot \phi_\lk  
= \sum_l \hat{q}_{l,k^*}(t) \cdot \phi_l
\le \sum_l \hat{q}_{l,k^*}(t) \cdot \hat{\pi}_l(t)  
= \sum_l \sum_k \hat{q}_\lk(t) \cdot \hat{\pi}_\lk(t)$, 
for all $\phi \in Co(\Matching)$, where the first equality and the second
inequality are from (\ref{eq:phiphi}) and (\ref{eq:piphi}), respectively. 
Then, we obtain that the second term of (\ref{eq:dol}) is non-positive. 
This shows that $\hat{V}(\hat{q}(t)) \ge \zeta_1$ implies $\frac{D^+}{dt^+} 
\hat{V}(\hat{q}(t)) \le -\zeta_2$ for all regular time $t \ge T^*$. Hence, 
it immediately follows that for any $\zeta>0$, there exists a finite $T_1 
\ge T^* > 0$ such that $\|\hat{q}(t)\| \le \zeta$, for all $t \ge T_1$.
\end{proof}

We next present Lemma~\ref{lem:p} that is used to show that the sub-system 
consisting of data queues is stable under HQ-MWS in the fluid limit model. 
\begin{lemma}
\label{lem:p}
If data queues $q_{l,j}$ are stable for all $l \in \Edge$ and for all $j \le k$, 
then there exists a finite $T^k_1>0$ such that for all regular time $t\ge T^k_1$ 
and for all $l \in \Edge$, we have that with probability one,
\[
\textstyle \hat{p}_{l,k+1}(t) \ge (1+\epsilon) \sum_s H^s_{l,k+1} \lambda_s.
\]
\end{lemma}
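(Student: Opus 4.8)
The plan is to prove this lower bound on the fluid shadow arrival rate $\hat{p}_{l,k+1}(t)$ by combining a flow-conservation identity with the stability hypothesis, and then feeding the resulting arrival rate into the same double-limit computation already used in the proof of Lemma~\ref{lem:ple}. First I would establish a flow-conservation identity in the fluid limit. Writing $q^{(s)}_{l^s_j,j}(t)$ for the amount of flow-$s$ fluid held in the $j$-th hop queue on the route of flow $s$, the fluid that has entered the $(k+1)$-st hop of flow $s$ is exactly what has been injected minus what is still backlogged in the first $k$ hops, i.e. $u_{s,k+1}(t) = f_s(t) - \sum_{j=1}^{k} q^{(s)}_{l^s_j,j}(t)$, obtained from $u_{s,1}(t)=f_s(t)$ and telescoping the per-hop balances $q^{(s)}_{l^s_j,j}(t) = u_{s,j}(t) - u_{s,j+1}(t)$. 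Since $f_s(t)=\lambda_s t$ by (\ref{eq:flt}) and the $q^{(s)}$ terms are nonnegative, this already yields $u_{s,k+1}(t)\le\lambda_s t$; the role of the hypothesis is to force equality.

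Because $q_{l,j}$ is stable for every $l\in\Edge$ and every $j\le k$, the fluid data queue $q_{l,j}(t)$ reaches $0$ within a finite time and, by the strict service margin that established its stability, remains at $0$ thereafter; let $T^k_1$ be a common such time over the finitely many queues involved. For $t\ge T^k_1$ each total queue $q_{l^s_j,j}(t)=0$, which forces each nonnegative per-flow component $q^{(s)}_{l^s_j,j}(t)=0$, so $u_{s,k+1}(t)=\lambda_s t$ and hence $a_{l,k+1}(t)=\sum_s H^s_{l,k+1}\,u_{s,k+1}(t)=\sum_s H^s_{l,k+1}\lambda_s\,t$ for all $t\ge T^k_1$. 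I would then convert this statement about $a_{l,k+1}$ into one about $\hat{p}_{l,k+1}$ by repeating the chain of identities in the proof of Lemma~\ref{lem:ple}, which reduces $\hat{p}_{l,k+1}(t)$ to $(1+\epsilon)$ times a Cesàro-type limit of $A_{l,k+1}(\tau)/\tau$. The one departure from Lemma~\ref{lem:ple} is that $A_{l,k+1}$ is not exogenous, so its long-run rate is not immediate from the SLLN; instead I would substitute $A_{l,k+1}(\tau)=\sum_s H^s_{l,k+1}F_s(\tau)-\sum_s H^s_{l,k+1}\sum_{j\le k}Q^{(s)}_{l^s_j,j}(\tau)$, invoke the SLLN (\ref{eq:slln}) on the first term and the vanishing (under fluid scaling) of the backlog term from the previous step, and apply Lemma~\ref{lem:conv} exactly as before. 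This gives $\hat{p}_{l,k+1}(t)=(1+\epsilon)\,a_{l,k+1}(t)/t = (1+\epsilon)\sum_s H^s_{l,k+1}\lambda_s$ for all regular $t\ge T^k_1$, which is the claimed inequality (in fact an equality).

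The main obstacle is the one-sided fragility of the bound, reflected in $a_{l,k+1}(t)/t = \sum_s H^s_{l,k+1}\lambda_s - t^{-1}\sum_s H^s_{l,k+1}\sum_{j\le k} q^{(s)}_{l^s_j,j}(t)$: any leftover upstream backlog strictly \emph{lowers} the shadow arrival rate, so a merely small but positive fluid backlog would yield only the wrong-direction inequality $\hat{p}_{l,k+1}(t)\le(1+\epsilon)\sum_s H^s_{l,k+1}\lambda_s$. The crux is therefore to argue carefully that stability of $q_{l,j}$ in this fluid model means \emph{exact} emptiness after a finite time, not just asymptotic smallness, so that the deficit term is identically zero for $t\ge T^k_1$. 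This is precisely what the strict rate margin in the inductive stability argument provides, and it is what distinguishes this lower bound from the harmless initial-condition slack $+\,1/t$ appearing in the companion upper bound of Lemma~\ref{lem:ple}.
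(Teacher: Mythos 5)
Your proposal is correct and follows essentially the same route as the paper's proof: both lower-bound $A_{l,k+1}(\tau)$ via flow conservation (exogenous arrivals minus upstream backlog), use the fluid-limit emptiness of the upstream queues after a finite time $T^k_1$ together with u.o.c.\ convergence to make that backlog term $o(x_{n_j})$ over the relevant time window, and then rerun the double-limit computation of Lemma~\ref{lem:ple} using the SLLN and Lemma~\ref{lem:conv}. The only (harmless) discrepancy is your claim of equality: your conservation identity drops the initial upstream backlog $\sum_{j\le k}Q^{(s)}_{l^s_j,j}(0)$, so in general one only obtains $(1+\epsilon)\sum_s H^s_{l,k+1}\lambda_s \le \hat{p}_{l,k+1}(t) \le (1+\epsilon)\left(\sum_s H^s_{l,k+1}\lambda_s + \frac{1}{t}\right)$, which is exactly the one-sided bound the lemma requires.
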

The proof follows a similar argument used in the proof for Lemma~\ref{lem:ple},
and is referred to Appendix~\ref{app:lem:p}.

In the following lemma, using a hop-by-hop inductive argument, we show 
that the fluid model for the sub-system of data queues is stable. 

\begin{lemma}
\label{lem:stable}
The fluid limit model of the sub-system of data queues $q$ operating 
under HQ-MWS is stable, i.e., there exists a finite $T_2 > 0$ such 
that, for any fluid model solution with $\|q(0)\| \le 1$, we have
\[
\textstyle \|q(t)\| = 0, ~\text{for all}~ t \ge T_2,
\]
for any arrival rate vector strictly inside $\Lambda^{*}$. 
\end{lemma}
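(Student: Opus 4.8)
The plan is to prove Lemma~\ref{lem:stable} by induction on the hop index $k$, draining the fluid data queues hop-by-hop: for each $k=1,\dots,L^{\max}$ I would produce a finite time after which $q_\lk(t)=0$ for every link $l\in\Edge$. Write $\rho_\lk\triangleq\sum_s\Hslk\lambda_s$ for the nominal fluid load of the $k$-th hop queue at link $l$. The engine is the identity $\pi_\lk(t)=\hat\pi_\lk(t)$ from (\ref{eq:pipi}): a data queue and its shadow queue receive the \emph{same} service, so the service guarantee already proved for the shadow layer can be imported directly. The first thing I would record is that, by Lemma~\ref{lem:sub-stable}, every shadow fluid queue is driven to $0$ in finite time and held there (the drift in its proof is in fact proportional to $\|\hat q\|_1$, which yields finite extinction rather than merely $\|\hat q(t)\|\le\zeta$). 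Consequently, for all large regular $t$ the shadow equation (\ref{eq:hdq2}) evaluated at $\hat q_\lk(t)=0$ forces $\hat p_\lk(t)\le\hat\pi_\lk(t)$, whence $\pi_\lk(t)=\hat\pi_\lk(t)\ge\hat p_\lk(t)$.

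For the base case $k=1$, Eq.~(\ref{eq:l1}) of Lemma~\ref{lem:ple} gives $\hat p_{l,1}(t)=(1+\epsilon)\rho_{l,1}$ exactly, so the observation above yields $\pi_{l,1}(t)\ge(1+\epsilon)\rho_{l,1}$ for all large regular $t$. On the arrival side, (\ref{eq:ar}) and (\ref{eq:flt}) give $a_{l,1}(t)=\sum_s H^s_{l,1}f_s(t)=\rho_{l,1}t$, so $p_{l,1}(t)=\rho_{l,1}$. Plugging these into (\ref{eq:dq2}), whenever $q_{l,1}(t)>0$ one gets $\DIFF q_{l,1}(t)=p_{l,1}(t)-\pi_{l,1}(t)\le\rho_{l,1}-(1+\epsilon)\rho_{l,1}=-\epsilon\rho_{l,1}<0$ (links with $\rho_{l,1}=0$ carry no first-hop traffic and are handled trivially). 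Such a uniform negative drift empties $q_{l,1}$ in finite time and, since $p_{l,1}\le\pi_{l,1}$ holds at $q_{l,1}=0$, keeps it empty. For the inductive step I assume $q_{l,j}(t)=0$ for all $l$ and all $j\le k$ once $t\ge T^{(k)}$. Then Lemma~\ref{lem:p} applies and gives $\hat p_{l,k+1}(t)\ge(1+\epsilon)\rho_{l,k+1}$ for large $t$, so the shadow-service observation yields $\pi_{l,k+1}(t)=\hat\pi_{l,k+1}(t)\ge(1+\epsilon)\rho_{l,k+1}$. For the arrivals, drained upstream hops forward each flow $s$ at exactly its rate $\lambda_s$, making the aggregate input to the $(k+1)$-st hop queue $p_{l,k+1}(t)=\rho_{l,k+1}$ for large $t$ (this is the mechanism underlying Lemma~\ref{lem:p}; alternatively the cumulative chain $u_{s,k+1}\le u_{s,k}\le\cdots\le u_{s,1}=f_s$ gives $a_{l,k+1}(t)\le\rho_{l,k+1}t$, which already suffices). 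Hence $\DIFF q_{l,k+1}(t)\le-\epsilon\rho_{l,k+1}<0$ while $q_{l,k+1}(t)>0$, so $q_{l,k+1}$ empties in finite time, defining $T^{(k+1)}$.

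Iterating $L^{\max}$ times and setting $T_2\triangleq T^{(L^{\max})}$ gives $\|q(t)\|=0$ for all $t\ge T_2$, as claimed; with Lemma~\ref{lem:sub-stable} this closes Lemma~\ref{lem:joint-stable} and hence Proposition~\ref{pro:hq}. The step I expect to be the main obstacle is making the service transfer rigorous---turning \emph{stability} of the shadow queues into a \emph{pointwise} lower bound $\pi_\lk(t)\ge(1+\epsilon)\rho_\lk$ on the data service. This is delicate because HQ-MWS assigns service per hop through the $\argmax_k\hat q_\lk$ rule, so a backlogged data queue $Q_\lk$ is served only when its shadow counterpart is the within-link maximizer; the clean way around this is exactly to use that \emph{all} shadow fluid queues sit at $0$ for large $t$, which forces $\hat\pi_\lk\ge\hat p_\lk$ for every hop simultaneously and resolves the degenerate, all-weights-zero MaxWeight tie. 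Extracting this exact finite-time extinction (rather than the $\le\zeta$ statement) from the proof of Lemma~\ref{lem:sub-stable} is therefore the crucial ingredient. A secondary technical point is the per-hop arrival bookkeeping in the inductive step---that drained upstream hops forward each flow at exactly $\lambda_s$, so $p_{l,k+1}=\rho_{l,k+1}$---which rests on the FIFO flow-composition carried by the state $\Queue(t)$ but can be bypassed by the cumulative bound noted above.
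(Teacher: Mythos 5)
Your proposal follows the same skeleton as the paper's own proof: a hop-by-hop induction in which the identity $\pi_\lk(t)=\hat{\pi}_\lk(t)$ from (\ref{eq:pipi}) transfers the shadow-layer service guarantee to the data queues, with the base case supplied by (\ref{eq:l1}) and the inductive step by Lemma~\ref{lem:p} together with the observation that drained upstream hops forward each flow at exactly rate $\lambda_s$, so that $p_{l,k+1}(t)=\sum_s H^s_{l,k+1}\lambda_s$ via (\ref{eq:ar}) and (\ref{eq:ap}). The one genuine difference is how shadow stability is converted into a pointwise service lower bound. The paper uses Lemma~\ref{lem:sub-stable} exactly as stated: it picks $\zeta<\epsilon\min_s\lambda_s$, infers $\hat{\psi}_\lk(t)\ge\hat{p}_\lk(t)-\zeta$ and hence $\pi_\lk(t)=\hat{\pi}_\lk(t)\ge\hat{p}_\lk(t)-\zeta$, and absorbs the slack into the drift, $\frac{d}{dt}q_\lk(t)\le-\epsilon\sum_s \Hslk\lambda_s+\zeta<0$. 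You instead upgrade Lemma~\ref{lem:sub-stable} to exact finite-time extinction of the shadow fluid queues and then read $\hat{\pi}_\lk(t)\ge\hat{p}_\lk(t)$ off (\ref{eq:hdq2}) at $\hat{q}_\lk\equiv 0$. That strengthening is not what the lemma states, so you must re-derive it, but your justification is sound: in the proof of Lemma~\ref{lem:sub-stable} the drift of the quadratic Lyapunov function is bounded by a negative constant times $\max_{l,k}\hat{q}_\lk$, i.e.\ by $-c\sqrt{\hat{V}}$, which forces $\hat{V}$ to reach zero in finite time and remain there. What each route buys: the paper's version uses the cited lemma off the shelf at the cost of carrying an explicit $\zeta$; your version is slack-free and is arguably more rigorous at the one point where the paper is loose --- the step ``$\hat{\psi}_\lk(t)\ge\hat{p}_\lk(t)-\zeta$ from (\ref{eq:hdq1})'' is not a pointwise consequence of $\|\hat{q}(t)\|\le\zeta$ alone, whereas your extinction argument does yield the needed pointwise inequality.

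One aside in your proposal is incorrect, though it does not break the proof because you offer it only as an alternative to the correct mechanism. The cumulative bound $a_{l,k+1}(t)\le t\sum_s H^s_{l,k+1}\lambda_s$ (plus an initial-condition constant, which you dropped) does \emph{not} ``already suffice.'' A cumulative bound controls only the time-averaged input and permits bursts: the upstream could, in principle, deliver nothing for a long stretch and then feed $q_{l,k+1}$ at an instantaneous rate exceeding the guaranteed service rate $(1+\epsilon)\sum_s H^s_{l,k+1}\lambda_s$, in which case $q_{l,k+1}$ hits zero but rises again at arbitrarily late times --- contradicting the conclusion $\|q(t)\|=0$ for all $t\ge T_2$ that the induction must propagate. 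What rules out such bursts is precisely the instantaneous rate identity $p_{l,k+1}(t)=\sum_s H^s_{l,k+1}\lambda_s$, obtained from flow conservation through the already-drained upstream queues (the paper's chain $u_{s,k+1}(t)=\lambda_s t+\sum_{h\le k}q_{s,h}(0)$); that is the argument both you and the paper actually rely on, so you should delete the parenthetical claim rather than lean on it.
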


\begin{proof}
We prove the stability of data queues by induction.

Suppose $\lambda$ is strictly inside $\Lambda^{*}$, the sub-system 
of shadow queues $\hat{q}$ is stable from Lemma~\ref{lem:sub-stable}. 
Let us choose sufficiently small $\zeta>0$ such that $\zeta < \epsilon
\min_s \lambda_s$, then there exists a finite time $T_1>0$ such that 
we have $\| \hat{q}(t) \| \le \zeta$ for any regular time $t \ge T_1$. 
Thus, we have $\hat{\psi}_\lk(t) \ge \hat{p}_\lk(t) - \zeta$ from 
(\ref{eq:hdq1}), for all $t \ge T_1$. Hence, for all data queues and 
all regular time $t \ge T_1$, we have
\begin{equation}
\label{eq:pilambda-2}
\textstyle \pi_\lk(t) = \hat{\pi}_\lk(t) \ge \hat{p}_\lk(t) - \zeta,
\end{equation}
from (\ref{eq:pipi}) and (\ref{eq:hpp}). 

Now we show by induction that all data queues are stable in the 
fluid limit model.

\noindent {\bf Base Case:} 

First, note that $\pi_{l,1}(t) \ge (1+\epsilon) \sum_{s} H^s_{l,1} 
\lambda_s - \zeta$ from (\ref{eq:l1}) and (\ref{eq:pilambda-2}). 
Consider a sub-system that contains only queue $q_{l,1}$. From 
$p_{l,1}(t) = \sum_{s} H^s_{l,1} \lambda_s$ and (\ref{eq:dq2}), 
we have $\frac{d}{dt} q_{l,1}(t) = p_{l,1}(t) - \pi_{l,1}(t) \le 
- \epsilon \sum_{s} H^s_{l,1} \lambda_s + \zeta < 0$, 
if $q_{l,1}(t) > 0$. This implies that the sub-system that 
contains only $q_{l,1}$ is stable, for all $l \in \Edge$.

\noindent {\bf Induction Step:} 

Next, we show that, if $q_{l,j}$ is stable for all $l \in \Edge$ 
and all $j \le k$, then each queue $q_{l,k+1}$ is also stable for 
all $l \in \Edge$, where $1 \le k < L^{\max}$.

Since $q_{l,j}(t)$ is stable for all $l \in \Edge$ and all $j \le k$, 
i.e., there exists a finite $T^k_1 > 0$ such that $q_{l,j}(t) = 0$ for 
all regular time $t \ge T^k_1$, then $u_{s,k+1}(t) = u_\sk(t) + q_\sk(0) 
= \cdots = u_{s,1}(t) + \sum_{h \le k}q_{s,h}(0) = \lambda_s t + 
\sum_{h \le k}q_{s,h}(0)$ for all $s \in \Flow$ and for all regular 
time $t\ge T^k_1$. Thus, we have 
$a_{l,k+1}(t) = t \sum_s H^s_{l,k+1} \lambda_s + \sum_s H^s_{l,k+1} \sum_{h \le k}q_{s,h}(0)$
from (\ref{eq:ar}), and $p_{l,k+1}(t) = \sum_s H^s_{l,k+1} \lambda_s$ 
from (\ref{eq:ap}) by taking derivative, for all $l \in \Edge$ and all 
regular time $t\ge T^k_1$. Then, note that we have $\hat{p}_{l,k+1}(t) 
\ge (1+\epsilon) \sum_s H^s_{l,k+1} \lambda_s$ from Lemma~\ref{lem:p}. 
Hence, we have $\pi_{l,k+1}(t) \ge (1+\epsilon) \sum_s H^l_{s,k+1}
\lambda_s - \zeta$ from (\ref{eq:pilambda-2}). Therefore, we have 
$\frac{d}{dt} q_{l,k+1}(t) = p_{l,k+1}(t) - \pi_{l,k+1}(t) \le -\epsilon 
\sum_s H^l_{s,k+1} \lambda_s + \zeta < 0$, if $q_{l,k+1}(t) > 0$. This 
implies that $q_{l,k+1}$ is stable for all $l \in \Edge$.

Therefore, the result follows by induction.
\end{proof}

The following lemma says that the fluid limit model of joint data
queues and shadow queues is stable, which follows immediately from 
Lemmas~\ref{lem:sub-stable} and \ref{lem:stable}.
\begin{lemma}
\label{lem:joint-stable}
The fluid limit model of the joint system of data queues $q$ and
shadow queues $\hat{q}$ operating under HQ-MWS satisfies that:
For any $\zeta>0$, there exists a finite $T_2 > 0$ such that for 
any fluid model solution with $\|q(0)\|+\|\hat{q}(0)\| \le 1$, 
we have that with probability one,
\[
\|q(t)\|+\|\hat{q}(t)\| \le \zeta, ~\text{for all}~ t \ge T_2,
\]
for any arrival rate vector strictly inside $\Lambda^{*}$. 
\end{lemma}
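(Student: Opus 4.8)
The plan is to obtain the joint bound directly from the two already-established sub-system results, Lemma~\ref{lem:sub-stable} (shadow queues) and Lemma~\ref{lem:stable} (data queues), with essentially no new analysis. First I would observe that, because fluid queue lengths are non-negative, the hypothesis $\|q(0)\|+\|\hat{q}(0)\| \le 1$ implies the two separate bounds $\|\hat{q}(0)\| \le 1$ and $\|q(0)\| \le 1$. Hence the same fluid model solution satisfies the hypotheses of both sub-lemmas simultaneously, and each may be applied to it.

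Next, fixing the target accuracy $\zeta > 0$, I would invoke Lemma~\ref{lem:sub-stable} to produce a finite time $T_1$ (depending on $\zeta$) after which the shadow aggregate stays small, $\|\hat{q}(t)\| \le \zeta$ for all $t \ge T_1$, for any arrival rate vector strictly inside $\Lambda^{*}$. Independently, I would invoke Lemma~\ref{lem:stable} to produce a finite time, say $T'$, after which the data queues are completely drained, $\|q(t)\| = 0$ for all $t \ge T'$. Setting $T_2 \triangleq \max\{T_1, T'\}$ and adding the two estimates yields $\|q(t)\| + \|\hat{q}(t)\| \le 0 + \zeta = \zeta$ for all $t \ge T_2$, which is exactly the claimed bound. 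Since both sub-lemmas hold with probability one for the fluid limits arising from the same convergent subsequence (Lemma~\ref{lem:fluid}), their conjunction holds on the same probability-one event, so the final statement is a probability-one statement as required.

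The step that requires care---though it was already handled inside the proof of Lemma~\ref{lem:stable}---is ensuring there is no circular dependence between the two sub-systems. The data-queue argument leans on the service-rate lower bound (\ref{eq:pilambda-2}), which in turn uses the smallness of $\|\hat{q}(t)\|$ guaranteed by shadow-queue stability. Thus the logical order is fixed: shadow stability (proved via the Lyapunov function $\hat{V}$ and the fact, from Lemma~\ref{lem:ple}, that the instantaneous shadow arrival rate is strictly inside $\Lambda^{*}$ after a finite time) must come first, and the hop-by-hop induction for the data queues builds on top of it. Since all of this dependence is internal to Lemma~\ref{lem:stable}, the present proof reduces to the one-line combination above, and I would not expect any genuine obstacle at this stage; the real work lies in the two component lemmas.
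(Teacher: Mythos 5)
Your proposal is correct and takes essentially the same approach as the paper: the paper gives no separate proof of Lemma~\ref{lem:joint-stable}, stating only that it ``follows immediately from Lemmas~\ref{lem:sub-stable} and \ref{lem:stable},'' which is precisely the combination you carry out (take the maximum of the two times, add the bounds $\|\hat{q}(t)\|\le\zeta$ and $\|q(t)\|=0$). Your observations that non-negativity of the fluid queues lets the joint initial bound imply both individual ones, and that the dependence of the data-queue argument on shadow-queue stability is internal to Lemma~\ref{lem:stable} and hence non-circular, are exactly how the paper's argument is structured.
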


Now, consider any fixed sequence of processes $\{ \frac {1} {x} \Markov^{(x)}(xt), 
x=1,2,\cdots\}$ (for simplicity also denoted by $\{x\}$). By Lemmas~\ref{lem:fluid} 
and \ref{lem:joint-stable}, we have that for any fixed $\xi_1>0$, we can always 
choose a large enough integer $T>0$ such that for any subsequence $\{\xn\}$ of 
$\{x\}$, there exists a further (sub)subsequence $\{\xnj\}$ such that
\[
\begin{split}
\textstyle \lim_{j \rightarrow \infty} & \textstyle \frac {1} {\xnj} (\|Q^{(\xnj)}(\xnj T)\|
+ \lceil \|\hat{Q}^{(\xnj)}(\xnj T)\| \rceil) \\
& = \|q(T)\|+\|\hat{q}(T)\| \le \xi_1
\end{split}
\]
almost surely. This, along with (\ref{eq:A0}), implies that 
\[
\textstyle \lim_{j \rightarrow \infty} \frac {1} {\xnj} \| \Markov^{(\xnj)} (\xnj T) \| 
\le \xi_1 
\]
almost surely, which in turn implies (for small enough $\zeta_1$) that
\begin{equation}
\label{eq:mc_conv}
\textstyle \limsup_{x \rightarrow \infty} \frac {1} {x} \| \Markov^{(x)} (x T) \| 
\le \xi_1 \triangleq 1 - \xi < 1
\end{equation}
almost surely. This is because there must exist a subsequence of $\{ x \}$ that 
converges to the same limit as $\limsup_{x \rightarrow \infty} \frac {1} {x} \| 
\Markov^{(x)} (x T) \|$.

Next, we will show that the sequence $\{ \frac {1} {x} \| \Markov^{(x)} (x T) \|, 
x=1,2,\cdots \}$ is uniformly integrable.
Note that link capacities are all finite (equals one, as we assumed in the system 
model), then for all time slots $t>0$, we have that 
\begin{equation}
\label{eq:hp_bound}
\textstyle \hat{P}_\lk(t) = (1+\epsilon) \frac {A_\lk(t)} {t} 
\le (1+\epsilon) \frac {|\Edge| t + \sum_{s \in \Flow} F_s(t)} {t},
\end{equation}
for all $l$ and $k$. Define a random variable 
\[
\textstyle
\Theta(T) \triangleq \frac {1} {x} 
\left( (1+|\Edge| \cdot L^{\max})(x + \sum_s F^{(x)}_s(xT)) + \sum_l \sum_k \sum^{xT}_{\tau = 1} 
\hat{P}^{(x)}_\lk(\tau) + 2 \right). 
\]
Note that we have
\[
\textstyle \sum_l \sum_k Q^{(x)}_\lk(xT) \le x + \sum_s F^{(x)}_s(xT),
\]
\[
\textstyle \hat{Q}^{(x)}_\lk(xT) \le \sum^{xT}_{\tau = 1} \hat{P}^{(x)}_\lk(\tau)
\]
and 
\[
\textstyle A^{(x)}_\lk(xT) \le x + \sum_s F^{(x)}_s(xT).
\]
Then, we have
\[
\textstyle \frac {1} {x} \| \Markov^{(x)} (x T) \| 
= \textstyle \frac {1} {x} ( \sum_l \sum_k Q^{(x)}_\lk(xT) 
+ \lceil \sum_l \sum_k \hat{Q}^{(x)}_\lk(xT) \rceil 
+ \lceil \frac {1} {xT+1} \sum_l \sum_k A^{(x)}_\lk(xT)) \rceil 
\le \Theta(T),
\]
and
\[
\begin{split}
\Expect [\Theta(T)] &\le  \textstyle \frac {1} {x} \left( (1+|\Edge| \cdot L^{\max}) (x + \sum_s \lambda_s xT) + (1+\epsilon) \sum_l \sum_k \sum^{xT}_{\tau = 1} (|\Edge| + \sum_s \lambda_s)  + 2 \right) \\
& \textstyle \le \frac {1} {x} \left( x(1+|\Edge| \cdot L^{\max})(1 + T \sum_s \lambda_s)  +  (1+\epsilon) xT \cdot |\Edge| \cdot L^{\max} (|\Edge|+\sum_s \lambda_s) + 2 \right) \\
& \textstyle \le (1+|\Edge| \cdot L^{\max})(1 + T \sum_s \lambda_s) + (1+\epsilon) T \cdot |\Edge| \cdot L^{\max} (|\Edge|+\sum_s \lambda_s)  + 2 \\
& < \infty,
\end{split}
\]
where the first inequality is from (\ref{eq:hp_bound})
and the assumption on our arrival processes. 

Therefore, it follows from the Dominated Convergence Theorem that the 
sequence $\{ \frac {1} {x} \| \Markov^{(x)} (x T) \|, x=1,2,\cdots \}$ 
is uniformly integrable. Then, the almost surely convergence in 
(\ref{eq:mc_conv}) along with uniform integrability implies the following 
convergence in the mean:
\[
\textstyle \limsup_{x \rightarrow \infty} \Expect [ \frac {1} {x} \| \Markov^{(x)} (x T) \| ] \le 1-\xi.
\] 

Since the above convergence holds for any sequence of processes 
$\{\frac {1} {x} \Markov^{(x)}(xT), x=1,2,\cdots\}$, the condition 
of type (\ref{eq:stab_cri}) in Lemma~\ref{lem:stab_cri} is satisfied. 
This completes the proof of Proposition~\ref{pro:hq}.
}

\section{Proof of Lemma~\ref{lem:fluid}} \label{app:lem:fluid}
\high{
First, we prove the convergence and continuity properties for the processes associated 
with data queues. It follows from the strong law of large numbers that $\frac{1}{\xn}F^{(\xn)}_s(\xn t) 
\rightarrow  \lambda_s t$, hence, the convergence (\ref{eq:fluid_f}) holds, and each 
of the limiting functions $f_s$ is Lipschitz continuous. Also, note that for any fixed 
$0 \le t_1 \le t_2$, due to finite link capacities (in particular, all equal to one under
our unit capacity assumption), we have that
\begin{equation}
\label{eq:lipschitz}
\textstyle \frac {1} {\xn} \left(D^{(\xn)}_{\lk}(\xn t_2) - D^{(\xn)}_{\lk}(\xn t_1) \right) \le t_2 - t_1.
\end{equation}
Thus, the sequence of functions $\{\frac {1} {\xn} D^{(\xn)}_{\lk}(\xn \cdot)\}$
is uniformly bounded and uniformly equicontinuous. Consequently, by the Arzela-Ascoli 
Theorem, there must exist a subsequence under which (\ref{eq:fluid_d}) holds. Note that 
(\ref{eq:lipschitz}) also implies that each of the limiting functions $d_\lk$ is Lipschitz
continuous. Recall that $U_\sk(t)$ denotes the cumulative number of packets transmitted 
from the $(k-1)$-st hop to the $k$-th hop for flow $s$ up to time slot $t$, then convergence 
(\ref{eq:fluid_u}) holds similarly as (\ref{eq:fluid_d}) for $k>1$, and holds from 
(\ref{eq:fluid_f}) for $k=1$. Hence, convergence (\ref{eq:fluid_a}) trivially follows 
from the definition of $A_\lk(t)$ and (\ref{eq:fluid_u}). Similarly, each of the limiting 
functions $u_\sk$ and $a_\lk$ is Lipschitz continuous.

Since the sequence $\{\frac {1} {\xn} Q^{(\xn)}_{\lk}(0)\}$ are bounded by 1 from
(\ref{eq:init_config}), there exists a further subsequence (of the subsequence already
chosen above, and for simplicity still denoted by $\xnj$) such that 
$\frac{1}{\xnj}Q^{(\xnj)}_\lk(0) \rightarrow  q_\lk(0)$. Hence, convergence 
(\ref{eq:fluid_q}) trivially follows from the queue evolution equation (\ref{eq:q_evolution}) 
and convergences (\ref{eq:fluid_a}) and (\ref{eq:fluid_d}). Also, it follows that
each of the limiting functions $q_\lk$ is Lipschitz continuous.

Recall that $\Psi_\lk(t) = D_\lk(t) - D_\lk(t-1)$ and $P_\lk(t) = A_\lk(t) - A_\lk(t-1)$, 
hence, the sequences $\{\frac{1}{\xnj}\int_0^{\xnj t} \Psi^{(\xnj)}_\lk(\tau)d\tau\}$
and $\{\frac{1}{\xnj}\int_0^{\xnj t} P^{(\xnj)}_\lk(\tau)d\tau \}$ are identical to the 
sequences $\{\frac{1}{\xnj}D^{(\xnj)}_\lk(\xnj t)\}$ and $\{\frac{1}{\xnj}A^{(\xnj)}_\lk(\xnj t)\}$, 
respectively. This in turn implies that the convergences (\ref{eq:fluid_psi}) and 
(\ref{eq:fluid_d}) hold, where $\int_0^t \psi_\lk(\tau) d\tau = d_\lk(t)$ and $\int_0^t 
p_\lk(\tau) d\tau = a_\lk(t)$. The convergence (\ref{eq:fluid_pi}) follows from an 
inequality similar to (\ref{eq:lipschitz}) by applying the Arzela-Ascoli Theorem.

Using similar arguments, we can prove the results for the processes associated with 
the shadow queues. This completes the proof of Lemma~\ref{lem:fluid}.
}

\section{Proof of Lemma~\ref{lem:p}} \label{app:lem:p}
Note that the total number of packets waiting in the 
previous hops for $Q_{l,k+1}$ at time slot $t$ is no 
greater than $\sum_i \sum_{h \le k} Q_{i,h}(t)$. Then, 
we have
\begin{equation}
\label{eq:afq}
\textstyle A_{l,k+1}(t) \ge \sum_s H^s_{l,k+1} F_s(t) - \sum_i 
\sum_{h \le k} Q_{i,h}(t).
\end{equation}
Since $q_{i,h}$ is stable for all $i \in \Edge$ and all $h \le k$, 
there exists a finite $T^k_1>0$ such that $\sum_i \sum_{h \le k} 
q_{i,h}(t) = 0$, for all regular time $t \ge T^k_1$. Let $\delta>0$ 
be fixed, and consider all times $\nu \in [t,t+\delta]$, where $t 
\ge T^k_1$. Recall that $\xnj$ is a positive subsequence for which 
the convergence to the fluid limit holds \emph{u.o.c}. For an 
arbitrary $\theta>0$, there exists a large enough $j$ so that
\begin{equation}
\label{eq:theta1}
\left|\frac {\sum_i \sum_{h \le k} Q_{i,h}(\xnj \nu)} {\xnj} 
- \sum_i \sum_{h \le k} q_{i,h}(\nu) \right| < \theta,
\end{equation}
for all $\nu \in [t,t+\delta]$.

Consider time slots $\Upsilon \triangleq \{ \lceil \xnj t \rceil,
\lceil \xnj t \rceil + 1, \cdots, \lfloor \xnj (t+\delta) 
\rfloor \}$. Eq.~(\ref{eq:theta1}) can be rewritten as
\begin{equation}
\label{eq:theta2}
\textstyle \sum_i \sum_{h \le k} Q_{i,h}(\tau) < \theta \xnj,
\end{equation}
for all time slots $\tau \in \Upsilon$. Then for all $t \ge T^k_1$
and all $l \in \Edge$, we have
\[
\begin{split}
\hat{p}_{l,k+1}(t) &= \frac{d}{dt} \textstyle \int_0^t \hat{p}_{l,k+1}(\tau) d\tau 
= \lim_{\delta \rightarrow 0} \frac{\textstyle \int_0^{t+\delta} \hat{p}_{l,k+1}(\tau) d\tau -
\textstyle \int_0^t \hat{p}_{l,k+1}(\tau) d\tau} {\delta} \\
&= \lim_{\delta \rightarrow 0} \lim_{\xnj \rightarrow \infty} 
\frac{ \sum^{\lfloor (t+\delta)\xnj \rfloor}_{\tau=\lceil t\xnj \rceil} \hat{P}_{l,k+1}(\tau)} {\delta \xnj} \\
&\stackrel{(a)}= (1+\epsilon) \lim_{\delta \rightarrow 0} \lim_{\xnj \rightarrow \infty} 
\frac{ \sum^{\lfloor (t+\delta)\xnj \rfloor}_{\tau= \lceil t\xnj \rceil} \frac {A_{l,k+1}(\tau)} {\tau}} {\delta \xnj} \\
&\stackrel{(b)}\ge (1+\epsilon) \lim_{\delta \rightarrow 0} \lim_{\xnj \rightarrow \infty} 
\frac{ \sum^{\lfloor (t+\delta)\xnj \rfloor}_{\tau=\lceil t\xnj \rceil} \frac {\sum_s H^s_{l,k+1} F_s(\tau) 
- \sum_i \sum_{h \le k} Q_{i,h}(\tau)} {\tau}} {\delta \xnj} \\
&\stackrel{(c)}> (1+\epsilon ) \sum_s H^s_{l,k+1} \lim_{\delta \rightarrow 0} \lim_{\xnj \rightarrow \infty} 
\frac{ \sum^{\lfloor (t+\delta )\xnj \rfloor}_{\tau=1} \frac {F_s(\tau )} {\tau}} {\lfloor(t+\delta ) 
\xnj\rfloor} \cdot \frac {\lfloor(t+\delta ) \xnj \rfloor} {\delta \xnj} \\
&~~~- (1+\epsilon ) \sum_s H^s_{l,k+1} \lim_{\delta \rightarrow 0} \lim_{\xnj \rightarrow \infty}
\frac {\sum^{\lceil t\xnj \rceil - 1}_{\tau=1} \frac {F_s(\tau )} {\tau}} {\lceil t \xnj \rceil - 1} 
\cdot \frac {\lceil t \xnj \rceil - 1} {\delta \xnj} \\
&~~~- (1+\epsilon) \lim_{\delta \rightarrow 0} \lim_{\xnj \rightarrow \infty} 
\left( \frac {1} {\delta} \cdot  \frac {\theta \xnj} {\xnj} \cdot \sum^{\lfloor (t+\delta)\xnj \rfloor}_{\tau=\lceil t\xnj \rceil} \frac {1}  {\tau} \right)  \\
&\stackrel{(d)}= (1+\epsilon) \sum_s H^s_{l,k+1} \lambda_s \lim_{\delta \rightarrow 0} \left(  
\frac {t+\delta} {\delta} - \frac {t} {\delta} \right) - (1+\epsilon) \lim_{\delta \rightarrow 0} 
\left( \frac {\theta} {\delta} \cdot \log \frac {t+\delta} {t} \right) \\
&= (1+\epsilon) \left( \sum_s H^s_{l,k+1} \lambda_s - \frac {\theta} {t} \right),
\end{split}
\]
where $(a)$, $(b)$ and $(c)$ are from (\ref{eq:shadowarr}), (\ref{eq:afq})
and (\ref{eq:theta2}), respectively, and $(d)$ is from (\ref{eq:slimit}) 
and (\ref{eq:taulimit}).

Since $\theta>0$ can be arbitrary, we complete the proof by letting $\theta \rightarrow 0$.

\section{Stability of the Shadow Queues under LQ-MWS} \label{app:shadow-stab-lq}
Similarly to (\ref{eq:fluid_f})-(\ref{eq:fluid_hp}), we can establish the fluid 
limits of the system: $(f, u, q, \pi, \psi, a, d, p, \hat{q}, \hat{\pi}, \hat{\psi}$, $
\hat{a}$, $\hat{d}, \hat{p})$, and we have the following fluid model equations:
\begin{eqnarray}
&& \textstyle	f_s(t) = \lambda_s t, \label{eq:lflt} \\
&& \textstyle	q_l(t) = q_l(0) + a_l(t) - d_l(t), \label{eq:lqad} \\
&& \textstyle	a_l(t) = \sum_s \sum_k \Hslk u_\sk(t), \label{eq:lar} \\
&& \textstyle	a_l(t) = \int_0^t p_l(\tau) d\tau, \label{eq:lap} \\
&& \textstyle	d_l(t) = \int_0^t \psi_l(\tau) d\tau, \label{eq:ldp} \\
&& \textstyle	\psi_l(t) \le \pi_l(t), \label{eq:lpp} \\
&& \textstyle	\frac{d}{dt} q_l(t) = p_l(t) - \psi_l(t),  \label{eq:ldq1} \\
&& \textstyle	\frac{d}{dt} q_l(t) = \left\{
\begin{array}{ll}
p_l(t) - \pi_l(t), &~\text{if}~q_l(t) > 0,\\
(p_l(t) - \pi_l(t))^+, &~\text{otherwise},
\end{array}
\right. \label{eq:ldq2} \\
&& \textstyle	\hat{q}_l(t) = \hat{q}_l(0) + \hat{a}_l(t)  - \hat{d}_l(t),  \label{eq:lhqad}\\
&& \textstyle	\hat{a}_l(t) = \int_0^t \hat{p}_l(\tau) d\tau, \\
&& \textstyle	\hat{d}_l(t) = \int_0^t \hat{\psi}_l(\tau) d\tau, \\
&& \textstyle	\hat{\psi}_l(t) \le \hat{\pi}_l(t),  \label{eq:lhpp} \\
&& \textstyle	\frac{d}{dt} \hat{q}_l(t) = \hat{p}_l(t) - \hat{\psi}_l(t), \label{eq:lhdq1}\\
&& \textstyle	\frac{d}{dt} \hat{q}_l(t) = \left\{
\begin{array}{ll}
\hat{p}_l(t) - \hat{\pi}_l(t), &~\text{if}~\hat{q}_l(t) > 0,\\
(\hat{p}_l(t) - \hat{\pi}_l(t))^+, &~\text{otherwise},
\end{array}
\right. \label{eq:lhdq2} \\
&& \textstyle	\|q(0)\| + \| \hat{q}(0)\| = 1, \label{eq:linit} \\
&& \textstyle	\pi_l(t) = \hat{\pi}_l(t). \label{eq:lpipi}
\end{eqnarray}

We present a lemma similar to Lemma~\ref{lem:ple}. This will be used 
to show that the fluid limit model for the sub-system consisting of 
shadow queues is stable under LQ-MWS. We omit its proof since it follows 
the same line of analysis for the proof of Lemma~\ref{lem:ple}.

\begin{lemma}
\label{lem:lple}
For all (scaled) time $t>0$ and for all links $l \in \Edge$, we have that 
with probability one,
\begin{equation}
\label{eq:lple}
\textstyle \hat{p}_l(t) \le (1+\epsilon) \left( \sum_s \sum_k \Hslk \lambda_s 
+ \frac {1} {t} \right).
\end{equation}
\end{lemma}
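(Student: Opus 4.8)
The plan is to follow the proof of Lemma~\ref{lem:ple} almost verbatim, the only structural change being the replacement of the per-hop causality bound by its per-link analogue. First I would record the system-causality inequality for the per-link data queue: since every exogenous packet of flow $s$ traverses link $l$ exactly $\sum_k \Hslk$ times along its fixed route, and the only other source of arrivals at $Q_l$ is the finite set of packets present at time $0$, we have
\[
\textstyle A_l(t) \le \sum_{s \in \Flow} \sum_k \Hslk F_s(t) + \sum_{i \in \Edge} Q_i(0),
\]
for every $t>0$ and every $l \in \Edge$. This bound needs no loop-freeness: if a route happened to revisit $l$, the factor $\sum_k \Hslk$ simply counts the multiplicity, so the double sum is exactly the right aggregate. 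Unlike the per-hop case, only the upper-bound direction is required here, so there is no analogue of the equality~(\ref{eq:l1}) to establish.

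Next I would reuse the two preliminary facts already obtained in the proof of Lemma~\ref{lem:ple}, both of which concern only the exogenous arrival processes and the discrete time index and hence transfer unchanged. The Cesaro-type averaging result, Lemma~\ref{lem:conv}, applied to the SLLN hypothesis $F_s(\tau)/\tau \to \lambda_s$, yields (\ref{eq:slimit}); and the harmonic-sum estimate (\ref{eq:taulimit}) gives $\lim_{j\to\infty} \sum_{\tau=\lceil t\xnj\rceil}^{\lfloor (t+\delta)\xnj\rfloor} \frac{1}{\tau} = \log\frac{t+\delta}{t}$.

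Then, at any regular time $t>0$, I would write $\hat{p}_l(t) = \frac{d}{dt}\int_0^t \hat{p}_l(\tau)\,d\tau$, pass to the fluid limit to express this as $\lim_{\delta\to 0}\lim_{j\to\infty}\frac{1}{\delta\xnj}\sum_{\tau=\lceil t\xnj\rceil}^{\lfloor(t+\delta)\xnj\rfloor}\hat{P}_l(\tau)$, substitute the shadow-arrival rule (\ref{eq:lq-shadowarr}), namely $\hat{P}_l(\tau)=(1+\epsilon)\frac{A_l(\tau)}{\tau}$, and apply the causality bound above. This splits the expression into a flow-arrival term and an initial-condition term, exactly as in the chain of displays for $\hat{p}_\lk(t)$ in the proof of Lemma~\ref{lem:ple}. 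After the $\lceil\cdot\rceil/\lfloor\cdot\rfloor$ bookkeeping, the flow term converges to $(1+\epsilon)\sum_s\sum_k\Hslk\lambda_s$ by (\ref{eq:slimit}), while the initial-condition term is controlled using $\|q(0)\|+\|\hat{q}(0)\|\le 1$ (so that $\lim_{j\to\infty}\frac{1}{\xnj}\sum_i Q_i(0)\le 1$) together with (\ref{eq:taulimit}) and L'Hospital's Rule, yielding $(1+\epsilon)\frac{1}{t}$. Adding the two bounds gives (\ref{eq:lple}).

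The computation is essentially routine once the template of Lemma~\ref{lem:ple} is in hand; the only step requiring genuine care is the first one, verifying that the per-link causality bound carries the aggregated factor $\sum_k\Hslk$ rather than the single $\Hslk$ of the per-hop setting, since one per-link queue now pools all hop-classes routed through $l$. After that is settled, the $\delta\to 0$ and $j\to\infty$ limit manipulations are identical to those already carried out, which is precisely why the paper can legitimately omit the detailed proof.
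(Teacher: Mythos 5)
Your proposal is correct and follows essentially the same route the paper intends: the paper omits this proof precisely because it is the argument of Lemma~\ref{lem:ple} with the per-hop causality bound $A_\lk(t) \le \sum_s \Hslk F_s(t) + \sum_i \sum_h Q_{i,h}(0)$ replaced by its per-link aggregate, after which the substitution of (\ref{eq:lq-shadowarr}), the Cesaro lemma (Lemma~\ref{lem:conv}), the harmonic-sum estimate (\ref{eq:taulimit}), and L'Hospital's rule go through unchanged. The only caveat, immaterial under the paper's loop-free route assumption, is your side remark that the bound ``needs no loop-freeness'': while the exogenous term $\sum_s \sum_k \Hslk F_s(t)$ does capture multiplicity, an initial packet could also revisit $l$ multiple times if routes had loops, so the term $\sum_i Q_i(0)$ would then need a multiplicity factor as well.
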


Now, we can show that the fluid limit model for the sub-system of shadow 
queues $\hat{q}$ is stable under LQ-MWS.  

\begin{lemma}
\label{lem:sub-stable-lq}
The fluid limit model for the sub-system of shadow queues $\hat{q}$ 
operating under LQ-MWS satisfies that: For any $\zeta>0$, there exists 
a finite $T_3 > 0$ such that for any fluid model solution with 
$\|\hat{q}(0)\| \le 1$, we have that with probability one,
\begin{equation}
\|\hat{q}(t)\| \le \zeta, ~\text{for all}~ t \ge T_3,
\end{equation}
for any arrival rate vector strictly inside $\Lambda^{*}$. 
\end{lemma}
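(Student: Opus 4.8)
The plan is to mirror the proof of Lemma~\ref{lem:sub-stable} (the HQ-MWS case), which in fact becomes simpler here because each link carries a single shadow queue $\hat{q}_l$ rather than a family of per-hop shadow queues, so both the inner maximization over the hop index $k$ and the splitting of the link rate $\phi_l$ into per-hop components $\phi_{l,k}$ disappear. First I would invoke the assumption that $\lambda$ is strictly inside $\Lambda^{*}$: pick $\epsilon>0$ small enough that $(1+\epsilon)\lambda$ is still strictly inside $\Lambda^{*}$, so by (\ref{eq:otr}) there is a $\phi \in Co(\Matching)$ with $(1+\epsilon)\sum_s \sum_k \Hslk \lambda_s < \phi_l$ for every link $l \in \Edge$. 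Letting $\beta \triangleq \min_l \big(\phi_l - (1+\epsilon)\sum_s\sum_k \Hslk \lambda_s\big) > 0$ and choosing a finite $T' > (1+\epsilon)/\beta$, the per-link arrival bound (\ref{eq:lple}) of Lemma~\ref{lem:lple} then yields $\hat{p}_l(t) < \phi_l$ for all links $l$ and all regular $t \ge T'$; that is, the instantaneous shadow arrival rate is driven strictly inside $\Lambda^{*}$ after a finite time.

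Next I would take the quadratic Lyapunov function $\hat{V}(\hat{q}(t)) = \frac{1}{2}\sum_l \hat{q}_l(t)^2$ and compute its drift from the fluid model equation (\ref{eq:lhdq2}):
\[
\DIFF \hat{V}(\hat{q}(t)) = \sum_l \hat{q}_l(t)\big(\hat{p}_l(t) - \hat{\pi}_l(t)\big) = \sum_l \hat{q}_l(t)(\hat{p}_l(t) - \phi_l) + \sum_l \hat{q}_l(t)(\phi_l - \hat{\pi}_l(t)).
\]
For the first sum, $\hat{p}_l(t) < \phi_l$ for $t \ge T'$ makes every term nonpositive; choosing $\zeta_3 > 0$ so that $\hat{V}(\hat{q}(t)) \ge \zeta_1$ forces $\max_l \hat{q}_l(t) \ge \zeta_3$ then bounds this sum above by a strictly negative constant $-\zeta_2 \triangleq -\zeta_3 \min_l(\phi_l - \hat{p}_l(t)) < 0$. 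For the second sum I would use the defining MaxWeight property of LQ-MWS: since the scheduler selects $M^* \in \argmax_{M \in \Matching} \sum_l \hat{q}_l(t) M_l$, the served rate satisfies $\hat{\pi}(t) \in \argmax_{\phi' \in Co(\Matching)} \sum_l \hat{q}_l(t)\phi'_l$, whence $\sum_l \hat{q}_l(t)\phi_l \le \sum_l \hat{q}_l(t)\hat{\pi}_l(t)$ for the particular $\phi \in Co(\Matching)$ fixed above, so the second sum is nonpositive. Combining the two, $\hat{V}(\hat{q}(t)) \ge \zeta_1$ implies $\DIFF \hat{V}(\hat{q}(t)) \le -\zeta_2$ for all regular $t \ge T'$, and the conclusion $\|\hat{q}(t)\| \le \zeta$ for all $t \ge T_3$ follows by the standard Lyapunov-drift drainage argument exactly as in Lemma~\ref{lem:sub-stable}.

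The step that differs most from the HQ-MWS proof is the second-term MaxWeight argument, and this is precisely where the per-link structure makes life \emph{easier} rather than harder: in the HQ-MWS analysis one must first select, for each link, the per-hop queue $k^*(l)$ of maximal shadow length and track that only that queue is served (the bookkeeping $\hat{\pi}_l(t)=\sum_k \hat{\pi}_{l,k}(t)$ with $\hat{\pi}_{l,k}(t)=0$ off the argmax), whereas here the weight used by the scheduler \emph{is} the Lyapunov coordinate $\hat{q}_l(t)$, so the inequality $\sum_l \hat{q}_l \phi_l \le \sum_l \hat{q}_l \hat{\pi}_l$ is immediate. Consequently the only genuine work is (i) confirming that Lemma~\ref{lem:lple}'s bound, already established by reusing the argument of Lemma~\ref{lem:ple}, survives the $\delta \to 0$ limit so that $\hat{p}_l(t) < \phi_l$ holds after finite time, and (ii) checking that the served rate $\hat{\pi}(t)$ is a legitimate element of $Co(\Matching)$ and hence a valid competitor in the argmax. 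Everything else is the routine Lyapunov-drift conclusion, identical to the HQ-MWS case, which is why the paper omits the detailed proof.
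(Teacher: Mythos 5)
Your proposal is correct and is essentially the proof the paper intends: the paper omits it precisely because it follows the argument of Lemma~\ref{lem:sub-stable} verbatim, with Lemma~\ref{lem:lple} replacing Lemma~\ref{lem:ple}, the choice $T' > (1+\epsilon)/\beta$ replacing $T' > (1+\epsilon)L^{\max}/\beta$, and the per-hop bookkeeping ($k^*(l)$, the splitting $\phi_l = \sum_k \phi_{l,k}$) dropping out, exactly as you observe. The only cosmetic fix needed is to define $\zeta_2$ via the time-uniform bound $\zeta_2 \triangleq \zeta_3 \min_l \bigl(\phi_l - (1+\epsilon)(\sum_s\sum_k \Hslk\lambda_s + 1/T')\bigr)$ rather than through the time-dependent quantity $\hat{p}_l(t)$, which is what the paper does in the corresponding step of Lemma~\ref{lem:sub-stable}.
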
 

The proof is similar to that of Lemma~\ref{lem:sub-stable} and is thus omitted.

\section{Proof of Proposition~\ref{pro:plq}} \label{app:pro:plq}
To show the stability of the network under PLQ-MWS, it is enough to
show that the fluid limit model of the joint system of data queues
and shadow queues is stable. Since the fluid limit model for the 
sub-system of shadow queues is stable from Lemma~\ref{lem:sub-stable-lq},
it remains to show that the fluid model for the sub-system of data
queues is stable, i.e., it is equivalent to show that all the sub-queues 
for hop-class $k$ packets are stable for each $1\le k \le L^{\max}$. 
We will prove the stability of sub-queues via a hop-by-hop inductive 
argument. 

Let $Q_\lk(t)$ denote the number of packets of hop-class $k$ 
at $Q_l$ at time slot $t$, and let $A_\lk(t)$, $D_\lk(t)$, $\Pi_\lk(t)$, 
$\Psi_\lk(t)$ and $P_\lk(t)$ denote the cumulative arrival, cumulative 
departure, service, departure and arrival for packets of hop-class $k$ 
at $Q_l$, respectively. As before, we establish the fluid limits of the 
system, and obtain (\ref{eq:lflt})-(\ref{eq:lpipi}) and the following 
additional fluid model equations: for all (scaled) time $t \ge 0$,
\begin{eqnarray}
&& \textstyle	a_\lk(t) = \sum_s \Hslk u_\sk(t), \label{eq:lkar} \\
&& \textstyle	a_\lk(t) = \int_0^t p_\lk(\tau) d\tau, \label{eq:lkap} \\
&& \textstyle	\frac{d}{dt} q_\lk(t) = p_\lk(t) - \psi_\lk(t), \label{eq:lkdp1} \\
&& \textstyle	\frac{d}{dt} q_\lk(t) = \left\{
\begin{array}{ll}
p_\lk(t) - \pi_\lk(t), &~\text{if}~q_\lk(t) > 0, \\
(p_\lk(t) - \pi_\lk(t))^+, &~\text{otherwise}.
\end{array}
\right. \label{eq:lkdq2}
\end{eqnarray}

Clearly, packets of hop-class $k$ at link $l$ will not be transmitted 
under PLQ-MWS unless link $l$ is active at time slot $t$ and $\sum_{j<k} 
Q_{l,j}(t) < c_l$ (Equivalently, $Q_{l,j}(t)=0$ for all $j<k$ in our 
setting, since $c_l =1$.), i.e., for all $1 \le k \le L^{\max}$, we have
\begin{equation}
\textstyle \Pi_\lk(t)=\left(\Pi_l(t) - \sum_{j<k} Q_{l,j}(t) \right)^+,
\end{equation}
where $\Pi_l(t)=1$, if link $l$ is active at time slot $t$, and
$\Pi_l(t)=0$, otherwise. Hence, we have an additional fluid model 
equation as follows: 
\begin{equation}
\label{eq:pilk}
\textstyle \pi_\lk(t) = \pi_l(t) - \sum_{j<k} \psi_{l,j}(t), 
\end{equation}
for all $1 \le k \le L^{\max}$, and in particular, we have 
\begin{equation}
\label{eq:pil1}
\textstyle \pi_{l,1}(t)=\pi_l(t),
\end{equation}
for all $l\in\Edge$ and for all $t\ge0$.

From Lemma~\ref{lem:sub-stable-lq}, the fluid limit model for the 
sub-system consisting of shadow queues is stable, i.e., there exists 
a finite $T_3 > 0$ such that, for all $l\in\Edge$ and for all time $t\ge T_3$,
\begin{equation}
\label{eq:pipip}
\textstyle \pi_l(t) = \hat{\pi}_l(t) \ge \hat{p}_l(t).
\end{equation}

Next, we show the stability of sub-queues by induction. 

\noindent {\bf Base Case:} 

We first show that sub-queues $q_{l,1}$ are stable for all $l\in\Edge$. 
Note that $\Expect [\hat{P}_l(t)]=(1+\epsilon) \frac {A_l(t)} {t}\ge 
(1+\epsilon) \frac {\sum_s H^s_{l,1} F_s(t)} {t}$, and following the 
same line of analysis for the proof of Lemma~\ref{lem:p}, we show that,
\[
\textstyle \hat{p}_l(t) \ge (1+\epsilon) \sum_s H^s_{l,1} \lambda_s,
\]
for all $t\ge0$. This, along with (\ref{eq:pil1}) and (\ref{eq:pipip}), 
implies that 
\[
\textstyle \pi_{l,1}(t)\ge (1+\epsilon) \sum_s H^s_{l,1} \lambda_s,
\]
for all $l \in \Edge$ and for all time $t\ge T_3$.

Consider the sub-system that only contains sub-queue 
$q_{l,1}$, and note that $p_{l,1}(t) = \sum_s H^s_{l,1} \lambda_s$, 
then for all $t \ge T_3$, we have $\frac{d}{dt} q_{l,1}(t) = p_{l,1}(t) 
- \pi_{l,1}(t) \le -\epsilon \sum_s H^s_{l,1} \lambda_s < 0$, if 
$q_{l,1}(t) > 0$. This implies that the sub-system that 
consists of $q_{l,1}$ is stable, for all $l \in \Edge$.

\noindent {\bf Induction Step:} 

Next, we show that, if sub-queues $q_{l,j}$ for all $l \in \Edge$ 
and all $j \le k$ is stable, then each sub-queue $q_{l,k+1}$ 
for all $l \in \Edge$ is also stable, along with the stability
of $q_{l,j}$ for all $l \in \Edge$ and all $j \le k$.

Recall that $U_\sk(t)$ is the number of packets transmitted from 
the $(k-1)$-st hop to the $k$-th hop for flow $s$ up to time slot 
$t$, and $u_\sk(t)$ is its fluid limit. Since $q_{l,j}(t)$ is stable 
for all $l \in \Edge$ and all $j \le k$, i.e., there exists a finite 
$T^k_2>0$ such that $q_{l,j}(t) = 0$ for all regular time $t \ge T^k_2$, 
then $u_{s,k+1}(t) = u_\sk(t) + q_\sk(0) = \cdots = u_{s,1}(t) + 
\sum_{h \le k} q_{s,h}(0) = \lambda_s t + \sum_{h \le k}q_{s,h}(0)$ 
for all $s \in \Flow$, for all regular time $t\ge T^k_2$. Thus, for 
all $l \in \Edge$ and for all $j \le k+1$, we have $a_{l,j}(t) = 
t \sum_s H^s_{l,j} \lambda_s + \sum_s H^s_{l,j} \sum_{h < j}q_{s,h}(0)$
from (\ref{eq:lkar}), and $p_{l,j}(t) = \sum_s H^s_{l,j} \lambda_s$ 
from (\ref{eq:lkap}) by taking derivative, for all $l \in \Edge$ and all 
regular time $t\ge T^k_1$.
Hence, from (\ref{eq:lkdp1}) and the stability of $q_{l,j}$ (i.e., 
$\frac{d}{dt} q_{l,j}(t) = 0$) for all $j \le k$, we have that for all 
$j \le k$,
\begin{equation}
\label{eq:psil}
\textstyle \psi_{l,j}(t)=p_{l,j}(t)=\sum_s H^s_{l,j} \lambda_s.
\end{equation}
Note that since
\[
\textstyle \Expect [\hat{P}_l(t)] = (1+\epsilon) \frac {A_l(t)} {t} 
\ge (1+\epsilon) \frac {\sum_s \sum_{j \le k+1} H^s_{l,j} A_{l,j}(t)} {t},
\]
we can obtain that
\begin{equation}
\label{eq:pll}
\textstyle \hat{p}_l(t) \ge (1+\epsilon) \sum_s \sum_{j \le k+1} H^s_{l,j} \lambda_s, 
\end{equation}
following the same line of analysis of Lemma~\ref{lem:p}. Hence, from 
(\ref{eq:pilk}), (\ref{eq:pipip}), (\ref{eq:psil}) and (\ref{eq:pll}), 
we have that for all $j \le k$, 
\begin{equation}
\label{eq:pil}
\textstyle \pi_{l,k+1}(t) \ge (1+\epsilon) \sum_s H^l_{s,k+1} \lambda_s + 
\epsilon \sum_s \sum_{j \le k} H^s_{l,j} \lambda_s.
\end{equation} 
This implies that for all time $t \ge T^k_2$, $\frac{d}{dt} q_{l,k+1}(t) = 
p_{l,k+1}(t) - \pi_{l,k+1}(t) \le -\epsilon \sum_s \sum_{j \le k+1} H^s_{l,j} 
\lambda_s < 0$, if $q_{l,k+1}(t)>0$. Hence, we can conclude that $q_{l,k+1}$ 
is stable for all $l \in \Edge$.

Now by induction, we can show that all the data queues in fluid limits are stable.
With Lemma~\ref{lem:sub-stable-lq}, this implies that the fluid limit model of the 
joint system of data queues and shadow queues is stable. Then, we can conclude 
Proposition~\ref{pro:plq} following the same arguments used in the proof of 
Proposition~\ref{pro:hq}.

\section{Proof of Lemma~\ref{lem:mon}} \label{app:lem:mon}
Recall that $\Route(s)$ denotes the loop-free route of the flow $s$. We prove 
Lemma~\ref{lem:mon} in a constructive way, i.e., for a network where flows do 
not form loops, we will give an algorithm that generates a ranking such that 
the following statements in Lemma~\ref{lem:mon} hold: 1) \high{for} any flow 
$s \in \Flow$, the ranks are monotonically increasing when one traverses the 
links on the route of the flow $s$ from $l^s_1$ to $l^s_{|\Route(s)|}$, i.e., 
$r(l^s_i) < r(l^s_{i+1})$ for all $1 \le i < |\Route(s)|$; and 2) the packet 
arrivals at a link are either exogenous, or forwarded from links with a smaller 
rank.

We start with some useful definitions. 
\begin{definition}
\label{def:connected}
Two flows $s_1, s_2 \in \Flow$ are \emph{connected}, if they have common (directed) 
links on their routes, i.e., $\Route(s_1) \bigcap \Route(s_2) \neq \emptyset$, and 
\emph{disconnected}, otherwise. A sequence of flows $(\tau_1, \cdots, \tau_n)$ 
is a \emph{communicating sequence}, if every two adjacent flows $\tau_i$ and 
$\tau_{i+1}$ are connected with each other. Two flows $s_1$ and $s_2$ 
\emph{communicate}, if there exists a communicating sequence between $s_1$ and $s_2$.
\end{definition}

\begin{definition}
\label{def:component}
Let $\Flow(l) \subseteq \Flow$ denote the set of flows passing through link 
$l$, and let $\Flow(\Component) \triangleq \bigcup_{l \in \Component} \Flow(l)$ 
denote the set of flows passing through a set of links $\Component \subseteq \Edge$. 
A non-empty set of links $\Component$ is called a \emph{component}, if the following 
conditions are satisfied:
\begin{enumerate}
\item $\Component = \bigcup_{s \in \Flow(\Component)} \Route(s)$.
\item Either $|\Flow(\Component)| = 1$, or any two flows $s_1, s_2 
\in \Flow(\Component)$ communicate.
\end{enumerate}
\end{definition}

\begin{definition}
\label{def:fl}
Consider a component $\Component$, a sequence\footnote{By slightly abusing the notation, 
we also use $(s_1,s_2,\cdots,s_N)$ to denote the set of unique elements of the sequence.} of 
flows $(s_1,s_2,\cdots,s_N) \subseteq \Flow(\Component)$, where $N \ge 2$, is said to form a 
\emph{flow-loop}, if one can find two links $l^{s_n}_{i_n}$ and $l^{s_n}_{j_n}$ for each 
$n=1,2,\cdots,N$, satisfying
\begin{enumerate}
\item $i_n < j_n$ for each $1 \le n \le N$,
\item $\left\{
\begin{array}{ll}
l^{s_n}_{j_n} = l^{s_{n+1}}_{i_{n+1}} ~\text{for each}~ n < N, \\
l^{s_N}_{j_N} = l^{s_1}_{i_1}.
\end{array}
\right.$
\end{enumerate}
\end{definition}

An example of a component that contains a flow-loop is presented in Fig.~\ref{fig:fl},
where the network consists of seven links and six flows. The routes of the flows are 
as follows: $\Route(s_1)=(1,2,3),\Route(s_2)=(3,4),\Route(s_3)=(4,5),\Route(s_4)=(5,6),
\Route(s_5)=(6,7),\Route(s_6)=(7,2)$.

\begin{definition}
A component $\Component$ is called a \emph{flow-tree}, if $\Component$
does not contain any flow-loops.
\end{definition}

\begin{definition}
\label{def:fp}
Consider a component $\Component$, a link $l \in \Component$ is called a 
\emph{starting link}, if there exists a flow $s^{\prime} \in \Flow(\Component)$ 
such that $H^{s^{\prime}}_{l,1}=1$ and $\Hslk=0$ for all other $s \in \Flow(\Component)$ 
and all $k \ge 2$, i.e., a starting link has only exogenous arrivals. Similarly, 
a link $l \in \Component$ is called an \emph{ending link}, if there exists a flow 
$s^{\prime \prime} \in \Flow(\Component)$ such that, $H^{s^{\prime \prime}}_{l,|\Route(s^{\prime 
\prime})|}=1$, and $\Hslk=0$ for all other $s \in \Flow(\Component)$ and all $k < |\Route(s)|$,
i.e., an ending link transmits only packets that will leave the system immediately. 
A path $\Path=( l_{\Path,1}, l_{\Path,2}, \cdots, l_{\Path, len(\Path)} )$, where 
$len(\Path)$ denotes the length of path $\Path$ and $l_{\Path,i}$ denotes the $i$-th 
hop link of $\Path$, is called a \emph{flow-path}, if the following conditions are satisfied:
\begin{enumerate}
\item Links $l_{\Path, 1}$ and $l_{\Path, len(\Path)}$ are the only starting and 
ending link on the path $\Path$, respectively.
\item Either $len(\Path)=1$, or for each $1 \le i < len(\Path)$, there exists a flow 
$s$ such that, $l_{\Path,i} \in \Route(s)$ and $l_{\Path,i+1} \in \Route(s)$, i.e., 
two adjacent links $l_{\Path,i}$ and $l_{\Path,i+1}$ are on the route of some flow.
\end{enumerate}
\end{definition} 

In general, a flow-tree consists of multiple (possibly overlapped) flow-paths.
An illustration of flow-loop, flow-path, and flow-tree is presented in Fig.~\ref{fig:component}.
It is clear from Definition~\ref{def:fl} that, if there exists a flow-loop 
in a component, this component must contain a cycle of links, while the opposite 
is not necessarily true. For example, the components in Figs.~\ref{fig:ft8} and 
\ref{fig:ft9} both contain a cycle, while neither of them contains a flow-loop.

\begin{figure}[t]
\centering
\subfigure[A component containing a flow-loop]{\epsfig{file=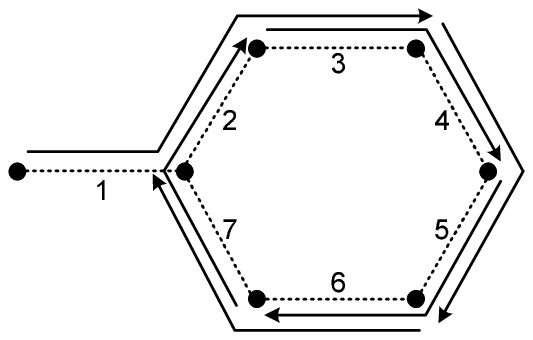,width=0.3\linewidth}\label{fig:fl}}~~~~~~~~~~
\subfigure[A flow-tree with one flow-path]{\epsfig{file=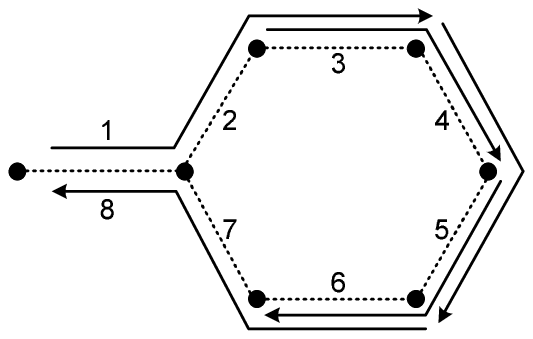,width=0.3\linewidth}\label{fig:ft8}} \\
\subfigure[A flow-tree with five flow-paths]{\epsfig{file=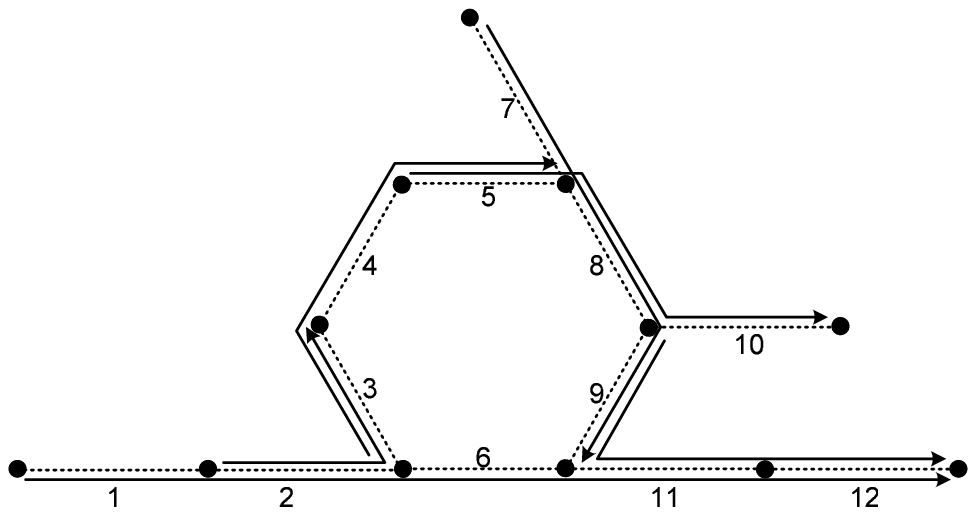,width=0.5\linewidth}\label{fig:ft9}}
\caption{Examples of different types of components. Links and flows are denoted by dashed 
lines with numbers and solid lines with arrows, respectively. 
Note that links without data flows are omitted (not numbered), and two numbers 
labeled beside a dashed line stand for two links with opposite directions, e.g., 
links 1 and 8 in Fig.~\ref{fig:ft8}. 
In Fig.~\ref{fig:fl}, all flows together forms a flow-loop $(2,3,4,5,6,7)$, 
and the component is not a flow-tree. 
In Fig.~\ref{fig:ft8}, the component is a flow-tree and consists of one 
single flow-path: $(1,2,3,4,5,6,7,8)$.
In Fig.~\ref{fig:ft9}, the component is a flow-tree and consists of five 
flow-paths: $\Path_1=(1,2,3,4,5,8,10), \Path_2=(1,2,6,11,12), \Path_3=(7,8,10),
\Path_4=(7,8,9,11,12)$ and $\Path_5=(1,2,3,4,5,8,9,11,12)$.}
\label{fig:component}
\end{figure}

Now, we describe Algorithm~\ref{alg:ra}, which is used to generate a ranking 
for a network without flow-loops such that the monotone property in Lemma~\ref{lem:mon}
holds. 

Let $\Edge(\Path)$ denote the set of links belonging to flow-path $\Path$. Let 
$\Tree$ denote a flow-tree, and let $\Pathset(\Tree)$ denote the set of all 
flow-paths in $\Tree$, i.e., $\Pathset(\Tree) \triangleq \{\Path~\text{is a 
flow-path} ~|~ \Edge(\Path) \subseteq \Tree\}$. Let $\Path_k(\Tree)$ denote the flow-path
chosen in the $k$-th while-loop when running Algorithm~\ref{alg:ra} for $\Tree$,
and let $\Pathset_k(\Tree) \triangleq \bigcup_{j<k} \Path_k(\Tree)$. Let $r(l)$ 
denote the rank of link $l \in \Tree$, and let $\Pathset(l)$ denote the set of 
flow-paths passing through link $l$, i.e., $\Pathset(l) \triangleq \{ \Path \in 
\Pathset(\Tree) ~|~ l \in \Edge(\Path) \}$. Let $\Gamma_k(l) \triangleq \{ l^{\prime} 
\in \bigcup_{\Path \in \Pathset(l) \bigcap \Pathset_k(\Tree)} \Edge(\Path) ~|~ 
r(l^{\prime}) > r(l)\}$ denote the set of links that belong to the flow-paths of 
$\Pathset(l) \bigcap \Pathset_k(\Tree)$ (i.e., flow-paths that pass through link 
$l$ and are chosen in the $j$-th while-loop for $j<k$) and have a rank greater 
than $r(l)$. 

\begin{algorithm}[t]
\caption{Rank Assignment} \label{alg:ra}
\begin{algorithmic}[1]
\Procedure{AssignRank}{$\Tree$}
\State $r(l) \gets -1$ for all $l \in \Tree$ \label{line:init}
\State $\Pathset^{\prime} \gets \Pathset(\Tree)$ 
\While{$\Pathset^{\prime} \neq \emptyset$} \label{line:assign1}
\State pick a flow-path $\Path \in \Pathset^{\prime}$
\State $count \gets 1$ \label{line:count}
\For{$1 \le i \le len(P)$}
\If{$r(l_{\Path,i})=-1$} 
\State $r(l_{\Path,i}) \gets count$ \label{line:case1}
\ElsIf{$r(l_{\Path,i}) \ge count$}
\State $count \gets r(l_{\Path,i})$ \label{line:case2}
\Else
\ForAll{$l \in \Gamma_k(l_{\Path,i})$} \label{line:case31} 
\State $r(l) \gets r(l) + (count - r(l_{\Path,i}))$
\EndFor \label{line:case32}
\State $r(l_{\Path,i}) \gets count$ \label{line:case3}
\EndIf
\State $count \gets count + 1$ \label{line:count++} 
\EndFor
\State $\Pathset^{\prime} \gets \Pathset^{\prime} \backslash \{ \Path \}$ \label{line:pathset}
\EndWhile \label{line:assign2}
\EndProcedure
\end{algorithmic}
\end{algorithm}

The details of ranking are provided in Algorithm 1. In line~\ref{line:init}, 
we do initialization by setting the rank of all links of $\Tree$ to $-1$. 
In lines~\ref{line:assign1}-\ref{line:assign2}, we pick a flow-path $\Path 
\in \Pathset^{\prime}$, and assign a rank to each link of $\Path$ starting 
from link $l_{\Path,1}$. We may update a link's rank if we already assigned 
a rank to that link. The set of flow-paths $\Pathset^{\prime}$ is updated in 
line~\ref{line:pathset}. The while-loop continues until $\Pathset^{\prime}$ 
becomes empty. We set $count=1$ in line~\ref{line:count}, and assign a rank 
to links $l_{\Path,i}$ for each $1 \le i \le len(\Path)$. For each link $l_{\Path,i}$, 
we consider the following three cases: 1) $r(l_{\Path,i})=-1$; 2) $r(l_{\Path,i}) 
\ge count$; 3) $0 < r(l_{\Path,i}) < count$. 

\noindent {\bf Case 1):} link $l_{\Path,i}$ has not been assigned a rank
yet. We set $r(l_{\Path,i}) = count$ in line~\ref{line:case1}.

\noindent {\bf Case 2):} link $l_{\Path,i}$ already has a rank that is no 
smaller than the current $count$. In this case, the rank does not need an 
update, and we set $count = r(l_{\Path,i})$ in line~\ref{line:case2}.

\noindent {\bf Case 3):} link $l_{\Path,i}$ already has a rank that is smaller 
than the current $count$. In this case, we update the rank of some other links
as well as that of link $l_{\Path,i}$. Specifically, for all the links $l \in 
\Gamma_k(l_{\Path,i})$, i.e., links that belong to the flow-paths in $\Pathset(l) 
\bigcap \Pathset_k(\Tree)$ and have a rank greater than $r(l_{\Path,i})$, we 
increase their ranks by $count-r(l_{\Path,i})$ in lines~\ref{line:case31}-\ref{line:case32}.
Then, we update the rank of link $l_{\Path,i}$ by setting it to $count$ in 
line~\ref{line:case3}.

After considering all three cases, we increase the value of $count$ 
by 1 in line~\ref{line:count++}. 

The intention of this ranking is to assign a rank to each link such that the ranks 
are monotonically increasing when one traverses any flow-path from its starting
link. Algorithm~\ref{alg:ra} may give different ranking to a given flow-tree 
depending on the order of choosing flow-paths. We give two examples for illustration 
as follows. In Fig.~\ref{fig:ft8}, one (and the unique one in this case) example 
of the ranking for the flow-tree is $(1,2,3,4,5,6,7,8)$ for links 1-8. In 
Fig.~\ref{fig:ft9}, one example of the ranking for the flow-tree is $(1,2,3,
4,5,3,1,6,7,7,8,9)$ for links 1-12. The evolution of the ranking for the 
flow-tree in Fig.~\ref{fig:ft9} is presented in Table~\ref{tab:rank}, where 
flow-path $\Path_i$ is chosen in the $i$-th while-loop, for $i=1,2,3,4,5$.

\begin{table}[!t]
\caption{The evolution of the ranking for the flow-tree in Fig.~\ref{fig:ft9}}
\label{tab:rank}
\centering
\begin{tabular}{|c|r@{}c@{,}c@{,}c@{,}c@{,}c@{,}c@{,}c@{,}c@{,}c@{,}c@{,}c@{,}c@{}l|}
\hline
\multicolumn{1}{|c|}{Iteration $k$} & \multicolumn{14}{|c|}{Ranking of links~$1-12$} \\
\hline
0 &  ( & -1 & -1 & -1 & -1 & -1 & -1 & -1 & -1 & -1 & -1 & -1 & -1 & ) \\
1 &  ( &  1 &  2 &  3 &  4 & 5  & -1 & -1 &  6 & -1 &  7 & -1 & -1 & ) \\
2 &  ( &  1 &  2 &  3 &  4 & 5  &  3 & -1 &  6 & -1 &  7 &  4 &  5 & ) \\
3 &  ( &  1 &  2 &  3 &  4 & 5  &  3 &  1 &  6 & -1 &  7 &  4 &  5 & ) \\
4 &  ( &  1 &  2 &  3 &  4 & 5  &  3 &  1 &  6 &  7 &  7 &  8 &  9 & ) \\
5 &  ( &  1 &  2 &  3 &  4 & 5  &  3 &  1 &  6 &  7 &  7 &  8 &  9 & ) \\
\hline
\end{tabular}
\end{table}

Since we assume $\sum_s \sum^{|\Route(s)|}_{k=1}\Hslk \ge 1$ for all 
$l \in \Edge$, a network graph $\Graph$ can be decomposed into multiple
disjoint components. Clearly, a network with no flow-loops is equivalent
to that all the components of the network are flow-trees. Without loss
of generality, in the rest of the proof, we assume that the network that
we consider consists of one single component, which is a flow-tree under 
the condition of Lemma~\ref{lem:mon}. The same argument applies to the 
case with multiple disjoint components. We claim the following lemma and 
provide its proof in Appendix~\ref{app:lem:alg}.
\begin{lemma}
\label{lem:alg}
Algorithm~\ref{alg:ra} assigns a rank to each link of a flow-tree $\Tree$
such that for any flow-path $\Path \in \Pathset(\Tree)$, the ranks are 
monotonically increasing when one traverses the links of $\Path$ from
$l_{\Path,1}$ to $l_{\Path,len(\Path)}$, i.e., $r(l_{\Path,i}) < r(l_{\Path,i+1})$
for all $1 \le i < len(\Path)$ and for any $\Path \in \Pathset(\Tree)$.
\end{lemma}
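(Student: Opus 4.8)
The plan is to prove Lemma~\ref{lem:alg} by induction on the index $k$ of the while-loop in Algorithm~\ref{alg:ra}, carrying the invariant that after the $k$-th iteration every flow-path $\Path$ chosen in the first $k$ iterations is monotone, i.e.\ $r(l_{\Path,i}) < r(l_{\Path,i+1})$ for all $1 \le i < len(\Path)$. The base case is immediate: the first flow-path meets only Case~1, so its links receive the strictly increasing ranks $1,2,\dots,len(\Path)$. For the inductive step I would process a path $\Path$ and show two things: (a) $\Path$ itself is monotone after the iteration, and (b) the rank updates triggered during the iteration leave every previously processed flow-path monotone.

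First I would isolate the \emph{local} behaviour of a single iteration. The claim is that the working variable $count$ strictly increases from one link of $\Path$ to the next, and that once a link $l_{\Path,i}$ is handled its rank equals the value of $count$ used at that step. Cases~1 and~3 set $r(l_{\Path,i}) = count$ directly, whereas Case~2 raises $count$ up to the (already larger) existing rank of $l_{\Path,i}$; in all three cases the trailing $count \gets count+1$ forces the rank attached to $l_{\Path,i+1}$ to exceed that of $l_{\Path,i}$. This yields (a), monotonicity along the path currently being processed.

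The heart of the argument, and the step I expect to be the main obstacle, is (b): showing that the bulk update in Case~3 preserves monotonicity on \emph{all} previously processed flow-paths simultaneously. Using the inductive invariant, the set $\Gamma_k(l_{\Path,i})$ of links on processed paths through $l_{\Path,i}$ with rank exceeding $r(l_{\Path,i})$ is exactly the set of links lying strictly \emph{downstream} of $l_{\Path,i}$ on those paths. Shifting every link of $\Gamma_k(l_{\Path,i})$, together with $l_{\Path,i}$ itself, by the common offset $\Delta = count - r(l_{\Path,i}) \ge 1$ then leaves exactly the upstream links fixed; on each affected path the single junction inequality (between $l_{\Path,i}$ and its predecessor) is only strengthened, while every downstream inequality is preserved because both of its sides move by the same $\Delta$. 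I would verify these two cases separately and conclude that monotonicity survives on every path that passes through $l_{\Path,i}$.

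The delicate point—and where the no-flow-loop hypothesis is essential—is that a single link lies on many flow-paths, so I must also check that a shifted link does not separate a monotone pair on a processed path that does \emph{not} pass through $l_{\Path,i}$. Concretely, if a downstream link $l'\in\Gamma_k(l_{\Path,i})$ has a flow-successor $l''$ on some other processed path, the shift is consistent only if $l''$ is itself shifted, i.e.\ only if $\Gamma_k(l_{\Path,i})$ is closed under flow-successors. The plan is to argue that the flow-tree property makes the relation ``$l'$ follows $l$ on some flow'' generate a strict partial order, so that ``rank $> r(l_{\Path,i})$ on a processed path through $l_{\Path,i}$'' coincides globally with ``downstream of $l_{\Path,i}$''; a violation of this closure would furnish a link that is downstream of $l_{\Path,i}$ along one path yet an unshifted neighbour along another, which is precisely the cyclic configuration forbidden by the absence of flow-loops (Definition~\ref{def:fl}). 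A secondary subtlety I would address is the self-interaction within the current iteration: since $\Gamma_k$ is taken over paths chosen \emph{before} $\Path$, a Case~3 update while walking along $\Path$ may raise the rank of a link of $\Path$ not yet reached, and I would check that such updates only increase those ranks and remain compatible with the $count$ bookkeeping of the local argument, so monotonicity on $\Path$ is retained. Combining the two conclusions closes the induction and establishes the lemma.
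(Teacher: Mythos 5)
Your proposal is correct and takes essentially the same approach as the paper's proof: a double induction (outer over the while-loop iterations with the invariant that all previously chosen flow-paths are monotone, inner along the links of the current path), the same three-case analysis of the algorithm's rank assignment, and the same adjacent-pair case analysis for the Case-3 bulk update, with the no-flow-loop property invoked by contradiction to rule out exactly the two bad configurations you identify (a shifted link having an unshifted flow-successor, and the junction/upstream links being shifted). Your ``closure under flow-successors'' and ``upstream links fixed'' claims correspond precisely to the paper's Case iv infeasibility argument and its proof that $l_{\Path_{k+1},1} \notin \Omega$ via a flow-loop contradiction.
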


Now, consider any flow $s \in \Flow$. The statement 1) holds trivially 
for the case of $|\Route(s)|=1$. Hence, we assume that $|\Route(s)| > 1$. 
It is clear that for any $1 \le i < |\Route(s)|$, the links $l^s_i$ and 
$l^s_{i+1}$ must belong to some flow-path $\Path \in \Pathset(\Edge)$, 
where $\Edge$ is assumed to be a flow-tree. Therefore, the statement 1)
follows from Lemma~\ref{lem:alg}.

Note that the packet arrivals at a link are either exogenous or from the 
previous hop on the route of some flow passing through it. Owing to the 
monotonically increasing rank assignment, it is clear that these previous 
hop links have a smaller rank. Hence, the statement 2) immediately follows 
from statement 1). This completes the proof of Lemma~\ref{lem:mon}.

\section{Proof of Lemma~\ref{lem:alg}} \label{app:lem:alg}
We want to show that Algorithm~\ref{alg:ra} assigns a rank to 
each link of flow-tree $\Tree$ satisfying that $r(l_{\Path,i}) 
< r(l_{\Path,i+1})$, for all $1 \le i < len(\Path)$ and for any 
$\Path \in \Pathset(\Tree)$. We use the method of induction. 

Recall that $P_k(\Tree)$ denotes the flow-path chosen in the 
$k$-th while-loop, and $\Pathset_k(\Tree) = \bigcup_{j<k} 
\Path_k(\Tree)$. We denote $P_k(\Tree)$ and $\Pathset_k(\Tree)$ 
by $P_k$ and $\Pathset_k$, respectively, whenever there is no
confusion.

\noindent {\bf Base Case:} 

It is trivial for the case of $k=1$. Since we initialize $r(l_{\Path_1,i})
= -1$ for all $1 \le i \le len(\Path_1)$, we should have $r(l_{\Path_1,i})
= i$ for all $1 \le i \le len(\Path_1)$ from lines~\ref{line:case1} and 
\ref{line:count++} of Algorithm~\ref{alg:ra}, after running the first 
while-loop.

\noindent {\bf Induction Step:} 

We show that after running the $k$-th while-loop of Algorithm~\ref{alg:ra}, if 
\begin{equation}
\label{eq:assump}
r(l_{\Path_j,i}) < r(l_{\Path_j,i+1}) ~\text{for all}~ 1 \le i < len(\Path_j) 
~\text{and for all}~ j \le k, 
\end{equation}
then after running the $(k+1)$-st while-loop the same result holds for all 
$j \le k+1$. In other words, once Algorithm~\ref{alg:ra} assigns the ranks 
for links of a flow-path in a monotonically increasing way, then this property 
does not change afterward. We also prove this induction step using method of 
induction.

We first show that if (\ref{eq:assump}) holds, then after the first iteration 
(for assigning a rank to link $l_{\Path_{k+1},1}$) of the $(k+1)$-st while-loop, 
(\ref{eq:assump}) still holds. When we start the $(k+1)$-st while-loop, we have 
$count = 1$, and $r(l_{\Path_{k+1},1})$ must be in one of the following two cases: 
1) $r(l_{\Path_{k+1},1})=-1$ if the rank of link $l_{\Path_{k+1},1}$ is not assigned 
yet, or 2) $r(l_{\Path_{k+1},1}) \ge count$, otherwise. Then, Algorithm~\ref{alg:ra} will 
assign a rank of 1 to link $l_{\Path_{k+1},1}$ in the former case (line~\ref{line:case1}), 
or will not change its rank in the latter case (line~\ref{line:case2}). Hence, 
(\ref{eq:assump}) still holds.

Now suppose that after assigning the ranks of links up to link $l_{\Path_{k+1},n}$,
which is the $n$-th hop of the flow-path chosen in the $(k+1)$-st while-loop, we have 
$r(l_{\Path_{k+1},m-1}) < r(l_{\Path_{k+1},m})$ for all $1 < m \le n$, and (\ref{eq:assump})
holds. Then we want to show that after assigning a rank to the next hop $l_{\Path_{k+1}, 
n+1}$, we still have both $r(l_{\Path_{k+1},m-1}) < r(l_{\Path_{k+1},m})$ for all $m \le 
n+1$, and (\ref{eq:assump}). We show this when $n=2$ for ease of presentation. One can 
easily extend the analysis to the case when $n \ge 2$. After assigning a rank to link 
$l_{\Path_{k+1},1}$, we have $count = r(l_{\Path_{k+1},1}) + 1$ from line~\ref{line:count++}
of Algorithm~\ref{alg:ra}. At this moment, the rank of link $l_{\Path_{k+1},2}$ is either 
1) $r(l_{\Path_{k+1},2})=-1$, 2) $r(l_{\Path_{k+1},2}) \ge count$, or 3) 
$0 < r(l_{\Path_{k+1},2}) < count$. We discuss the three cases as follows.

\noindent {\bf Case 1): $r(l_{\Path_{k+1},2})=-1$.}

In this case, since Algorithm~\ref{alg:ra} sets $r(l_{\Path_{k+1},2})$ 
to $count$ from line~\ref{line:case1}, we have $r(l_{\Path_{k+1},2}) > 
r(l_{\Path_{k+1},1})$. The rank of links of $\Path_j$ for all $j \le k$ 
is not changed, and (\ref{eq:assump}) still holds.

\noindent {\bf Case 2): $r(l_{\Path_{k+1},2}) \ge count$.}

In this case, since Algorithm~\ref{alg:ra} does not change the rank 
$r(l_{\Path_{k+1},2})$, we have $r(l_{\Path_{k+1},2}) \ge count > 
r(l_{\Path_{k+1},1})$. The rank of links of $\Path_j$ for all 
$j \le k$ is not changed, and (\ref{eq:assump}) still holds.

\noindent {\bf Case 3): $0 < r(l_{\Path_{k+1},2}) < count$.}

Note that in this case, we have $r(l_{\Path_{k+1},1}) \ge r(l_{\Path_{k+1},2})$ 
before assigning a new rank to link $l_{\Path_{k+1},2}$. Since Algorithm~\ref{alg:ra}
sets $r(l_{\Path_{k+1},2})$ to $count$ in line~\ref{line:case3}, we will have 
$r(l_{\Path_{k+1},2}) > r(l_{\Path_{k+1},1}) = count-1$. Now what remains to show 
is that after the rank update for links of $\Gamma_k(l_{\Path_{k+1},2})$ in 
lines~\ref{line:case31}-\ref{line:case32}, we still have $r(l_{\Path_{k+1},2}) 
> r(l_{\Path_{k+1},1})$ and (\ref{eq:assump}) still holds.

Recall that $\Gamma_k(l) = \{ l^{\prime} \in \bigcup_{\Path \in \Pathset(l) \bigcap 
\Pathset_k} \Edge(\Path) ~|~ r(l^{\prime}) > r(l)\}$ denotes the set of links 
that belong to the flow-paths of $\Pathset(l) \bigcap \Pathset_k(\Tree)$ (i.e., 
flow-paths that pass through link $l$ and are chosen in the $j$-th while-loop for $j<k$)
and have a rank greater than $r(l)$. Let $\Omega \triangleq \Gamma_{k+1}(l_{\Path_{k+1},2}) 
\bigcup \{l_{\Path_{k+1},2}\}$ denote the union of $\Gamma_{k+1}(l_{\Path_{k+1},2})$
and $\{l_{\Path_{k+1},2}\}$. Algorithm~\ref{alg:ra} updates only the rank of the links 
in $\Omega$ by adding the rank with $count - r(l_{\Path_{k+1},2})$. We claim that 
$l_{\Path_{k+1},1} \notin \Omega$, i.e., the rank $r(l_{\Path_{k+1},1})$ is not 
changed after the update, which implies that $r(l_{\Path_{k+1},2}) > r(l_{\Path_{k+1},1})$
still holds after the update. We prove this claim by contradiction. Suppose that 
$l_{\Path_{k+1},1} \in \Omega$, then there exists a flow-path $\Path^{\prime} \in 
\Pathset(l_{\Path_{k+1},2}) \bigcap \Pathset_{k+1}$ such that $l_{\Path_{k+1},1}, 
l_{\Path_{k+1},2} \in \Edge(\Path^{\prime})$ and link $l_{\Path_{k+1},2}$ appears 
earlier than $l_{\Path_{k+1},1}$ on the flow-path $\Path^{\prime}$. This implies 
that flow-paths $\Path^{\prime}$ and $\Path_{k+1}$ form a flow-loop, which contradicts 
with the definition of flow-tree. 

Next, we want to show that (\ref{eq:assump}) still holds after the rank update. Note that
before the rank update, due to (\ref{eq:assump}), two adjacent links $l_{\Path_j,i}$ and 
$l_{\Path_j,i+1}$ satisfy that $r(l_{\Path_j,i}) < r(l_{\Path_j,i+1})$ for any $j \le k$ 
and any $i<len(\Path_j)$. We want to show that, after the rank update, we still have 
$r(l_{\Path_j,i}) < r(l_{\Path_j,i+1})$. We consider the following four cases for two 
adjacent links $l_{\Path_j,i}$ and $l_{\Path_j,i+1}$.

\noindent {\bf Case i): $l_{\Path_j,i} \in \Omega$ and $l_{\Path_j,i+1} \in \Omega$.}

In this case, since Algorithm~\ref{alg:ra} increases the rank of links 
$l_{\Path_j,i}$ and $l_{\Path_j,i+1}$ by $count - r(l_{\Path_{k+1},2})$, 
we still have $r(l_{\Path_j,i}) < r(l_{\Path_j,i+1})$ after the update.
 
\noindent {\bf Case ii): $l_{\Path_j,i} \notin \Omega$ and $l_{\Path_j,i+1} \notin \Omega$.}

In this case, since Algorithm~\ref{alg:ra} does not change the rank of 
links $l_{\Path_j,i}$ and $l_{\Path_j,i+1}$, we still have $r(l_{\Path_j,i}) 
< r(l_{\Path_j,i+1})$ after the update.

\noindent {\bf Case iii): $l_{\Path_j,i} \notin \Omega$ and $l_{\Path_j,i+1} \in \Omega$.}

In this case, since Algorithm~\ref{alg:ra} increases the rank of link $l_{\Path_j,i+1}$ 
by $count - r(l_{\Path_{k+1},2})$ and does not change the rank of links $l_{\Path_j,i}$,
we still have $r(l_{\Path_j,i}) < r(l_{\Path_j,i+1})$ after the update.

\noindent {\bf Case iv): $l_{\Path_j,i} \in \Omega$ and $l_{\Path_j,i+1} \notin \Omega$.}

This is an infeasible case from the definition of $\Omega$ and (\ref{eq:assump}) of 
the previous step. Note that since links $l_{\Path_j,i}$ and $l_{\Path_j,i+1}$ are 
two adjacent links on the flow-path $\Path_j$, there exists a flow $s$ such that 
$l_{\Path_j,i}, l_{\Path_j,i+1} \in \Route(s)$ from the definition of flow-path 
(Definition~\ref{def:fp}), we should have $r(l_{\Path_j,i}) < r(l_{\Path_j,i+1})$ 
before the rank update. Hence if $l_{\Path_j,i} \in \Omega$, we should have 
$l_{\Path_j,i+1} \in \Omega$ from the definition of $\Omega$.

We can show the property of monotonically increasing ranking for Case 3) 
by combining sub-cases i), ii), iii) and iv). Results for Cases 1), 2) 
and 3) complete the induction step when $n=2$. One can easily extends 
the analysis to the case when $n \ge 2$, and this completes the proof.

\section{Proof of Proposition~\ref{pro:flq}} \label{app:pro:flq}
We want to show that, a network where flows do not form loops,
i.e., all the components are flow-trees, is stable under FLQ-MWS 
for any traffic with arrival rate vector that is strictly inside 
$\Lambda^{*}$.

We know from Lemma~\ref{lem:mon} that, there exists a ranking $R(\Edge)$
such that the monotone property holds. Without loss of generality, we
assume that the minimum rank is 1, and use $r(\Edge) \triangleq \max_{l 
\in \Edge} r(l)$ to denote the maximum rank among all the links. We give 
the following definitions that are used in the proof.

\begin{definition}
\label{def:depth}
We divide $\Edge$ into $r(\Edge)$ disjoint subsets: $R_k \triangleq 
\{l \in \Edge ~|~ r(l)=k \}$, for $1 \le k \le r(\Edge)$. Then $R_k$ 
is called the \emph{depth-$k$ set}, and a link $l_k \in R_k$ is called 
a \emph{depth-$k$ link}. 
\end{definition}

Recall that the fluid limit model for the sub-system consisting of shadow 
queues is stable from Lemma~\ref{lem:sub-stable-lq}. We show by induction 
that all data queues are stable.

\noindent {\bf Base Case:} 

First, Lemma~\ref{lem:mon} implies that for any $l_1 \in R_1$, its 
arrivals are exogenous, i.e., $A_{l_1}(t) = \sum_s H^s_{l_1,1} F_s(t)$. 
Following the same line of analysis for the proof of Proposition~\ref{pro:hq}, 
we can show that $\pi_{l_1}(t) \ge (1+\epsilon) \sum_{s} H^s_{l_1,1} \lambda_s$ and
$p_{l_1}(t) = \sum_{s} H^s_{l_1,1} \lambda_s$, then $\frac{d}{dt} q_{l_1}(t) 
= p_{l_1}(t) - \pi_{l_1}(t) \le -\epsilon \sum_{s} H^s_{l_1,1} \lambda_s < 0$, 
if $q_{l_1}(t) > 0$. This implies that $q_{l_1}(t)$ is stable, for all 
$l_1 \in R_1$.

\noindent {\bf Induction Step:} 

Next, we show that, if $q_l$ is stable for all $l \in \bigcup_{j \le k} R_j$, 
then $q_{l_{k+1}}$ is also stable for all $l_{k+1} \in R_{k+1}$, along with
the stability of all $q_l$, for $1 \le k < K$.

Lemma~\ref{lem:mon} implies that for any $l_{k+1} \in R_{k+1}$, its arrivals 
are either exogenous or from certain links of $\bigcup_{j \le k} R_j$. Since $q_l$ 
is stable for all $l \in \bigcup_{j \le k} R_j$, following the same line of analysis 
for the proof of Proposition~\ref{pro:hq}, we can show that there exists a finite 
time $T^k_3 > 0$ such that, for all time $t \ge T^k_3$, we have $\pi_{l_{k+1}}(t) 
\ge (1+\epsilon) \sum_{s: l_{k+1} \in \Route(s)} \lambda_s$ and $p_{l_{k+1}}(t) = 
\sum_{s: l_{k+1} \in \Route(s)} \lambda_s$. Therefore, for all time $t \ge T^k_3$,
we have $\frac{d}{dt} q_{l_{k+1}}(t) = p_{l_{k+1}}(t) - \pi_{l_{k+1}}(t) \le -\epsilon 
\sum_{s: l_{k+1} \in \Route(s)} \lambda_s < 0$, if $q_{l_{k+1}}(t) > 0$. This implies 
that $q_{l_{k+1}}$ is stable for all $l_{k+1} \in R_{k+1}$.

Therefore, the fluid limit model for the sub-sytem of data queues is stable 
from the induction. With Lemma~\ref{lem:sub-stable-lq}, this implies that 
the fluid limit model of the joint system of data queues and shadow queues 
is stable. Then, we complete the proof following the same arguments used in 
the proof of Proposition~\ref{pro:hq}.

\section{Proof of Lemma~\ref{lem:lq-csma}} \label{app:lem:lq-csma}
Given any $\gamma \in (0,1)$, suppose that $\lambda$ is strictly inside 
$(1-\gamma) \Lambda^{*}$ , then there exists a sufficiently small 
$\epsilon > 0$ such that $(1+\epsilon)\lambda$ is strictly inside 
$(1-\gamma) \Lambda^{*}$, and we can find a vector $\phi \in (1-\gamma)
Co(\Matching)$ such that $(1+\epsilon)\lambda < \phi$, 
i.e., $(1+\epsilon)\sum_s \sum_k \Hslk \lambda_s < \phi_l$ for all 
$l\in\Edge$. Let $\beta \triangleq \min_{l\in\Edge} (\phi_l - (1+\epsilon) 
\sum_s \sum_k \Hslk \lambda_s)$. By definition, we have $\beta > 0$. 
Let $T^{\prime}$ be a finite time such that $T^{\prime} > \frac {(1+\epsilon)} 
{\beta}$. Then, for all regular time $t \ge T^{\prime}$, we have
\begin{equation}
\label{eq:llp}
\hat{p}_l(t) \le (1+\epsilon) \left(\sum_s \sum_k \Hslk \lambda_s 
+ \frac {1} {t} \right) < \phi_l,
\end{equation}
from Lemma~\ref{lem:lple}.
This implies that the instantaneous arrival rate of shadow queues is strictly 
inside $(1-\gamma)$ fraction of the optimal throughput region $\Lambda^*$.

Let $W_l(\hat{q}_l) \triangleq \int^{\hat{q}_l}_0 g_l(y) dy$ and consider a 
Lyapunov function $\hat{V}(\hat{q}(t)) \triangleq \sum_l W_l(\hat{q}_l(t))$. 
\high{It is sufficient to show that for any $\zeta_1 > 0$, there exists a
$\zeta_2 > 0$ such that $\hat{V}(\hat{q}(t)) \ge \zeta_1$ implies 
$\frac{D^+}{dt^+} \hat{V}(\hat{q}(t)) \le -\zeta_2$, for any regular time 
$t \ge T^{\prime}$.} Since $W_l(\hat{q}_l)$'s and $\hat{q}_l$'s are 
differentiable, for any regular time $t \ge T^{\prime}$, we can obtain the 
derivative of $\hat{V}(\hat{q}(t))$ as
\begin{equation}
\label{eq:ldol}
\begin{split}
\frac{D^+}{dt^+} \hat{V}(\hat{q}(t)) =& \sum_{l \in \Edge} g_l(\hat{q}_l(t)) \cdot \frac {d} {dt} \hat{q}_l(t)
= \sum_{l \in \Edge} g_l(\hat{q}_l(t)) \cdot \left(\hat{p}_l(t) - \hat{\pi}_l(t)\right)  \\
=& \sum_{l \in \Edge} g_l(\hat{q}_l(t)) \cdot \left(\hat{p}_l(t) - \phi_l\right)
+ \sum_{l \in \Edge} g_l(\hat{q}_l(t)) \cdot \left(\phi_l - \hat{\pi}_l(t)\right).
\end{split}
\end{equation}

Let us choose $\zeta_3>0$ such that $\hat{V}(\hat{q}(t)) \ge \zeta_1$ implies 
$\max_l \hat{q}_l(t) \ge \zeta_3$. Then following a similar argument as in the 
proof of Lemma~\ref{lem:sub-stable}, for the final result of (\ref{eq:dol}), 
we can conclude that the first term is bounded as follows:
\[
\sum_{l \in \Edge} g_l(\hat{q}_l(t)) \cdot \left(\hat{p}_l(t) - \phi_l\right)
\le -\zeta_2 < 0,
\]
and that the second term becomes non-positive due to the following. We first note 
that $\|\hat{q}(t)\| 
> 0$ from $\hat{V}(\hat{q}(t)) > 0$. Then at time slots $\Upsilon \triangleq \{ 
\lceil \xnj t \rceil, \lceil \xnj t \rceil + 1, \cdots, \lfloor \xnj (t+\delta) 
\rfloor \}$, for any $Q_B>0$, we have $\|\hat{Q}(\tau)\| \ge Q_B$ for all time 
slots $\tau \in \Upsilon$ with large enough $j$ and small enough $\delta$. From 
Lemma~\ref{lem:weight}, given any $\theta \in (0,1)$, for all time slots $\tau \in 
\Upsilon$, with probability greater than $1-\theta$, LQ-CSMA chooses a schedule 
$M(\tau) \in \Matching$ that satisfies
\begin{equation}
\sum_{l \in \Edge} g_l(\hat{Q}_l(\tau)) \cdot M_l(\tau) \ge (1-\gamma) \max_{M \in \Matching}
\sum_{l \in \Edge} g_l(\hat{Q}_l(\tau)) \cdot M_l.
\end{equation}
Hence, similar as in Chapter 4 of \cite{shah04}, from condition (\ref{eq:gcond}),
with probability greater than $1-\theta$, the fluid limit $\hat{\pi}(t)$ under 
LQ-CSMA satisfies
\begin{equation}
\begin{split}
\sum_{l \in \Edge} g_l(\hat{q}_l(t)) \cdot \hat{\pi}_l(t) &\ge (1-\gamma) \max_{\phi^{\prime} 
\in Co(\Matching)} \sum_{l \in \Edge} g_l(\hat{q}_l(t)) \cdot \phi^{\prime}_l \\
&= \max_{\phi \in (1-\gamma) Co(\Matching)} \sum_{l \in \Edge} g_l(\hat{q}_l(t)) \cdot \phi_l.
\end{split}
\end{equation}

Therefore, $\hat{V}(\hat{q}(t)) \ge \zeta_1$ implies $\frac{D^+}{dt^+} \hat{V}(\hat{q}(t)) 
\le -\zeta_2$. This completes the proof.

\bibliographystyle{IEEEtran}
\bibliography{jibo}

\end{document}